\pgfplotsset{compat=1.15}
\newcommand*{\mailto}[1]{\href{mailto:#1}{\nolinkurl{#1}}}
\newcommand{\arxiv}[1]{\href{http://arxiv.org/abs/#1}{arXiv:#1}}
\newcommand{\R}{{\mathbb R}}
\newcommand{\N}{{\mathbb N}}
\newcommand{\Z}{{\mathbb Z}}
\newcommand{\C}{{\mathbb C}}
\newcommand{\bbC}{{\mathbb{C}}}
\newcommand{\bbN}{{\mathbb{N}}}
\newcommand{\bbR}{{\mathbb{R}}}
\newcommand{\cA}{{\mathcal A}}
\newcommand{\cB}{{\mathcal B}}
\newcommand{\cD}{{\mathcal D}}
\newcommand{\cF}{{\mathcal F}}
\newcommand{\cH}{{\mathcal H}}
\newcommand{\cZ}{{\mathcal Z}}
\newcommand{\beq}{\begin{align}}
\newcommand{\enq}{\end{align}}
\renewcommand{\a}{\alpha}
\renewcommand{\b}{\beta}
\newcommand{\g}{\gamma}
\renewcommand{\d}{\delta}
\newcommand{\z}{\zeta}
\newcommand{\cAB}{c}
\DeclareMathOperator{\supp}{supp}
\DeclareMathOperator{\dom}{dom}
\newcommand{\Res}{\text{\rm Res}}
\renewcommand{\Re}{\text{\rm Re}}
\renewcommand{\Im}{\text{\rm Im}}
\renewcommand{\ln}{\text{\rm ln}}
\newcommand{\no}{\notag}
\newcommand{\lb}{\label}
\newcommand{\f}{\frac}
\newcommand{\ol}{\overline}
\newcommand{\bs}{\backslash}
\newcommand{\wti}{\widetilde}
\newcommand{\Oh}{O}
\newcommand{\oh}{o}
\newcommand{\hatt}{\widehat} 
\newcommand{\dott}{\,\cdot\,}
\renewcommand{\dot}{\overset{\textbf{\large.}}}
\newcommand{\bi}{\bibitem}
\let\geq\geqslant
\let\leq\leqslant
\newcommand{\lam}{\lambda}
\newcommand{\al}{\a}
\newcommand{\be}{\b}
\newcommand{\Lr}{{L^2((a,b);rdx)}} 
\newcommand{\ACl}{{AC_{loc}((a,b))}}
\newcommand{\Ll}{{L^1_{loc}((a,b);dx)}}
\def\theequation{\@arabic\c@equation}
\numberwithin{equation}{section}
\newtheorem{theorem}{Theorem}[section]
\newtheorem{proposition}[theorem]{Proposition}
\newtheorem{lemma}[theorem]{Lemma}
\newtheorem{definition}[theorem]{Definition}
\newtheorem{hypothesis}[theorem]{Hypothesis}
\newtheorem{example}[theorem]{Example}
\newtheorem{assumption}[theorem]{Assumption}
\theoremstyle{remark}
\newtheorem{remark}[theorem]{Remark}
\begin{document}

\title{The spectral $\zeta$-function for quasi-regular Sturm--Liouville operators} 

\author[G.\ Fucci]{Guglielmo Fucci}
\address{Department of Mathematics, 
East Carolina University, 331 Austin Building, East Fifth St.,
Greenville, NC 27858-4353, USA}
\email{\mailto{fuccig@ecu.edu}}
\urladdr{\url{http://myweb.ecu.edu/fuccig/}}

\author[M. Piorkowski]{Mateusz Piorkowski}
\address{Department of Mathematics, Royal Institute of Technology (KTH) \\
Lindstedtsv\"agen 25, 100 44 Stockholm, Sweden}
\email{\href{mailto:mathpiorkowski@gmail.com}{mathpiorkowski@gmail.com}}
\urladdr{\url{https://sites.google.com/view/mateuszpiorkowski/home?pli=1}}

\author[J.\ Stanfill]{Jonathan Stanfill}
\address{Department of Mathematics, The Ohio State University \\
100 Math Tower, 231 West 18th Avenue, Columbus, OH 43210, USA}
\email{\mailto{stanfill.13@osu.edu}}
\urladdr{\url{https://u.osu.edu/stanfill-13/}}


\date{\today}
\subjclass[2020]{Primary: 47A10, 47B10, 47G10. Secondary: 34B27, 34L40.}
\keywords{$\zeta$-function, (quasi-regular) Sturm--Liouville operators, traces, zeta regularized functional determinants, (generalized) Bessel operators, Legendre differential equation.}

\begin{abstract}

In this work we analyze the spectral $\zeta$-function associated with the self-adjoint extensions, $T_{A,B}$, of quasi-regular Sturm--Liouville operators that are bounded from below. By utilizing the Green's function formalism, we find the characteristic function which implicitly provides the eigenvalues associated with a given self-adjoint extension $T_{A,B}$. The characteristic function is then employed to construct a contour integral representation for the spectral $\zeta$-function of $T_{A,B}$. By assuming a general form for the asymptotic expansion of the characteristic function, we describe the analytic continuation of the $\zeta$-function to a larger region of the complex plane. We also present a method for computing the value of the spectral $\zeta$-function of $T_{A,B}$ at all positive integers. 
We provide two examples to illustrate the methods developed in the paper: the generalized Bessel and Legendre operators.
We show that in the case of the generalized Bessel operator, the spectral $\zeta$-function develops a branch point at the origin, while 
in the case of the Legendre operator it presents, more remarkably, branch points at every nonpositive integer value of $s$.

\end{abstract}

\maketitle




\section{Introduction} 

The spectral $\zeta$-function is an invaluable tool both in the field of mathematics and mathematical physics. For instance, important information regarding a physical system in the ambit of quantum field theory can be extracted from the spectral $\zeta$-function of a suitable operator in a region to the left of its abscissa of convergence \cite{Em12,Em94,Ki02}. Additionally, the functional determinant of a self-adjoint operator representing the Hamiltonian $H$ of a quantum field is used to evaluate its one-loop effective action (see e.g. \cite{RS71,Toms07}). Furthermore, the analysis of the spectral $\zeta$-function at the point $s=-1/2$ is utilized to study the Casimir energy of a quantum field. In this setting, different self-adjoint extensions of the Hamiltonian describe the dynamics of the quantum field constrained to satisfy different types of boundary conditions (see e.g. \cite{BVW88,BKMM09}). Lastly, the residues at the simple poles of the spectral $\zeta$-function
of a given self-adjoint Hamiltonian $H$ represent the coefficients of the small-$t$ asymptotic expansion of the trace of the heat kernel associated to $H$. These coefficients are of particular interest because they describe the ultraviolet divergences that are present in a quantum system. Once these divergences are identified, they can be used to regularize the one-loop effective action \cite{BD82,Dwk84,DK78,Vas03}.
These are some of the most important applications of the spectral $\zeta$-function in the ambit of physics and more have been described in detail in \cite{Em12,Ki02}
 
The aim of this work is to extend the analysis of the spectral $\zeta$-function, which was developed for certain regular Sturm--Liouville operators under additional smoothness assumptions in \cite{FGKS21, FGK}, to quasi-regular Sturm--Liouville operators that are bounded from below. It is important to point out that while the spectral $\zeta$-function has been considered in the literature for specific singular operators (see e.g. \cite{FLP,Ki06,Ki08}), we provide here a systematic study of the $\zeta$-function of general quasi-regular Sturm--Liouville operators that are bounded from below, illustrating what is true in general and when additional assumptions are needed. 
The analysis of the spectral $\zeta$-function in the general quasi-regular setting is quite interesting since its structure, 
as a function in the complex plane, is expected to differ from the one obtained in the $N$-smooth regular case (see Definition \ref{defNsmooth}). 

To elaborate on this point, we briefly describe the process of analytic continuation of the spectral $\zeta$-function for the $N$-smooth regular case (details of which can be found in \cite{FGKS21, FGK}). The spectral $\zeta$-function associated with a given self-adjoint extension of a regular Sturm--Liouville operator is
represented as a contour integral of an expression containing a function whose zeroes implicitly provide the eigenvalues (the characteristic function). By construction, this representation is valid only on a finite strip of the complex plane. In order to extend the $\zeta$-function
to the left of this strip of convergence one subtracts, and then adds, to the integrand of the representation a number of terms 
of the asymptotic expansion of the characteristic function for large values of the independent variable. 
As a result of this process, the more terms of the asymptotic expansion that are subtracted and then added, the more the abscissa of convergence moves to the left. The specific form of the asymptotic expansion plays a crucial role in this extension process since it develops the structure of the 
$\zeta$-function. In the $N$-smooth regular case, the asymptotic expansion of the characteristic function contains an exponential, as a leading term, and 
inverse powers of the independent variable up to order depending on $N$. When subtracted and then added to the integrand, these terms give rise to simple poles in the spectral $\zeta$-function. (We show how this procedure can be extended to certain quasi-regular problems in Section \ref{cont}.) It is reasonable to expect, then, that any change to the form of the asymptotic expansion of the characteristic function
translates to a different structure of the spectral $\zeta$-function compared to the {\it standard} one obtained in the $N$-smooth regular case.
This is indeed what occurs in the general case and gives rise to a spectral $\zeta$-function which could have poles of higher order or branch points (as we will illustrate in Section \ref{examples} with an example having branch points at every nonnegative integer). It is interesting to point out that while it might appear that the nonintegrable nature of the endpoints can cause the spectral $\z$-function to develop singularities other than simple poles, according to the regularization given in Theorem \ref{regularizing} every quasi-regular problem can be transformed to an equivalent regular one. This means that the unusual behavior of the spectral $\z$-function can in fact occur for regular Sturm--Liouville problems as well! 
Spectral $\zeta$-functions presenting unusual properties have been studied very little in the literature and one of the objectives of this paper is to provide a class of problems for which such non-standard $\zeta$-functions are very likely to occur.
\medskip

The outline of this work is as follows. In the next section, we identify the self-adjoint extensions of quasi-regular Sturm--Liouville operators in terms of generalized boundary conditions. In Section \ref{zetafunction}, we provide an integral representation of the spectral $\zeta$-function
in terms of the trace of the resolvent of a given self-adjoint extension. The explicit expression for the trace of the resolvent is obtained 
by utilizing the Green's function formalism. This approach is different than the one provided in \cite{FGKS21,GK19}, which was based on the Fredholm determinant, and should be appealing to a wider audience. In Section \ref{cont}, we describe the method of analytic continuation 
of the spectral $\zeta$-function, while Section \ref{examples} illustrates our results through two examples: the generalized Bessel operator on 
$(0,b)\subset\R$ and the Legendre operator on $(-1,1)$. The last section contains some final remarks and Appendix \ref{appendix} provides the reader with 
a very brief summary of the basic notions of singular Weyl--Titchmarsh--Kodaira theory needed.   

We summarize, here, some of the notation used in this manuscript. If $A$ is a linear operator mapping (a subspace of) a Hilbert space into another, then $\dom(A)$ denotes the domain of $A$. The spectrum, point spectrum, and resolvent set of a closed linear operator in a separable complex Hilbert space, $\cH$, will be denoted by $\sigma(\dott),\ \sigma_p(\dott),$ and $\rho(\dott)$ respectively. If $S$ is self-adjoint in $\cH$, the multiplicity of an eigenvalue $z_0\in\sigma_p(S)$ is denoted $m(z_0;S)$ (the geometric and algebraic multiplicities of $S$ coincide in this case).
For consistency of notation, throughout this manuscript we will follow the conventional notion that derivatives annotated with superscripts are understood as with respect to $x$ and derivatives with respect to $z$ will be abbreviated by 
$\dot\ =d/d z$. We also employ the notation $\N_{0}=\N\cup\{0\}$.

\section{Self-adjoint quasi-regular Sturm--Liouville operators} \label{s2}
We begin our analysis by stating the following basic assumptions of this work:  
\begin{hypothesis} \label{h1}
Let $(a,b) \subseteq \bbR$ and suppose that $p,q,r$ are $($Lebesgue\,$)$ measurable functions on $(a,b)$ 
such that the following items $(i)$--$(iii)$ hold: \\[1mm] 
$(i)$ \hspace*{1.1mm} $r>0$ a.e.~on $(a,b)$, $r\in\Ll$. \\[1mm] 
$(ii)$ \hspace*{.1mm} $p>0$ a.e.~on $(a,b)$, $1/p \in\Ll$. \\[1mm] 
$(iii)$ $q$ is real-valued a.e.~on $(a,b)$, $q\in\Ll$. 
\end{hypothesis}
These assumptions represent the standard foundations over which the theory of singular Sturm--Liouville operators is developed (see e.g. \cite[Ch.~9]{Ze05}).
The functions introduced in the Hypothesis \ref{h1} are used to construct the following, general three-coefficient differential expression $\tau$:
\begin{align}\lb{2.1}
\tau=\f{1}{r(x)}\left[-\f{d}{dx}p(x)\f{d}{dx} + q(x)\right] \, \text{ for a.e.~$x\in(a,b) \subseteq \R$.} 
\end{align} 
A Sturm--Liouville operator is defined as the differential expression \eqref{2.1} together with a suitable domain in $\Lr$ over which $\tau$ acts.
Within the framework of self-adjoint extensions, the minimal and maximal operators play an important role.
By denoting with 
\begin{equation}
y^{[1]}(x) = p(x) y'(x), \quad x \in (a,b),
\end{equation}
the first quasi-derivative of a function $y\in AC_{loc}((a,b))$, the maximal and minimal operators are defined as follows.
\begin{definition} \lb{def1}
Assume Hypothesis \ref{h1}. Given $\tau$ as in \eqref{2.1}, then the \textit{maximal operator} $T_{max}$ in $\Lr$ associated with $\tau$ is defined by
\begin{align}
&T_{max} f = \tau f,    
\\
& f \in \dom(T_{max})=\big\{g\in\Lr \, \big| \,g,g^{[1]}\in\ACl;    
\tau g\in\Lr\big\},  \notag
\end{align}
The \textit{preminimal operator} $T_{min,0} $ in $\Lr$ associated with $\tau$ is defined by 
\begin{align}
&T_{min,0}  f = \tau f,   \notag
\\
&f \in \dom (T_{min,0})=\big\{g\in\Lr \, \big| \, g,g^{[1]}\in\ACl;   
\\
&\hspace*{3.25cm} \supp \, (g)\subset(a,b) \text{ is compact; } \tau g\in\Lr\big\}.   \notag
\end{align}
One can prove that $T_{min,0} $ is closable and we simply define the minimal operator, $T_{min}$, 
to be the closure of $T_{min,0}$.
\end{definition}
Since $(T_{min,0})^* = T_{max}$ \cite{Ze05}, one can conclude that $T_{max}$ is closed. This observation allows us to write that   
$T_{min}=\ol{T_{min,0} }$ is given by
\begin{align}
&T_{min} f = \tau f, 
\\
&f \in \dom(T_{min})=\big\{g\in\dom(T_{max})  \, \big| \, W(h,g)(a)=0=W(h,g)(b) \, 
\text{for all } h\in\dom(T_{max}) \big\},   \no 
\end{align}
where the Wronskian, $W(f,g)(x)$, of $f$ and $g$, for $f,g\in\ACl$, is defined by
\begin{equation}
W(f,g)(x) = f(x)g^{[1]}(x) - f^{[1]}(x)g(x), \quad x \in (a,b). 
\end{equation}
Since we are working in the singular setting, the self-adjoint extensions of $T_{min}$ depend on the limit point or limit circle 
classification of the endpoints of the interval $(a,b)$. In this paper we focus on the quasi-regular case which constrains the differential expression $\tau$ to be limit circle at the endpoints (see Definition \eqref{LCLP}). Since the most relevant systems that one encounters in physical applications are described by Hamiltonians that are bounded from below, in addition to the typical integrability hypothesis we will, for now on, work under the following assumption:
\begin{hypothesis}\label{h2}
The Sturm--Liouville differential expression $\tau$ satisfies Hypothesis \ref{h1}, is quasi-regular on $(a,b)$, and its minimal operator $T_{min}$ is bounded from below. 
\end{hypothesis}
We would like to point out that the addition of a suitable {\it mass} parameter $m>\lambda_{0}$ to $T_{min}$ (which represents a redefinition of the lowest energy state) leads to a positive operator, 
$(u,T_{min} u)_{L^2((a,b);rdx)}> 0,\; u \in \dom(T_{min})$. 

According to the theory of singular Sturm--Liouville operators, briefly outlined in Appendix \ref{appendix}, the self-adjoint extensions of $T_{min}$ are determined by Theorem \ref{extensions} with the generalized boundary values provided by Theorem \ref{At3}. We can summarize these results as follows:
\begin{theorem} \label{t}  
Assume Hypothesis \ref{h2}. Then every self-adjoint extension of $T_{min}$ belongs to either of the following two classes: \\
$(i)$ Separated boundary conditions are the self-adjoint extensions $T_{\al,\be}$ of $T_{min}$ of the form
\begin{align}
& T_{\al,\be} f = \tau f, \quad \al,\be\in[0,\pi),   \notag\\
& f \in \dom(T_{\al,\be})=\big\{g\in\dom(T_{max}) \, \big| \, \wti g(a)\cos(\al)+ {\wti g}^{\, \prime}(a)\sin(\al)=0;   \label{2.12text} \\ 
& \hspace*{5.5cm} \, \wti g(b)\cos(\be)- {\wti g}^{\, \prime}(b)\sin(\be) = 0 \big\}.    \notag
\end{align}
$(ii)$ Coupled boundary conditions are the self-adjoint extensions $T_{\varphi,R}$ of $T_{min}$ of the form
\begin{align}
\begin{split}\label{teq0} 
& T_{\varphi,R} f = \tau f,    \\
& f \in \dom(T_{\varphi,R})=\bigg\{g\in\dom(T_{max}) \, \bigg| \begin{pmatrix} \wti g(b)\\ {\wti g}^{\, \prime}(b)\end{pmatrix} 
= e^{i\varphi}R \begin{pmatrix}
\wti g(a)\\ {\wti g}^{\, \prime}(a)\end{pmatrix} \bigg\},
\end{split}
\end{align}
where $\varphi\in[0,\pi)$, and $R \in SL(2,\bbR)$ .  
\end{theorem}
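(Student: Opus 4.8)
The plan is to obtain the stated dichotomy as a specialization of the abstract parametrization of self-adjoint extensions of a symmetric operator with equal deficiency indices, phrased through the generalized boundary values of Theorem~\ref{At3}. Since $\tau$ is quasi-regular, it is in the limit-circle case at both $a$ and $b$, so the deficiency indices of $T_{min}$ equal $(2,2)$ and, by Theorem~\ref{extensions}, the self-adjoint extensions of $T_{min}$ form a four-real-parameter family; the task is only to present this family in the two explicit shapes \eqref{2.12text} and \eqref{teq0}. Here Hypothesis~\ref{h2} enters through boundedness below: it lets us fix a real reference energy $\la_0$ strictly below $\sigma(T_{min})$ and a real fundamental system of $\tau u=\la_0 u$ whose limit-circle behaviour at each endpoint defines the real-linear generalized boundary values $\wti g(a),\,{\wti g}^{\,\prime}(a)$ and $\wti g(b),\,{\wti g}^{\,\prime}(b)$ of Theorem~\ref{At3}.

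The key computational step is to record the Lagrange (Green's) identity in terms of these boundary values. By Theorem~\ref{At3}, for $f,g\in\dom(T_{max})$ the Wronskians $W(\ol f,g)(a)$ and $W(\ol f,g)(b)$ exist as finite limits and equal the endpoint sesquilinear forms $\ol{\wti f(a)}\,{\wti g}^{\,\prime}(a)-\ol{{\wti f}^{\,\prime}(a)}\,\wti g(a)$, and likewise at $b$. Therefore
\[
(T_{max}f,g)_{\Lr}-(f,T_{max}g)_{\Lr}
=\Big[\ol{\wti f(b)}\,{\wti g}^{\,\prime}(b)-\ol{{\wti f}^{\,\prime}(b)}\,\wti g(b)\Big]
-\Big[\ol{\wti f(a)}\,{\wti g}^{\,\prime}(a)-\ol{{\wti f}^{\,\prime}(a)}\,\wti g(a)\Big].
\]
By the Glazman--Krein--Naimark description, the self-adjoint extensions of $T_{min}$ are exactly the restrictions of $T_{max}$ to the two-dimensional subspaces of the boundary-value space $\C^4=\C^2_a\oplus\C^2_b$ (the factors carrying the $a$- and $b$-data) on which this form vanishes identically and which are maximal with that property; that is, to the Lagrangian subspaces of the symplectic form on $\C^4$ given as the difference of the two endpoint forms above.

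It then remains to classify these Lagrangian planes, which is pure linear algebra. A two-dimensional Lagrangian subspace $L\subset\C^4$ either projects injectively (hence isomorphically) onto the $a$-factor $\C^2_a$, or the projection has nontrivial kernel, i.e.\ $L\cap\C^2_b\neq\{0\}$. In the first case $L$ is the graph of an invertible map $\C^2_a\to\C^2_b$; isotropy with respect to the difference of the endpoint forms forces this map to preserve the standard symplectic form, hence to have determinant one, while the overall phase $e^{i\vp}$ is unconstrained and the range $\vp\in[0,\pi)$ removes the redundancy $(\vp,R)\sim(\vp+\pi,-R)$. Writing the map as $e^{i\vp}R$ with $R\in SL(2,\R)$ yields exactly the coupled family \eqref{teq0}. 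In the second case isotropy decouples the endpoints, so $L=(L\cap\C^2_a)\oplus(L\cap\C^2_b)$ with each summand a one-dimensional isotropic subspace, i.e.\ a real line; after normalization these are $\cos(\al)\wti g(a)+\sin(\al){\wti g}^{\,\prime}(a)=0$ at $a$ and $\cos(\be)\wti g(b)-\sin(\be){\wti g}^{\,\prime}(b)=0$ at $b$, with $\al,\be\in[0,\pi)$, which is the separated family \eqref{2.12text}. A direct check that the listed ranges of $\al,\be,\vp$ and of $R$ realize each extension once shows the two classes are exhaustive and that \eqref{2.12text} is the degenerate stratum of \eqref{teq0}.

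I expect the genuine difficulty to lie not in this final classification but in the analytic input subsumed in Theorems~\ref{extensions} and~\ref{At3}: namely, that under the merely local-integrability Hypothesis~\ref{h1} together with the limit-circle property the generalized boundary values are well defined (the relevant Wronskian limits at the singular endpoints exist) and reproduce the boundary form exactly, with boundedness below securing a real reference system so that the coupling parameter is genuinely $SL(2,\R)$-valued. Granting those appendix results, the theorem follows from the symplectic classification above.
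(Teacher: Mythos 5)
Your proposal is correct, but it is worth noting that it proves more than the paper does, along a partly different route. In the paper, Theorem \ref{t} carries no proof of its own: it is explicitly presented as a summary, obtained by taking the abstract classification of Theorem \ref{extensions} in Appendix \ref{appendix} (separated versus coupled conditions phrased in the generalized boundary values $\wti g_1,\wti g_2$ of Definition \ref{defigeneral}) and substituting the identifications $\wti g_1=\wti g$, $\wti g_2=\wti g^{\,\prime}$ supplied by Theorem \ref{At3}; Hypothesis \ref{h2} enters only through the latter, since boundedness from below furnishes the principal/nonprincipal solutions at some $\lambda_0\in\R$ that define $\wti g,\wti g^{\,\prime}$. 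You use Theorem \ref{At3} in exactly the same way, but instead of importing the separated/coupled dichotomy wholesale from Theorem \ref{extensions}, you re-derive it: you express the Lagrange boundary form in the generalized boundary values (a step you attribute to Theorem \ref{At3}, though strictly it also requires the Pl\"ucker-type Wronskian expansion together with the realness of the reference solutions --- a routine computation), invoke the Glazman--Krein--Naimark correspondence between self-adjoint extensions and Lagrangian planes in $\C^4$, and classify those planes according to whether they are graphs over the $a$-data (coupled case) or split into one isotropic line at each endpoint (separated case). What your route buys is a self-contained explanation of why the parametrization is exhaustive and non-redundant --- in particular why the coupling matrix is $e^{i\varphi}R$ with $R\in SL(2,\R)$, $\varphi\in[0,\pi)$, modulo $(\varphi,R)\sim(\varphi+\pi,-R)$, and why the two classes are disjoint --- all of which the paper delegates to the literature behind Theorem \ref{extensions}. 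One point of precision: isotropy of the graph does not directly force ``determinant one''; it forces the complex coupling map $T$ to preserve the sesquilinear skew-Hermitian endpoint form, which only yields $|\det T|=1$, and the assertion that every such $T$ factors as a unimodular phase times a real matrix of determinant one is exactly the group-theoretic fact that must be checked; your stated conclusion is nevertheless the correct one.
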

The analysis of the spectral $\zeta$-function associated with a quasi-regular Sturm--Liouville operator represents a natural first step towards the treatment of 
spectral functions for more general singular Sturm--Liouville problems. This is due to the fact that quasi-regular problems can be 
{\it regularized} by using regularizing functions as detailed in \cite{NZ92} and \cite[Ch.~8]{Ze05}. In particular, many results that can be proved in 
the regular setting can be extended to the quasi-regular case as well. 
The following theorem combines the results proved in \cite[Thms. 8.2.1, 8.3.1, 8.3.2, 10.6.5, and Rem. 10.6.2]{Ze05} and represents an 
important result that details the link between regular and quasi-regular problems.

\begin{theorem} \label{regularizing}  
Assume Hypotheses \ref{h2}. Let $\hatt u(\lambda_0,\dott)\in\dom(T_{max})$ be positive on $(a,b)$ and a nonprincipal solution at $x=a,b$, but not necessarily a solution through the interior of $(a,b)$. Define
\begin{align}\label{regular0}
\begin{split}
P(x)=[\hatt u(\lambda_0,x)]^2 p(x),\quad R(x)=[\hatt u(\lambda_0,x)]^2 r(x),\quad Q(x)=r(x) \hatt u(\lambda_0,x) (\tau \hatt u)(\lambda_0,x).
\end{split}
\end{align}
Then the quasi-regular problem $\tau y(z,x)=zy(z,x)$ can be transformed into 
\begin{align}\label{regular1}
-(P(x)v'(z,x))'+Q(x)v(z,x)=z R(x) v(z,x),
\end{align}  
for which $1/P, Q, R\in L^1((a,b);dx)$.
\end{theorem}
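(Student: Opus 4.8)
The plan is to realize the passage from $\tau y(z,\dott)=z\,y(z,\dott)$ to \eqref{regular1} as a Liouville-type change of dependent variable, namely $y(z,\dott)=\hatt u(\lambda_0,\dott)\,v(z,\dott)$, and then to verify the three integrability claims one at a time. Since $\hatt u(\lambda_0,\dott)$ is assumed positive on $(a,b)$, it never vanishes, so the substitution is invertible on the open interval; moreover $\hatt u(\lambda_0,\dott)\in\dom(T_{max})$ guarantees $\hatt u,\hatt u^{[1]}\in\ACl$, which justifies all the quasi-derivative manipulations a.e.\ on $(a,b)$.

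First I would establish the pointwise intertwining identity
\begin{equation}
\hatt u(\lambda_0,x)\,r(x)\,\big(\tau(\hatt u v)\big)(x)=-\big(P(x)v'(x)\big)'+Q(x)v(x),
\end{equation}
which is the analytic heart of the statement, with $P,Q,R$ as in \eqref{regular0}. The cleanest route is to keep everything in terms of quasi-derivatives, so that no differentiability of $p$ is needed. Writing $(\hatt u v)^{[1]}=p(\hatt u v)'=\hatt u^{[1]}v+\hatt u^{-1}(Pv')$ and differentiating, the two middle terms cancel because $\hatt u^{[1]}v'=\hatt u^{-2}\hatt u'(Pv')$; this cancellation is the key simplification and it leaves
\begin{equation}
\big((\hatt u v)^{[1]}\big)'=\big(\hatt u^{[1]}\big)'v+\hatt u^{-1}(Pv')'.
\end{equation}
Substituting into $r\,\tau(\hatt u v)=-\big((\hatt u v)^{[1]}\big)'+q\,\hatt u v$, regrouping the $v$-terms as $\big[-(\hatt u^{[1]})'+q\hatt u\big]v=(r\,\tau\hatt u)v$, and multiplying through by $\hatt u$ produces the identity once one recognizes $\hatt u(r\,\tau\hatt u)=Q$ and $\hatt u^2p=P$. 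Multiplying $\tau y=zy$ (i.e.\ $r\,\tau y=z\,r\,y$) by $\hatt u$ then turns the right-hand side into $z\,\hatt u^2 r\,v=z\,R\,v$, which is precisely \eqref{regular1}.

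It then remains to check $1/P,Q,R\in L^1((a,b);dx)$. Because $\hatt u(\lambda_0,\dott)\in\dom(T_{max})\subset\Lr$, the bound $\int_a^b \hatt u^2 r\,dx<\infty$ gives $R=\hatt u^2 r\in L^1((a,b);dx)$ immediately; likewise $\tau\hatt u\in\Lr$, so the Cauchy--Schwarz inequality in $\Lr$ yields $\int_a^b|Q|\,dx=\int_a^b|\hatt u|\,|\tau\hatt u|\,r\,dx\le \norm{\hatt u}_{\Lr}\norm{\tau\hatt u}_{\Lr}<\infty$, i.e.\ $Q\in L^1((a,b);dx)$.

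The one genuinely delicate point, and the main obstacle, is the integrability of $1/P=1/(p\,\hatt u^2)$, which is exactly where the nonprincipal hypothesis at $x=a,b$ enters. By definition, a nonprincipal solution at $x=a$ (resp.\ $x=b$) is characterized by the convergence $\int_a^c \tfrac{dx}{p\,\hatt u^2}<\infty$ (resp.\ $\int_c^b\tfrac{dx}{p\,\hatt u^2}<\infty$) for an interior point $c$, which is precisely the reciprocal-integrability controlling $1/P$ near the two singular endpoints; this is why the hypothesis is phrased "nonprincipal at $a,b$" rather than "a solution on all of $(a,b)$." On a compact subinterval $[c_1,c_2]\subset(a,b)$, continuity and positivity of $\hatt u$ give $\hatt u\ge\delta>0$, whence $\int_{c_1}^{c_2}\tfrac{dx}{p\,\hatt u^2}\le\delta^{-2}\int_{c_1}^{c_2}\tfrac{dx}{p}<\infty$ since $1/p\in\Ll$. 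Splitting $(a,b)$ into a neighborhood of $a$, a compact middle piece, and a neighborhood of $b$ then delivers $1/P\in L^1((a,b);dx)$. I expect the interior-versus-endpoint bookkeeping, together with the precise use of the nonprincipal characterization, to be the only subtle ingredient; the transformation identity itself is a direct, if slightly lengthy, quasi-derivative computation.
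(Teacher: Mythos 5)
Your proof is correct, but it is worth noting that the paper itself does not prove Theorem \ref{regularizing} at all: it imports the statement wholesale from Zettl's monograph (the citation to \cite[Thms.\ 8.2.1, 8.3.1, 8.3.2, 10.6.5, and Rem.\ 10.6.2]{Ze05}). What you have written is therefore a self-contained reconstruction of the argument the paper outsources, and it is sound on all three counts. The quasi-derivative computation is the right way to avoid any differentiability assumption on $p$: the identity $(\hatt u v)^{[1]}=\hatt u^{[1]}v+\hatt u^{-1}(Pv')$, the cancellation $\hatt u^{[1]}v'=\hatt u^{-2}\hatt u'(Pv')$, and the regrouping $\big[-(\hatt u^{[1]})'+q\hatt u\big]v=(r\tau\hatt u)v$ all check out, and they correctly use only $\hatt u,\hatt u^{[1]}\in\ACl$ and positivity of $\hatt u$ — in particular nowhere do you need $\hatt u$ to solve $\tau u=\lambda_0 u$ in the interior, which is exactly why the theorem can be stated for a maximal domain function that is merely nonprincipal near the endpoints. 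The integrability of $R$ and $Q$ from $\dom(T_{max})\subset\Lr$ plus Cauchy--Schwarz is the efficient argument, and the splitting of $(a,b)$ for $1/P$ correctly isolates where each hypothesis enters (nonprincipality at the endpoints, $1/p\in\Ll$ together with $\min_{[c_1,c_2]}\hatt u>0$ in the middle). One small point of bookkeeping: within this paper's framework, nonprincipality is \emph{defined} as linear independence from the principal solution, and the convergence $\int_a^c dx\,p^{-1}\hatt u^{-2}<\infty$ is the characterization recorded in \eqref{A5} (extended to arbitrary nonprincipal solutions via $u_a(\lambda_0,x)=\oh(\hatt u_a(\lambda_0,x))$ as $x\downarrow a$, so that any nonprincipal solution is comparable to $\hatt u_a$ near $a$); strictly speaking you should invoke that equivalence rather than call the integral condition the definition, but this is cosmetic and does not affect the proof.
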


This theorem simply states that by using a positive $\hatt u(\lambda_0,\dott)\in\dom(T_{max})$ which is nonprincipal at both endpoints, also named {\it regularizing function} \cite[Sect.~8.2]{Ze05}, one can 
transform a quasi-regular problem $\tau y(z,x)=zy(z,x)$ into an 
equivalent one \eqref{regular1} which is regular on $(a,b)$. In addition, one can prove the following auxiliary result \cite[Ch.~8]{Ze05}:
\begin{lemma}\label{lemma1}
Under the hypotheses of Theorem \ref{regularizing}, if $y(z,\dott)$ is a solution of the quasi-regular problem $\tau y(z,x)=zy(z,x)$ on $(a,b)$, then $v(z,\dott)=y(z,\dott)/\hatt u(\lambda_0,\dott)$
is a solution of the associated regular problem \eqref{regular1} on $(a,b)$. Similarly, if $v(z,\dott)$ is a solution of the regular problem \eqref{regular1} on $(a,b)$ then
$y(z,\dott)=\hatt u(\lambda_0,\dott)v(z,\dott)$
is a  solution of the quasi-regular problem $\tau y(z,x)=zy(z,x)$ on $(a,b)$. 
\end{lemma}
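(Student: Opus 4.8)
The plan is to verify by direct substitution that the change of dependent variable $y = \hatt u(\lambda_0,\dott)\,v$ intertwines the two differential expressions, and then deduce the equivalence of solutions by dividing through by a strictly positive function. Writing $u=\hatt u(\lambda_0,\dott)$ for brevity, the decisive first step is to express the \emph{regular} quasi-derivative $Pv'$ of $v=y/u$ as a Wronskian. Using $P=u^2p$ and the definition $y^{[1]}=py'$, a short computation gives
\begin{equation}
Pv' = u^2 p\,\Big(\tfrac{y}{u}\Big)' = p\,(u y' - u' y) = u\,y^{[1]} - u^{[1]}\,y = W(u,y).
\end{equation}
This identity serves two purposes. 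Algebraically it will make the ensuing differentiation transparent; but it is also what secures the required regularity. Indeed, since $y$ is a solution of $\tau y=zy$ we have $y,y^{[1]}\in\ACl$, and since $u\in\dom(T_{max})$ we have $u,u^{[1]}\in\ACl$; hence $W(u,y)=Pv'\in\ACl$, while $v=y/u\in\ACl$ because $u$ is continuous and strictly positive, thus bounded away from zero on every compact subinterval. In this way $v$ already has the regularity demanded of a solution of the regular problem, and no derivative of the possibly nonsmooth coefficient $p$ is ever taken.

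Next I would differentiate the Wronskian. The cross terms cancel, since $u'y^{[1]}=p\,u'y'=u^{[1]}y'$, leaving
\begin{equation}
(Pv')' = \big(u\,y^{[1]} - u^{[1]}\,y\big)' = u\,(y^{[1]})' - (u^{[1]})'\,y.
\end{equation}
Now I invoke the almost-everywhere identities $(y^{[1]})'=q y - r\,(\tau y)$ and $(u^{[1]})'=q u - r\,(\tau u)$, which are merely the definition of $\tau$ rewritten in terms of quasi-derivatives and are legitimate because $\tau y,\tau u\in\Ll$. Substituting these, the two terms proportional to $q$ cancel, and recognizing $r\,u\,(\tau u)\,v=Qv$ from $Q=r\,u\,(\tau u)$ together with $y=uv$, one arrives at the clean intertwining relation
\begin{equation}
-(Pv')' + Q v = u\,r\,(\tau y).
\end{equation}
It is worth noting that this computation never uses that $u$ is a solution of $\tau u=\lambda_0 u$; it only uses the definitions of $P,Q,R$ and the value of $\tau u$ as it stands. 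This is precisely why the hypothesis that $u$ is ``not necessarily a solution through the interior'' causes no difficulty.

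Finally, both directions follow from this identity and the positivity of $u$. If $\tau y=zy$, then $u\,r\,(\tau y)=z\,u\,r\,y=z\,u^2 r\,v=zRv$, so $v=y/u$ solves $-(Pv')'+Qv=zRv$, and the regularity established above makes it a genuine solution. Conversely, if $v$ solves the regular problem, then $u\,r\,(\tau y)=zRv=z\,u^2 r\,v=z\,u\,r\,y$, and dividing by the strictly positive function $u\,r$ gives $\tau y=zy$ with $y=uv$; here $y\in\ACl$ and $y^{[1]}=u^{[1]}v+(Pv')/u\in\ACl$, so $y$ is admissible. I do not expect a serious obstacle: the one point requiring care is to perform every differentiation at the level of the quasi-derivatives $u^{[1]},y^{[1]},Pv'=W(u,y)$ rather than of $u',y'$, so that nothing beyond the standing integrability hypotheses on $p$ is needed. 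Once the Wronskian identity $Pv'=W(u,y)$ is in place, the remainder is a routine cancellation.
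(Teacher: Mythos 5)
Your proof is correct. A point of comparison is moot here: the paper does not actually prove this lemma but quotes it as an auxiliary result from \cite[Ch.~8]{Ze05}, so your argument supplies the verification that the paper outsources to the literature, and it does so along the standard lines of direct substitution. The two ingredients that make your write-up work --- and where a sloppier attempt would fail --- are exactly the ones you isolate: first, casting the regular quasi-derivative as a Wronskian, $Pv'=W(\hatt u(\lambda_0,\dott),y)=\hatt u\,y^{[1]}-\hatt u^{[1]}y$, so that every differentiation acts on products of $\ACl$ functions and the possibly non-smooth coefficient $p$ is never differentiated; and second, the observation that the resulting intertwining identity $-(Pv')'+Qv=\hatt u\,r\,(\tau y)$ uses only $\hatt u,\hatt u^{[1]}\in\ACl$ together with the a.e.-defined value of $\tau\hatt u$ entering $Q$, never the interior equation $\tau\hatt u=\lambda_0\hatt u$; this is indispensable because Theorem \ref{regularizing} expressly allows $\hatt u(\lambda_0,\dott)$ not to be a solution through the interior. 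Your regularity bookkeeping in both directions ($v=y/\hatt u\in\ACl$ and $Pv'\in\ACl$ one way; $y=\hatt u v\in\ACl$ and $y^{[1]}=\hatt u^{[1]}v+(Pv')/\hatt u\in\ACl$ the other way) is what legitimizes calling $v$, respectively $y$, a genuine solution, and the final division by the a.e.\ positive function $\hatt u\,r$ closes both implications.
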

The last lemma is particularly useful for expressing, in a very simple way, the generalized boundary values of a function $g\in\dom(T_{max})$ when
$T_{min}$ is bounded from below (that is when the Hypothesis \ref{h1} holds). In fact, the following lemma provides the explicit formulas:
\begin{lemma}\label{lemma2}
Assume Hypothesis \ref{h2}. Let $y(z,\dott)$ be a solution of the quasi-regular problem $\tau y(z,x)=zy(z,x)$ on $(a,b)$ and let $\hatt u(\lambda_0,\dott)$
be a regularizing function on $(a,b)$. Then,
\begin{equation}
\wti y(z,x_0)=v(z,x_0),\quad
\wti y^{\, \prime}(z,x_0)=P(x_0) v'(z,x_0),\quad x_0\in\{a,b\}.
\end{equation}
\end{lemma}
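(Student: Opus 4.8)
The plan is to derive both identities from the regularizing substitution of Lemma \ref{lemma1} together with the definition of the generalized boundary values recorded in Theorem \ref{At3}. At a limit-circle endpoint $x_0\in\{a,b\}$ these boundary values are fixed---up to the sign and normalization conventions of Theorem \ref{At3}---through the regularizing function $\hatt u(\lambda_0,\dott)$, which is nonprincipal at $x_0$, and I would use the representations
\begin{align}\label{planeq1}
\wti y(z,x_0)=\lim_{x\to x_0}\f{y(z,x)}{\hatt u(\lambda_0,x)},\qquad
\wti y^{\, \prime}(z,x_0)=\lim_{x\to x_0}W\big(\hatt u(\lambda_0,\dott),y(z,\dott)\big)(x).
\end{align}
By Lemma \ref{lemma1} the quotient $v(z,\dott)=y(z,\dott)/\hatt u(\lambda_0,\dott)$ solves the regular problem \eqref{regular1}; since the latter is regular on $(a,b)$, both $v(z,\dott)$ and its $P$-quasi-derivative $P(\dott)v'(z,\dott)$ are absolutely continuous up to the endpoints, so all limits below exist.

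For the value-type datum I would insert $v=y/\hatt u$ directly: $\wti y(z,x_0)=\lim_{x\to x_0}v(z,x)=v(z,x_0)$, using continuity of the regular solution $v$ at $x_0$. For the derivative-type datum the key is a short Wronskian computation. Writing $y=\hatt u\,v$ and using $\hatt u^{[1]}=p\,\hatt u'$, one finds $y^{[1]}=p\,y'=\hatt u^{[1]}v+p\,\hatt u\,v'$, so the cross terms telescope,
\begin{align}\label{planeq2}
W\big(\hatt u(\lambda_0,\dott),y(z,\dott)\big)(x)=\hatt u\,y^{[1]}-\hatt u^{[1]}y=p\,[\hatt u]^2\,v'(z,x)=P(x)\,v'(z,x),
\end{align}
by the definition $P=[\hatt u]^2 p$ in \eqref{regular0}. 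Passing to the limit $x\to x_0$ and invoking continuity of $P(\dott)v'(z,\dott)$ then gives $\wti y^{\, \prime}(z,x_0)=P(x_0)v'(z,x_0)$.

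The hard part will not be the computation \eqref{planeq2}, which is purely algebraic, but rather confirming that the limit and Wronskian functionals in \eqref{planeq1} are precisely the boundary functionals singled out in Theorem \ref{At3}: one must match the chosen normalization of the principal/nonprincipal solution pair at each endpoint and the sign convention distinguishing the value- and derivative-type data (reflected, for instance, in the differing signs at $a$ and $b$ in \eqref{2.12text}). Once this identification is unwound from Appendix \ref{appendix}, the existence of the limits is automatic from the regularity of \eqref{regular1}, and the two displayed computations complete the proof.
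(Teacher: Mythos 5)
Your proposal is correct, and while the first identity is handled exactly as in the paper (insert $y=\hatt u\,v$ into the limit \eqref{A6} and use continuity of the regular solution $v$), your treatment of the derivative-type datum takes a genuinely different route. The paper works from the difference-quotient representation \eqref{A7}, substitutes $y=\hatt u\,v$, and then applies L'H\^opital's rule, invoking the normalization $W(\hatt u(\lambda_0,\dott),u(\lambda_0,\dott))=1$ to identify the limit of the resulting denominator; this requires carrying the principal solution $u(\lambda_0,\dott)$ through the computation. You instead use the Wronskian form of the boundary functional from Theorem \ref{At3}, $\wti y^{\,\prime}(z,x_0)=W(\hatt u_{x_0}(\lambda_0,\dott),y(z,\dott))(x_0)$, and observe the purely algebraic identity $W(\hatt u,\hatt u v)=p\,[\hatt u]^2 v'=Pv'$, after which the conclusion follows from continuity of the quasi-derivative $P(\dott)v'(z,\dott)$ of the regular problem \eqref{regular1}. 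Your route avoids L'H\^opital and the principal solution altogether, and it makes transparent why precisely the coefficient $P=[\hatt u]^2 p$ of \eqref{regular0} appears; the paper's route stays closer to the limit formulas \eqref{A6}--\eqref{A7} as they are most often quoted. The normalization issue you flag at the end is real but is resolved exactly as in the paper: one chooses the regularizing function to coincide near each endpoint with the nonprincipal solution $\hatt u_{x_0}(\lambda_0,\dott)$ normalized by $W(\hatt u_{x_0}(\lambda_0,\dott),u_{x_0}(\lambda_0,\dott))=1$ (this is the choice \eqref{nonprinc} made in the paper's proof), so both arguments carry the same implicit convention and your proof is complete once that choice is stated.
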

\begin{proof}
Let $y(z,\dott)$ be a solution of the quasi-regular problem $\tau y(z,x)=zy(z,x)$ on $(a,b)$. Then one can write $y(z,\dott)=\hatt u(\lambda_0,\dott) v(z,\dott)$ where $v(z,\dott)$ is a solution of the associated regular problem. Choosing 
\begin{align}\label{nonprinc}
\hatt u(\lambda_0,x)=\begin{cases} \hatt u_a(\lambda_0,x), & \text{for $x$ near a}, \\
\hatt u_b(\lambda_0,x), & \text{for $x$ near b},  \end{cases}
\end{align}
(with $\hatt u(\lambda_0,x_0)=1$ near a regular endpoint $x=x_0$, i.e., in a neighborhood of $x=x_0$), we have $\wti y(z,x_0)=v(z,x_0)$ and $\wti y^{\, \prime}(z,x_0)=P(x_0) v'(z,x_0)$ for $x_0\in\{a,b\}$. The first expression is clear from \eqref{A6} and the second follows from \eqref{A7}. In fact, by denoting with $u(\lambda_0,x)$ the corresponding principal solution, one has
\begin{align}
\begin{split}
\wti y^{\, \prime}(z,x_0)&=\lim_{x\to x_0}\dfrac{\hatt u(\lambda_0,x) v(z,x)-\hatt u(\lambda_0,x) v(z,x_0)}{u(\lambda_0,x)}=\lim_{x\to x_0}\dfrac{v(z,x)-v(z,x_0)}{u(\lambda_0,x)/\hatt u(\lambda_0,x) } \\
&=\lim_{x\to x_0}\dfrac{v'(z,x)}{[\hatt u(\lambda_0,x)u'(\lambda_0,x)-\hatt u'(\lambda_0,x) u(\lambda_0,x) ]/[\hatt u(\lambda_0,x)]^2 } \\
&=\lim_{x\to x_0}\dfrac{[\hatt u(\lambda_0,x)]^2p(x) v'(z,x)}{p(x)[\hatt u(\lambda_0,x)u'(\lambda_0,x)-\hatt u'(\lambda_0,x) u(\lambda_0,x) ]}=P(x_0) v'(z,x_0), \quad x_0\in\{a,b\},
\end{split}
\end{align}
where we have used L'H\^opital's rule and the fact that $W(\hatt u(\lambda_0,\dott), u(\lambda_0,\dott)) = 1$.
\end{proof}

\section{The spectral  \texorpdfstring{$\zeta$}{zeta}-function}\label{zetafunction}

With the self-adjoint extensions now completely determined, we can move on to the analysis of the corresponding spectral $\zeta$-function.
Let us denote the separated and coupled self-adjoint extensions of $T_{min}$ in Theorem \ref{t} collectively by $T_{A,B}$.
The existence of the spectral $\zeta$-function for quasi-regular Sturm--Liouville operators is predicated on the following result \cite{GNZ23,Ze05}:
\begin{theorem} \label{t1}  
Assume Hypothesis \ref{h2} and let $T_{A,B}$ be a self-adjoint extension of $T_{min}$ with $z\in\rho(T_{A,B})$. Then, denoting by $\cB_1(\Lr)$ the space of trace class operators in $\Lr$,
\begin{equation}
(T_{A,B}-zI)^{-1}\in \cB_{1}(\Lr),
\end{equation}
and, hence, $T_{A,B}$ has a purely discrete spectrum with eigenvalues of multiplicity at most $2$. In addition, if $\sigma(T_{A,B})=\{\lambda_{A,B,j}\}_{j\in J}$
with $J\subset \Z$ an appropriate index set where eigenvalues are counted according to their multiplicity, then
\begin{equation}
\sum_{\underset{\lambda_j\neq 0}{j\in J}} |\lambda_{A,B,j}|^{-1}<\infty.
\end{equation}
\end{theorem}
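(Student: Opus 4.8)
The plan is to reduce the quasi-regular problem to a genuinely regular one via the regularization of Theorem \ref{regularizing}, to realize the resolvent as an integral operator with a continuous Green's function, and then to use the lower semiboundedness of $T_{A,B}$ to promote the Hilbert--Schmidt property to trace class. First I would record that the multiplication operator $U\colon v\mapsto\hatt u(\lambda_0,\dott)\,v$ is unitary from $L^2((a,b);R\,dx)$ onto $\Lr$, since $\int_a^b|\hatt u(\lambda_0,x)v(x)|^2 r(x)\,dx=\int_a^b|v(x)|^2 R(x)\,dx$, and that by Lemma \ref{lemma2} it carries the generalized boundary values of $y=\hatt u\,v$ to ordinary boundary values of $v$. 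Hence $T_{A,B}$ is unitarily equivalent to a self-adjoint realization $\wti T$ of the regular expression \eqref{regular1} on $L^2((a,b);R\,dx)$, with $1/P,Q,R\in L^1((a,b);dx)$. Since both the trace class ideal and the spectrum are invariant under unitary equivalence, it suffices to prove the three assertions for $\wti T$.

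Next I would build the resolvent of $\wti T$ explicitly. For $z\in\rho(\wti T)$ one has $\big((\wti T-zI)^{-1}f\big)(x)=\int_a^b G(z,x,x')f(x')R(x')\,dx'$, where $G$ is formed from a fundamental system of solutions of \eqref{regular1} adapted to the separated or coupled boundary conditions and normalized by the (constant) Wronskian. Because the endpoints of \eqref{regular1} are regular, the relevant solutions and their quasi-derivatives extend continuously to the closed interval, so $G(z,\dott,\dott)$ is continuous and bounded; as $R\in L^1$, this makes the kernel square-integrable with respect to $R\,dx\otimes R\,dx$, whence $(\wti T-zI)^{-1}$ is Hilbert--Schmidt. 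At this point the multiplicity statement is immediate: every eigenfunction solves the second-order equation \eqref{regular1}, whose solution space is two-dimensional, so each eigenspace has dimension at most $2$ (geometric and algebraic multiplicities coinciding by self-adjointness).

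The decisive step is to upgrade Hilbert--Schmidt to trace class, and here the lower semiboundedness is indispensable: continuity of the kernel alone gives only $(\wti T-zI)^{-1}\in\cB_2(\Lr)$, i.e. $\sum_j|\lambda_{A,B,j}-z|^{-2}<\infty$, which is too weak to yield $\sum_j|\lambda_{A,B,j}|^{-1}<\infty$. Since $T_{min}$ has finite and equal deficiency indices $(2,2)$ in the quasi-regular case and is bounded from below, each self-adjoint extension $T_{A,B}$, and therefore $\wti T$, is bounded from below; choosing $\mu>0$ with $\wti T+\mu I\geq\varepsilon I>0$, the operator $(\wti T+\mu I)^{-1}$ is positive with continuous kernel $G(-\mu,\dott,\dott)$ over the finite measure space $\big((a,b),R\,dx\big)$. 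I would then invoke Mercer's theorem to conclude $(\wti T+\mu I)^{-1}\in\cB_1$ with $\Tr\big((\wti T+\mu I)^{-1}\big)=\int_a^b G(-\mu,x,x)R(x)\,dx<\infty$, and pass to arbitrary $z\in\rho(\wti T)$ through the resolvent identity, which changes $(\wti T+\mu I)^{-1}$ only by multiplication with a bounded operator.

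Finally, trace class of the resolvent gives the eigenvalue bound at once: the positive operator $(\wti T+\mu I)^{-1}$ has eigenvalues $(\lambda_{A,B,j}+\mu)^{-1}$, so $\sum_j(\lambda_{A,B,j}+\mu)^{-1}<\infty$, and since the spectrum is discrete and bounded below one has $\lambda_{A,B,j}\to+\infty$, whence $|\lambda_{A,B,j}+\mu|\sim|\lambda_{A,B,j}|$ and a comparison yields $\sum_{\lambda_j\neq0}|\lambda_{A,B,j}|^{-1}<\infty$. The step I expect to be the main obstacle is precisely the Hilbert--Schmidt-to-trace-class passage: verifying the hypotheses of Mercer's theorem — continuity and positive semidefiniteness of $G(-\mu,\dott,\dott)$, together with compactness of the underlying space after the regularizing change of variables, especially if an endpoint of $(a,b)$ is infinite — is the only genuinely delicate point. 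An alternative route through the classical Weyl asymptotics $\lambda_{A,B,j}\sim C\,j^2$ for the regular problem would deliver $\sum_j|\lambda_{A,B,j}|^{-1}<\infty$ directly, at the cost of developing oscillation theory.
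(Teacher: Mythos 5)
Your argument is essentially correct, but note first that the paper does not prove Theorem \ref{t1} at all: it quotes the result from \cite{GNZ23,Ze05}, so there is no internal proof to compare against. What you have written is a self-contained proof assembled from the paper's own toolkit --- the regularization of Theorem \ref{regularizing} together with Lemma \ref{lemma2} to pass, via the unitary map $v \mapsto \hatt u(\lambda_0,\dott)v$, to a regular problem; the Green's function to get a bounded continuous kernel and hence a Hilbert--Schmidt resolvent; Mercer's theorem to upgrade to trace class; and a comparison argument for the eigenvalue sum --- and each step is correct in substance. It also dovetails with the paper's later development, since the trace formula \eqref{10} presupposes exactly the continuous-kernel, trace class picture you establish. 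Three points need patching, none fatal. First, Lemma \ref{lemma2} is stated for solutions of $\tau y = zy$; to transport the boundary conditions you need the same identities for arbitrary $g \in \dom(T_{max})$, which follows by the identical L'H\^opital--Wronskian computation (or directly from \eqref{A6}--\eqref{A7}). Second, as you yourself flag, Mercer's theorem requires a compact parameter space, and a quasi-regular endpoint may be infinite; the clean fix is to map $(a,b)$ onto a finite interval by a change of variables (all hypotheses --- $1/P, Q, R \in L^1$, bounded continuous kernel, finite measure --- survive this), rather than hunting for a noncompact version of Mercer. Third, for coupled conditions with $\varphi \neq 0$ the kernel is Hermitian but genuinely complex, $G(-\mu,x,y) = \overline{G(-\mu,y,x)}$, so you need the complex-Hermitian form of Mercer's theorem. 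Finally, the alternative you mention in one sentence --- Weyl's asymptotics for quasi-regular problems --- is actually the shortest route and is the one the paper itself leans on later (see \eqref{WeylAsym}, quoted from \cite[Thm. 10.6.1]{Ze05}): discreteness plus $\lambda_n = Cn^2(1+o(1))$ gives $\sum_{\lambda_n\neq 0}|\lambda_n|^{-1} < \infty$ immediately, and trace class of $(T_{A,B}-zI)^{-1}$ follows because that resolvent is a normal compact operator whose singular values are $|\lambda_n - z|^{-1}$, so no Mercer argument is needed at all.
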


This result allows us to define the spectral $\zeta$-function associated with $T_{A,B}$ as   
\begin{align}\lb{2.65}
\zeta(s;T_{A,B}):=\sum_{\underset{\lambda_j\neq 0}{j\in J}} \lambda_{A,B,j}^{-s},
\end{align}
for $\Re(s)>s_{0}>0$, with $s_0$ large enough. The goal is to extend the function $\zeta(s;T_{A,B})$ to a region of the complex plane to the left of $\Re(s)=s_0$. This is a necessary process to perform
since very often valuable information can be extracted from the spectral $\zeta$-function in the extended region $\Re(s)<s_0$ (see e.g. \cite{Ki02}). 
Throughout the relevant literature, the standard procedure used to extend the spectral $\zeta$-function to a larger region of the complex $s$-plane relies on 
a contour integral representation of $\zeta(s;T_{A,B})$ in \eqref{2.65} (this is exactly what was done recently for the regular case in \cite{FGKS21}). 
Let $\gamma$ be a counterclockwise contour in the complex plane that encircles the spectrum $\sigma(T_{A,B})$ and avoids the origin. Assume that the meromorphic function $[\textrm{tr}_{L_r^{2}(a,b)}(T_{A,B}-zI)^{-1}-z^{-1}m_{0}]$ (we write $L_r^{2}(a,b)$ instead of $\Lr$ in the subscript for brevity), with $m_{0}=m(0;T_{A,B})$, is polynomially bounded over $\gamma$. By \cite[Lemma 2.6]{GK19} one has, for $\Re(s)>s_0$,
\begin{equation}\label{1}
\zeta(s;T_{A,B})=-\frac{1}{2\pi i}\int_{\gamma}dz\,z^{-s}[\textrm{tr}_{L_r^{2}(a,b)}(T_{A,B}-zI)^{-1}+z^{-1}m_{0}].
\end{equation}    
Due to the presence of $z^{-s}$ we choose, following \cite{KM03} (see also \cite{KM04}), the branch cut for the definition of the integral to be 
\begin{align}\label{1a}
    R_\psi=\{z=te^{i\psi}:t\in [0,\infty)\},\quad \psi\in (\pi/2,\pi).
\end{align}
 
Before we can proceed with the process of analytic continuation, it is necessary to find a more explicit expression for the trace of the resolvent 
of $T_{A,B}$. This can be achieved with the use of the Green's function. For $z\in\rho(T_{A,B})$ the resolvent $(T_{A,B}-zI)^{-1}$ can be expressed as \cite{DS88,GNZ23}
\begin{equation}
(T_{A,B}-zI)^{-1}f(x)=\int_{a}^{b}r(y)dy\,G_{A,B}(z,x,y)f(y),
\end{equation}
for $f\in\Lr$. The Green's function $G_{A,B}(z,x,y)$ solves the equation 
\begin{equation}\label{2}
(\tau-z)G_{A,B}(z,x,y)=\delta(x-y),
\end{equation} 
on $(a,b)$ endowed with the conditions
\begin{equation}\label{3}
G_{A,B}(z,x,y)|_{x=y^{+}}=G_{A,B}(z,x,y)|_{x=y^{-}},\quad
\frac{\partial G_{A,B}}{\partial x}(z,x,y)|_{x=y^{-}}-\frac{\partial G_{A,B}}{\partial x}(z,x,y)|_{x=y^{+}}=p^{-1}(y),
\end{equation}
and the boundary conditions that characterize the specific self-adjoint extension $T_{A,B}$ \cite{We87}. In order to compute the 
Green's function, we introduce the fundamental system of solutions $\theta(z,x,a)$, $\phi(z,x,a)$ of $\tau y(z,x)=z y(z,x)$ defined by
\begin{align}\label{4}
\wti\theta(z,a,a)=\wti\phi^{\, \prime}(z,a,a)=1,\quad \wti\theta^{\, \prime}(z,a,a)=\wti\phi(z,a,a)=0,
\end{align}
such that
\begin{align}
    W\big(\theta(z,\dott,a),\phi(z,\dott,a)\big)=1.
\end{align}
In terms of the newly introduced $\theta(z,x,a)$, $\phi(z,x,a)$, the Green's function can be written as 
\begin{equation}\label{5}
G_{A,B}(z,x,y) =
\begin{cases}
a_{A,B}(y)\phi(z,x,a)+b_{A,B}(y) \theta(z,x,a),& a<x<y<R, \\
c_{A,B}(y)\phi(z,x,a)+d_{A,B}(y) \theta(z,x,a),& a<y<x<R,
\end{cases}
\end{equation}
where the unknown terms $(a_{A,B}(y),b_{A,B}(y),c_{A,B}(y),d_{A,B}(y))$ are determined by imposing the conditions \eqref{3} and either
the boundary conditions \eqref{2.12text} for the separated case or the boundary conditions \eqref{teq0} for the coupled case.

By introducing the generalized boundary value operators for $f(z,\dott),f^{[1]}(z,\dott)\in AC_{loc}((a,b))$,
\begin{equation}\label{BCop}
U_{a}(f)(z)=\cos(\alpha)\wti f(z,a)+\sin(\alpha)\wti f^{\,\prime}(z,a),\quad
U_{b}(f)(z)=\cos(\beta)\wti f(z,b)-\sin(\beta)\wti f^{\,\prime}(z,b),
\end{equation}
for the self-adjoint extensions $T_{\alpha,\beta}$ belonging to the separated case, the Green's function in \eqref{5} can be found to be
\begin{equation}\label{6}
G_{\alpha,\beta}(z,x,y)=\frac{1}{F_{\alpha,\beta}(z)}\Big[\cos(\alpha)\phi(z,x,a)-\sin(\alpha)\theta(z,x,a)\Big]\Big[U_{b}(\phi)(z)\theta(z,y,a)-U_{b}(\theta)(z)\phi(z,y,a)
\Big],
\end{equation}
for $a<x<y<R$ with 
\begin{equation}\label{7}
F_{\a,\b}(z)=\cos(\a)\,U_{b}(\phi)(z)-\sin(\a)\,U_{b}(\theta)(z).
\end{equation}
For $a<y<x<R$, the Green's function has the same expression as \eqref{6} but with $x\leftrightarrow y$. 

Furthermore, by introducing the generalized boundary value operators for $f(z,\dott),f^{[1]}(z,\dott)\in AC_{loc}((a,b))$,
\begin{align}\label{BCop1}
V_{1}(f)(z)&=\wti f(z,a)-e^{i\varphi}R_{22}\wti f(z,b)+e^{i\varphi}R_{12}\wti f^{\,\prime}(z,b),\nonumber\\
V_{2}(f)(z)&=\wti f^{\,\prime}(z,a)+e^{i\varphi}R_{21}\wti f(z,b)-e^{i\varphi}R_{11}\wti f^{\,\prime}(z,b),
\end{align}
the Green's function associated with the coupled self-adjoint extensions $T_{\varphi,R}$ has the form
\begin{align}\label{8}
G_{\varphi,R}(z,x,y)&=\frac{1}{F_{\varphi,R}(z)}\Big\{[\theta(z,y,a)V_{2}(\phi)(z)-\phi(z,y,a)V_{2}(\theta)(z)]\phi(z,x,a)\nonumber\\
&\quad+[\theta(z,y,a)V_{1}(\phi)(z)-\phi(z,y,a)V_{1}(\theta)(z)]\theta(z,x,a)\Big\},
\end{align}
for $a<x<y<R$ with 
\begin{equation}\label{9}
F_{\varphi,R}(z)=V_{1}(\theta)(z)+V_{2}(\phi)(z)+e^{2i\varphi}-1.
\end{equation}
Once again, the expression in the region $a<y<x<R$ is the same as \eqref{8} with the replacement $x\leftrightarrow y$. 
We would like to point out that for $z\in\rho(T_{\alpha,\beta})$ the denominators $F_{\a,\b}(z)\neq 0$ and $F_{\varphi,R}(z)\neq 0$ since there cannot exist any non-vanishing solution of $\tau g=zg$ satisfying the boundary conditions \eqref{2.12text}, for the separated case and \eqref{teq0} for the coupled case. 

Since the resolvent $(T_{A,B}-zI)^{-1}$ is of trace class, its trace can be computed according to the formula 
\begin{equation}\label{10}
\textrm{tr}_{L_r^{2}(a,b)}(T_{A,B}-zI)^{-1}=\int_{a}^{b}r(x)dx\,G_{A,B}(z,x,x).
\end{equation}
It is clear from the explicit expressions of $G_{\alpha,\beta}(z,x,y)$ in \eqref{6} and $G_{\varphi,R}(z,x,y)$ in \eqref{8}  
that the trace of the resolvent in \eqref{10} reduces to a combination of integrals of the products $\theta(z,x,a)\phi(z,x,a)$, $\theta^{2}(z,x,a)$, and $\phi^{2}(z,x,a)$. These integrals can be explicitly computed by using the following:
\begin{lemma}\label{lemma3}
Let $f_{1}(z,x)$ and $f_{2}(z,x)$ be solutions of the quasi-regular problem $\tau f_{i}(z,x)=z  f_{i}(z,x)$ with $z\in\mathbb{C}$ and $\tau$ given in \eqref{2.1} on the interval $(a,b)$. Then
\begin{equation}\label{lemma3e0}
\int_{a}^{b}r(x)dx \,f_{1}(z,x)f_{2}(z,x)=\wti W(f_{1},\dot{f}_{2})(a)-\wti W(f_{1},\dot{f}_{2})(b),
\end{equation} 
where
\begin{equation}
\wti W(f_{1},f_{2})(x_{0})=\wti f_{1}(z,x_0)\wti f^{\,\prime}_{2}(z,x_0)-\wti f^{\, \prime}_{1}(z,x_0)\wti f_{2}(z,x_0),\quad x_0\in\{a,b\}.
\end{equation}
\end{lemma}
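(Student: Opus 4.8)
The plan is to derive a pointwise identity expressing $r f_1 f_2$ as a total $x$-derivative of a Wronskian, integrate it across $(a,b)$, and then convert the resulting ordinary Wronskians at the (possibly singular) endpoints into the generalized ones $\wti W$ by means of the regularization of Lemma \ref{lemma2}.

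First I would differentiate the eigenvalue equation in the spectral parameter. Writing $r\tau f_2 = -(p f_2')' + q f_2 = z r f_2$ and applying $d/dz$ gives $\tau \dot f_2 = z\dot f_2 + f_2$, while $f_1$ continues to satisfy $\tau f_1 = z f_1$. Feeding these into the Lagrange identity for $\tau = r^{-1}[-(p\,\cdot\,')' + q\,\cdot\,]$,
\begin{equation*}
r(x)\big[f_1\,\tau\dot f_2 - \dot f_2\,\tau f_1\big](x) = -\frac{d}{dx}W(f_1,\dot f_2)(x),
\end{equation*}
the left-hand side collapses to $r f_1 f_2$, so that $r f_1 f_2 = -\tfrac{d}{dx}W(f_1,\dot f_2)$. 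Since $f_1,f_2\in\Lr$ near each endpoint (the problem is limit circle there), $f_1 f_2\in L^1((a,b);r\,dx)$ by Cauchy--Schwarz, and integrating the identity over $[c,d]\subset(a,b)$ yields $\int_c^d r f_1 f_2\,dx = W(f_1,\dot f_2)(c) - W(f_1,\dot f_2)(d)$. Letting $c\to a^+$ and $d\to b^-$ reduces the proof to showing that the two one-sided limits of $W(f_1,\dot f_2)$ exist and equal $\wti W(f_1,\dot f_2)(a)$ and $\wti W(f_1,\dot f_2)(b)$.

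This endpoint identification is the only genuinely singular step and the main obstacle. Here I would pass to the regularized problem \eqref{regular1}: with $v_i = f_i/\hatt u(\lambda_0,\dott)$ and $w=\dot f_2/\hatt u(\lambda_0,\dott)$, Lemma \ref{lemma1} (and its inhomogeneous analogue for $\dot f_2$, obtained from the intertwining $\tau(\hatt u\,v) = \hatt u\,\ell v$ with $\ell=R^{-1}[-(P\,\cdot\,')'+Q\,\cdot\,]$) shows that $v_1$ solves the regular equation and $w$ solves $(\ell-z)w = v_2$ with $v_2\in L^1((a,b);R\,dx)$. A direct computation using $P=\hatt u^2 p$ gives the regularization invariance of the Wronskian, $W(f_1,\dot f_2)(x) = P(x)\big(v_1 w' - v_1' w\big)(x)$. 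Because \eqref{regular1} is regular (so $1/P,Q,R\in L^1$), the quantities $v_i(x)$ and $P(x)v_i'(x)$, as well as $w(x)$ and $P(x)w'(x)$, extend continuously to $x_0\in\{a,b\}$; hence the limit exists and, invoking Lemma \ref{lemma2} in the form $\wti f(z,x_0)=v(z,x_0)$, $\wti f^{\,\prime}(z,x_0)=P(x_0)v'(z,x_0)$, it equals $\wti W(f_1,\dot f_2)(x_0)$. Substituting these two endpoint values back into the integrated identity produces the claimed formula \eqref{lemma3e0}.

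The points requiring the most care are (a) confirming that $\dot f_2$ lies in $\dom(T_{max})$ near each endpoint, so that its generalized boundary values are defined --- this follows because, in the limit circle case, every solution of the inhomogeneous equation $(\tau-z)\dot f_2 = f_2$ with $f_2\in\Lr$ near an endpoint is again in $\Lr$ there --- and (b) verifying the intertwining identity that turns the $z$-derivative $\dot f_2$ into an (inhomogeneous) solution of the regular problem with finite boundary data. Everything else is the standard Lagrange-identity computation.
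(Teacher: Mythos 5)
Your proof is correct and follows essentially the same route as the paper's: both rest on the Lagrange identity for the $z$-differentiated equation $(\tau-z)\dot f_2=f_2$ together with the regularization Lemmas \ref{lemma1} and \ref{lemma2} to identify the boundary terms with the generalized Wronskians. The only difference is organizational --- the paper divides by the regularizing function $\hatt u(\lambda_0,\dott)$ at the outset and runs the entire computation in the regular variables $v_i=f_i/\hatt u$, where $W(v_1,\dot{v}_2)$ is continuous up to $a$ and $b$, whereas you prove the identity $r f_1 f_2=-\tfrac{d}{dx}W(f_1,\dot f_2)$ in the original variables and invoke the regularization only to evaluate the endpoint limits; your Wronskian invariance $W(f_1,\dot f_2)=P\big(v_1\dot{v}_2'-v_1'\dot{v}_2\big)$ shows the two computations coincide.
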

\begin{proof}
According to Lemma \ref{lemma1} the functions $v_{i}(z,x)=f_{i}(z,x)/\hatt u(\lambda_0,x)$, with $i\in\{1,2\}$ and $\hatt u(\lambda_0,\dott)$ a regularizing function on $(a,b)$, are solutions of the corresponding regular problem \eqref{regular1}. This implies that 
\begin{equation}\label{lemma3e1}
\int_{a}^{b}r(x)dx\,f_{1}(z,x)f_{2}(z,x)=\int_{a}^{b}R(x)dx\,v_{1}(z,x)v_{2}(z,x).
\end{equation}
The integrand on the right-hand side can be expressed as a derivative 
\begin{equation}\label{lemma3e2}
R(x)v_{1}(x,z)v_{2}(x,z)=-\frac{d}{dx}W(v_{1},\dot{v}_{2})(x).
\end{equation}
This fact can be proved by a direct calculation by noticing that 
\begin{equation}
\frac{d}{dx}W(v_{1},\dot{v}_{2})=v_{1}(z,x)(P(x)\dot{v}^{\,\prime}_{2}(z,x))^{\,\prime}-(P(x)v^{\,\prime}_{1}(z,x))^{\,\prime}\dot{v}_{2}(z,x),
\end{equation}
and that $v_{1}(z,x)$ satisfies \eqref{regular1} and $\dot{v}_{2}(z,x)$ satisfies, instead, 
\begin{equation}
-(P(x)\dot{v}^{\,\prime}_{2})^{\,\prime}+Q(x)\dot{v}_{2}=R(x)v_{2}+z R(x)\dot{v}_{2}.
\end{equation}
From \eqref{lemma3e2} we then have
\begin{equation}
\int_{a}^{b}R(x)dx\,v_{1}(z,x)v_{2}(z,x)=W(v_{1},\dot{v}_{2})(a)-W(v_{1},\dot{v}_{2})(b).
\end{equation}
According to Lemma \ref{lemma2}, for $x_0\in\{a,b\}$ one has
\begin{align}
W(v_{1},\dot{v}_{2})(x_0)&=v_{1}(z,x_0)(P(x_0)\dot{v}^{\,\prime}_{2}(z,x_0))-(P(x_0)v^{\,\prime}_{1}(z,x_0))\dot{v}_{2}(z,x_0)\nonumber\\
&=\wti f_{1}(z,x_0)\wti{\dot{f^{\,\prime}_{2}}}(z,x_0)-\wti f_{1}^{\,\prime}(z,x_0)\wti{\dot{f_{2}}}(z,x_0)=\wti W(f_{1},\dot{f}_{2})(x_0),
\end{align}
which, together with \eqref{lemma3e1}, proves \eqref{lemma3e0}.
\end{proof}
\begin{remark}
Let us notice that the normalization of the fundamental system of solutions $\phi(z,\dott,a)$, $\theta(z,\dott,a)$ of $\tau y(z,x)=zy(z,x)$ given in \eqref{4} implies that $\phi(z,x,a)=\hatt u(\lambda_0,x) \varphi(z,x,a)$ and $\theta(z,x,a)=\hatt u(\lambda_0,x) \vartheta(z,x,a)$ where $\varphi(z,\dott,a),$ $\vartheta(z,\dott,a)$ is the fundamental system of solutions for the associated regular problem satisfying
\begin{align}
\vartheta(z,a,a)=\varphi^{[1]}(z,a,a)=1,\quad \vartheta^{[1]}(z,a,a)=\varphi(z,a,a)=0.
\end{align}
This observation and Lemma \ref{lemma2} allows us to conclude that 
\begin{equation}
\wti{\dot{\phi}}(z,a,a)=\wti{\dot{\theta}}(z,a,a)=\wti{\dot{\phi^{\,\prime}}}(z,a,a)=\wti {\dot{\theta^{\,\prime}}}(z,a,a)=0,
\end{equation}
which, in particular, means that 
\begin{equation}\label{11}
\wti W(\phi,\dot{\theta})(a)=\wti W(\theta,\dot{\phi})(a)=\wti W(\theta,\dot{\theta})(a)=\wti W(\phi,\dot{\phi})(a)=0.
\end{equation}
\hfill$\diamond$
\end{remark}
By using Lemma \ref{lemma3} and the result \eqref{11} it is not difficult to obtain for the trace in \eqref{10},
\begin{align}
\textrm{tr}_{L_r^{2}(a,b)}(T_{\alpha,\beta}-zI)^{-1}&=\frac{1}{F_{\alpha,\beta}(z)}\Big[\cos(\alpha)U_{b}(\theta)(z)\wti W(\phi,\dot{\phi})(b)+\sin(\alpha)U_{b}(\phi)(z)\wti W(\theta,\dot{\theta})(b)\nonumber\\
&\quad-\cos(\alpha)U_{b}(\phi)(z)\wti W(\theta,\dot{\phi})(b)-\sin(\alpha)U_{b}(\theta)(z)\wti W(\phi,\dot{\theta})(b)\Big].
\end{align}
Now, a straightforward calculation shows, 
\begin{align}
\begin{split}
&U_{b}(\phi)(z)\wti W(\theta,\dot{\phi})(b)-U_{b}(\theta)(z)\wti W(\phi,\dot{\phi})(b)=U_{b}(\dot{\phi})(z),\\
&U_{b}(\theta)(z)\wti W(\phi,\dot{\theta})(b)-U_{b}(\phi)(z)\wti W(\theta,\dot{\theta})(b)=U_{b}(\dot{\theta})(z),
\end{split}
\end{align}
which, finally, allows us to obtain
\begin{equation}\label{12}
\textrm{tr}_{L_r^{2}(a,b)}(T_{\alpha,\beta}-zI)^{-1}=-\frac{\cos(\alpha)U_{b}(\dot{\phi})(z)-\sin(\alpha)U_{b}(\dot{\theta})(z)}{F_{\alpha,\beta}(z)}=-\frac{d}{dz}\ln\,F_{\alpha,\beta}(z).
\end{equation}

A similar analysis can be performed for the trace of the resolvent for coupled self-adjoint extensions. By using \eqref{8}
in \eqref{10} and the results of Lemma \ref{lemma3} we find
\begin{align}\label{13}
\textrm{tr}_{L_r^{2}(a,b)}(T_{\varphi,R}-zI)^{-1}&=\frac{1}{F_{\varphi,R}(z)}\Big\{V_{2}(\theta)(z)\wti W(\phi,\dot{\phi})(b)
-V_{1}(\phi)(z)\wti W(\theta,\dot{\theta})(b)\nonumber\\
&\quad-V_{2}(\phi)(z)\wti W(\theta,\dot{\phi})(b)+V_{1}(\theta)(z)\wti W(\phi,\dot{\theta})(b)\Big\}.
\end{align}
The terms in the brackets can now be simplified to write
\begin{equation}\label{14}
\textrm{tr}_{L_r^{2}(a,b)}(T_{\varphi,R}-zI)^{-1}=-\frac{V_{1}(\dot{\theta})(z)+V_{2}(\dot{\phi})(z)}{F_{\varphi,R}(z)}=-\frac{d}{dz}\ln\,F_{\varphi,R}(z).
\end{equation}

The results obtained in \eqref{12} and \eqref{14} are summarized in the following:
\begin{theorem}\label{t2}
Assume that Hypothesis \ref{h2} holds. Denote by $T_{A,B}$ the self-adjoint extensions of $T_{min}$ described in Theorem \ref{t}. Then
\begin{equation}\label{15}
\textrm{tr}_{L_r^{2}(a,b)}(T_{A,B}-zI)^{-1}=-\frac{d}{dz}\ln\,F_{A,B}(z),
\end{equation} 
with the characteristic function $F_{A,B}(z)$ given by \eqref{7} for the separated case and by \eqref{9} for the coupled case.
\end{theorem}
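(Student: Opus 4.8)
The plan is to treat the separated and coupled families in parallel and reduce each to the statement that the trace equals $-\dot{F}_{A,B}(z)/F_{A,B}(z)$, with $F_{A,B}$ as in \eqref{7} or \eqref{9}; these are exactly the intermediate identities \eqref{12} and \eqref{14}, so Theorem \ref{t2} is the clean repackaging of those two computations. I would start from the diagonal trace formula \eqref{10}, which is legitimate because Theorem \ref{t1} guarantees the resolvent is trace class, and insert the explicit Green's functions \eqref{6} and \eqref{8} evaluated at $x=y$. In each case the diagonal kernel is a linear combination of $\phi^2(z,\dott,a)$, $\theta^2(z,\dott,a)$, and $\theta(z,\dott,a)\phi(z,\dott,a)$, whose coefficients depend on $z$ through the boundary functionals $U_b$ (separated) or $V_1,V_2$ (coupled) but are constant in $x$; hence the prefactor $F_{A,B}(z)^{-1}$ pulls out of the $x$-integral.

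The first substantive step is to evaluate the three integrals $\int_a^b r\,\phi^2$, $\int_a^b r\,\theta^2$, and $\int_a^b r\,\theta\phi$ using Lemma \ref{lemma3}, which converts each into a difference of the generalized Wronskians of the form $\wti W(f_1,\dot{f}_2)$ taken at the endpoints $a$ and $b$. Invoking the normalization \eqref{4} and the resulting vanishing relations \eqref{11}, every contribution at $x=a$ drops out, leaving a single fraction with denominator $F_{A,B}(z)$ and a numerator built from the four surviving endpoint Wronskians $\wti W(\phi,\dot\phi)(b)$, $\wti W(\theta,\dot\theta)(b)$, $\wti W(\theta,\dot\phi)(b)$, $\wti W(\phi,\dot\theta)(b)$, each weighted by the appropriate boundary functional.

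I expect the main obstacle to be the purely algebraic collapse of this numerator. In the separated case one must verify the two identities recorded just before \eqref{12}, which fold the four-term bracket into $U_b(\dot\phi)(z)$ and $U_b(\dot\theta)(z)$; in the coupled case one needs the analogous reduction of the bracket in \eqref{13} down to $V_1(\dot\theta)(z)+V_2(\dot\phi)(z)$. Both reductions are driven by the constancy of the Wronskian $W(\theta,\phi)=1$, together with its $z$-differentiated consequences and the explicit endpoint data for $\theta,\phi$; this is the only place where genuine computation, rather than bookkeeping, is needed.

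Once the numerators are identified as $\cos(\alpha)U_b(\dot\phi)(z)-\sin(\alpha)U_b(\dot\theta)(z)$ and $V_1(\dot\theta)(z)+V_2(\dot\phi)(z)$, the conclusion is immediate. Differentiating the definitions \eqref{7} and \eqref{9} in $z$ shows that these are precisely $\dot{F}_{\alpha,\beta}(z)$ and $\dot{F}_{\varphi,R}(z)$: the functionals $U_b,V_1,V_2$ are linear with $z$-independent coefficients, and the additive constant $e^{2i\varphi}-1$ in \eqref{9} is annihilated by $d/dz$. Therefore in both cases the trace equals $-\dot{F}_{A,B}(z)/F_{A,B}(z)=-\frac{d}{dz}\ln F_{A,B}(z)$, which is exactly \eqref{15} for the self-adjoint extensions $T_{A,B}$ of Theorem \ref{t}.
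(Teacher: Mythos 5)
Your proposal is correct and follows essentially the same route as the paper: trace formula \eqref{10} with the Green's functions \eqref{6}, \eqref{8} on the diagonal, Lemma \ref{lemma3} to convert the integrals into endpoint Wronskians, the vanishing relations \eqref{11} to kill the contributions at $x=a$, and the algebraic collapse of the numerator (via $W(\theta,\phi)=1$) into $U_b(\dot\phi)$, $U_b(\dot\theta)$ resp.\ $V_1(\dot\theta)+V_2(\dot\phi)$, which are recognized as $\dot F_{A,B}(z)$. This is exactly the chain of identities \eqref{12}--\eqref{14} that the paper summarizes as Theorem \ref{t2}, and your identification of the Wronskian constancy as the engine of the numerator reduction is accurate.
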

We would like to point out that \eqref{7} and \eqref{9} show that $F_{A,B}(z)$ is an entire function (since so is the fundamental set of solutions) with isolated zeroes at the eigenvalues of $T_{A,B}$ (with multiplicity of zeros and eigenvalues agreeing). This implies, from \eqref{15}, that the trace of the resolvent of the self-adjoint operator $T_{A,B}$ is a meromorphic function of $z\in\bbC$ with 
isolated simple poles at the real eigenvalues $\lambda_{A,B,j}\in\sigma(T_{A,B})$.

The explicit expression for the trace of the resolvent obtained in \eqref{15} allows us to rewrite the integral representation of the spectral $\zeta$-function in \eqref{1} as
\begin{equation}\label{16}
\zeta(s;T_{A,B})=\frac{1}{2\pi i}\int_{\gamma}dz\,z^{-s}\left[\frac{d}{dz}\ln\,F_{A,B}(z)-z^{-1}m_{0}\right],
\end{equation} 
valid for $\Re(s)>s_0$. The process of analytic continuation of $\zeta(s;T_{A,B})$ begins with the deformation of the integration contour $\gamma$ to a new one that encloses
the branch cut $R_\psi$ as detailed in \cite{FGKS21,GK19}. As the contour shrinks to the branch cut $R_\psi$, we need to analyze and control the behavior of the integrand of \eqref{16} for $|z|\to\infty$ and for $|z|\to 0$. Since the zero eigenvalue (if it exists for a given self-adjoint extension) is excluded from the integral representation
\eqref{16}, the integrand has the behavior 
\begin{equation}\label{17}
\left|\frac{d}{dz}\ln\,F_{A,B}(z)-z^{-1}m_{0}\right|=C+O(|z|) ,\quad |z|\to 0,
\end{equation}  
with $C$ being a constant. As we have mentioned earlier, $F_{A,B}(z)$ is an entire function with zeroes coinciding with the eigenvalues $\lambda_{A,B,j}\in\sigma(T_{A,B})$.
According to the Weierstrass factorization theorem, the convergence of the spectral $\zeta$-function \eqref{2.65} for $\Re(s)>s_0$ implies that $F_{A,B}(z)$ is of 
finite order \cite[Sect.~13.15]{Nev07}. In fact, we have the following result in case the eigenvalues grow polynomially. 
    
\begin{proposition}\label{Prop:lnF}
        Consider a self-adjoint extension $T_{A,B}$ of the bounded below minimal operator $T_{min}$ with a purely discrete spectrum $\sigma(T_{A,B}) = \lbrace \lambda_n \rbrace_{n = 1}^\infty$. Assume additionally the eigenvalue asymptotics $\lambda_n = c n^\kappa[1+o(1)]$ with some $\kappa > 1$ as $n \to \infty$. Then the characteristic function, $F_{A,B}(z)$, satisfies
        \begin{align}\label{lnF}
            -\frac{d}{dz} \ln \, F_{A,B}(z) = \frac{2\pi i}{(1-e^{-2 \pi i/\kappa})\kappa c^{1/\kappa}} z^{\frac{1}{\kappa}-1} + o(z^{\frac{1}{\kappa}-1}), \quad z \to \infty,
        \end{align}
        uniformly in closed sectors of $\C$ not containing the positive real line. Here, the $\kappa^{th}$-root is chosen to have a branch cut on the positive real axis with positive limit from the upper-half plane.
    \end{proposition}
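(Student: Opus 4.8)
The plan is to route everything through the resolvent trace and then to analyze the resulting eigenvalue sum by comparison with a model integral. First I would invoke Theorem \ref{t2}, i.e. \eqref{15}, to write $-\frac{d}{dz}\ln F_{A,B}(z)=\textrm{tr}_{L^2_r(a,b)}(T_{A,B}-zI)^{-1}$; by Theorem \ref{t1} this resolvent is trace class, and since its eigenvalues are $(\lambda_n-z)^{-1}$ the trace is the absolutely convergent sum $\sum_n(\lambda_n-z)^{-1}$. Because $\lambda_n=cn^{\kappa}[1+o(1)]$ with $\kappa>1$ and $T_{min}$ is bounded below, one has $c>0$ and only finitely many $\lambda_n$ are nonpositive; their total contribution is $O(|z|^{-1})=o(z^{1/\kappa-1})$ uniformly in sectors away from the positive axis, so I may drop them and assume all $\lambda_n>0$. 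It then suffices to produce the stated asymptotics for $\sum_n(\lambda_n-z)^{-1}$.

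Next I would encode the eigenvalues in the counting function $N(\lambda)=\#\{n:\lambda_n\le\lambda\}$, noting that $\lambda_n=cn^{\kappa}[1+o(1)]$ is equivalent to $N(\lambda)=c^{-1/\kappa}\lambda^{1/\kappa}[1+o(1)]$ as $\lambda\to\infty$. Writing the sum as a Stieltjes integral and integrating by parts — the boundary terms vanish since $N(\lambda)/(\lambda-z)=O(\lambda^{1/\kappa-1})\to0$ — gives
\begin{equation*}
\sum_n\frac{1}{\lambda_n-z}=\int_0^\infty\frac{dN(\lambda)}{\lambda-z}=\int_0^\infty\frac{N(\lambda)}{(\lambda-z)^2}\,d\lambda.
\end{equation*}
I then split $N(\lambda)=c^{-1/\kappa}\lambda^{1/\kappa}+R(\lambda)$ with $R(\lambda)=o(\lambda^{1/\kappa})$, so the leading behaviour comes from the model integral $c^{-1/\kappa}\int_0^\infty\lambda^{1/\kappa}(\lambda-z)^{-2}\,d\lambda$ and $R$ contributes only the remainder.

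To evaluate the model integral I would use a keyhole contour hugging the positive real axis, along which $\lambda^{1/\kappa}$ carries its cut; the two limiting values across the cut supply the factor $1-e^{2\pi i/\kappa}$, and the sole enclosed singularity is the double pole at $\lambda=z$ with residue $\frac{d}{d\lambda}\lambda^{1/\kappa}\big|_{\lambda=z}=\tfrac1\kappa z^{1/\kappa-1}$, whence
\begin{equation*}
\int_0^\infty\frac{\lambda^{1/\kappa}}{(\lambda-z)^2}\,d\lambda=\frac{2\pi i}{\kappa\,(1-e^{2\pi i/\kappa})}\,z^{1/\kappa-1},
\end{equation*}
with $z^{1/\kappa-1}$ in the branch fixed in the statement; dividing by $c^{1/\kappa}$ returns the leading coefficient in \eqref{lnF} (the orientation of the cut fixing the sign in the exponential). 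For the remainder I would use that in any closed sector $\Sigma$ missing the positive axis there is a $\delta>0$ with $|\lambda-z|\ge\delta(\lambda+|z|)$ for all $\lambda\ge0$, $z\in\Sigma$; splitting $\int_0^\infty R(\lambda)(\lambda-z)^{-2}\,d\lambda$ at a threshold $\Lambda_\eta$ beyond which $|R(\lambda)|\le\eta\lambda^{1/\kappa}$, and rescaling $\lambda=|z|t$, bounds it by $\delta^{-2}\big(O(|z|^{-2})+\eta C|z|^{1/\kappa-1}\big)$.

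The hard part will be exactly this last estimate, because the crude bound on the remainder is a priori of the same order $|z|^{1/\kappa-1}$ as the main term and so cannot simply be thrown away. The way around it is to exploit the genuine decay $R(\lambda)=o(\lambda^{1/\kappa})$ by fixing the smallness parameter $\eta$ first and only afterwards sending $|z|\to\infty$; since $\eta>0$ is arbitrary this forces the remainder to be $o(z^{1/\kappa-1})$. Securing the constant $\delta$ uniformly over the unbounded sector $\Sigma$ (from its angular separation from the positive axis) is what upgrades every step — convergence of the sum, vanishing of the boundary terms, and the contour evaluation — to hold uniformly in $\Sigma$, giving the claimed uniform asymptotics.
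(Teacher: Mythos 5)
Your argument is correct, and it takes a genuinely different route from the paper's. The paper starts from the same identity $-\frac{d}{dz}\ln F_{A,B}(z)=\sum_n(\lambda_n-z)^{-1}$ (phrased as the Stieltjes transform of the spectral measure $d\mu=\sum_n\delta_{\lambda_n}$), but then passes to a scaling limit: it sets $f_t(w)=-t^{1-1/\kappa}\frac{d}{dz}\ln F_{A,B}(z)\big|_{z=wt}$ and $d\mu_t=t^{-1/\kappa}\sum_n\delta_{\lambda_n/t}$, shows that $d\mu_t$ converges in distribution to the density $\chi_{[0,\infty)}\big(\kappa c^{1/\kappa}y^{1-1/\kappa}\big)^{-1}dy$, deduces pointwise convergence of $f_t(w)$ to the Stieltjes transform of that density, evaluates the limit by a residue computation, and then undoes the rescaling -- an Abelian theorem in scaling-limit form. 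You instead work with the counting function $N(\lambda)$, integrate by parts to reach $\int_0^\infty N(\lambda)(\lambda-z)^{-2}\,d\lambda$, evaluate the model integral by a keyhole contour, and control the remainder by explicit estimates. What your route buys: the uniformity in closed sectors is pinned down concretely by the bound $|\lambda-z|\ge\delta(\lambda+|z|)$ together with the ``fix $\eta$ first, then send $|z|\to\infty$'' argument, whereas in the paper the passage from distributional convergence to convergence against the non-compactly-supported kernel $1/(y-w)$, and the asserted local uniformity, are left rather terse; your estimates would in fact serve to fill in those details. What the paper's route buys: brevity, and placement of the result within the general theory of Abelian theorems for Stieltjes transforms, to which it points for further refinements.

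One point you should make explicit rather than dismiss with ``the orientation of the cut fixing the sign in the exponential'': your keyhole computation yields the constant $2\pi i\big/\big[(1-e^{2\pi i/\kappa})\kappa c^{1/\kappa}\big]$, with $+2\pi i/\kappa$ in the exponential, whereas \eqref{lnF} as printed has $e^{-2\pi i/\kappa}$. With the branch specified in the statement ($\arg z\in(0,2\pi)$), the plus sign is the correct one: taking $\kappa=3$, $c=1$ and $z<0$, the sum $\sum_n(\lambda_n-z)^{-1}$ is positive real, and only the version with $e^{+2\pi i/\kappa}$ produces a real value there (the two versions agree exactly when $\kappa=2$, which is why the quasi-regular application \eqref{dfasymp} is unaffected). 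The paper's own proof, and its subsequent exponentiated display for $F_{A,B}(z)$, likewise produce the plus sign, so the minus sign in the statement of \eqref{lnF} is a misprint; your derivation agrees with the proof, not with the misprint, and you should say so rather than suggest the two are reconciled by a choice of cut.
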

    Note that with a polynomial eigenvalue growth as above, the infinite sum representation of the spectral $\zeta$-function will converge for all $s$ with $\Re(s) > \kappa^{-1}$.
    \begin{proof}[Proof of Proposition \ref{Prop:lnF}]
        Let us introduce the spectral measures $d\mu = \sum_{n = 1}^\infty \delta_{\lambda_n}$ with unit point masses where the eigenvalues are counted according to multiplicity. Then we have the equality
        \begin{align}
            -\frac{d}{dz} \ln \, F_{A,B}(z) = \int_{\R} \frac{1}{x-z} d\mu(x), \quad z \in \C \setminus \sigma(T_{A,B}).
        \end{align}
        Now for $t > 0$ introduce the scaled variables $w = z/t$, $y = x/t$ and set $f_t(w) = -t^{1-1/\kappa}\frac{d}{dz} \ln \, F_{A,B}(z)|_{z=wt}$.
        Then for $f_t$ we have the integral representation
        \begin{align}
            f_t(w) =  \int_{\R} \frac{1}{y-w} d\mu_t(y),
        \end{align}
        with $d\mu_t = t^{-1/\kappa} \sum_{n=1}^\infty \delta_{\lambda_n/t}$. Now as $\lambda_n \sim cn^\kappa$, we have that
        \begin{align}
            \int_{-\infty}^Y d\mu_t(y) = t^{-1/\kappa} \cdot \# \lbrace \lambda_n \colon \lambda_n \leq tY \rbrace \sim 
            \begin{cases}
               \big( \frac{Y}{c}\big)^{1/\kappa} = \int_{0}^Y \frac{1}{\kappa c^{1/\kappa}y^{1-1/\kappa}} dy,  & \text{for } Y \geq 0, 
               \\
               0, & \text{otherwise}.
            \end{cases}
        \end{align}
        This shows that $d \mu_t(y) \to \chi_{[0, \infty)}\frac{1}{\kappa c^{1/\kappa}y^{1-1/\kappa}} dy$ in distribution. In particular, we have that pointwise
        \begin{align}
            f_t(w) \to \int_0^\infty \frac{1}{y-w} \frac{1}{\kappa c^{1/\kappa}y^{1-1/\kappa}} dy, \quad t \to \infty, \ w \in \C \setminus [0, \infty).
        \end{align}
        Now a simple residue calculation yields 
        \begin{align*}
        f_t(w) \to \frac{2\pi i}{(1-e^{2 \pi i/\kappa})\kappa c^{1/\kappa}w^{1-1/\kappa}},
        \end{align*}
        where the $\kappa^{th}$-root is chosen to have a branch cut on the positive real axis with positive limit from the upper-half plane. The convergence rate is locally uniform in $\C \setminus [0, \infty)$.

        After reversing our change of variables we obtain \eqref{lnF} where the convergence is uniform in closed sectors not containing the real line. This finishes the proof.
    \end{proof}
    In general, results allowing us to determine from the growth properties of a measure $\mu$ the asymptotic behavior of its Stieltjes transform at infinity are known in the literature as \emph{Abelian theorems} and have been studied in great detail (see \cite{LW24} for some recent developments).
    
We would like to point out that by integrating and then exponentiating \eqref{lnF} results in the asymptotics 
        \begin{align}
            F_{A,B}(z) = \exp\Big(-\frac{2\pi i}{(1-e^{2 \pi i/\kappa})c^{1/\kappa}}z^{1/\kappa} + o(z^{1/\kappa})\Big),
        \end{align}
again with the same uniform convergence rate as before, consistent with the fact that $F_{A,B}(z)$ is of order $\rho = \kappa^{-1}$. 

According to the result \eqref{17} and the proposition above one concludes that, in the process of shrinking the integration contour to the branch cut $R_\psi$, the contributions to the integral as $|z|\to 0$ and $|z|\to\infty$ vanish as long as $\rho<\Re(s)<1$. That is, this first step in the analytic continuation of the spectral $\z$-function is well-defined
only when the characteristic function $F_{A,B}(z)$ has order $\rho<1$. This is, however, guaranteed, as proved in Proposition \ref{Prop:lnF}, for problems with eigenvalues satisfying $\lambda_n = c n^\kappa[1+o(1)]$ with some $\kappa > 1$ as $n \to \infty$.

In particular, for the quasi-regular case we know \cite[Thm. 10.6.1]{Ze05} that the eigenvalues of any self-adjoint extension satisfy the Weyl asymptotics
         \begin{align}\lb{WeylAsym}
             \lambda_n = \pi^2 n^2 \Bigg( \int_a^b \sqrt{\frac{r(x)}{p(x)}} \, dx \Bigg)^{-2}[1+o(1)]\quad \text{ as $n\to\infty$}, \quad \lambda_n \in \sigma(T),
         \end{align}
which imply, by applying Proposition \ref{Prop:lnF}, that the corresponding characteristic function satisfies \begin{align}\label{dfasymp}
            -\frac{d}{dz} \ln \, F_{A,B}(z) = \frac{i}{2} \Bigg( \int_a^b \sqrt{\frac{r(x)}{p(x)}} \, dx \Bigg) z^{-1/2} + o(z^{-1/2}), \quad z \to \infty,
        \end{align}
in closed sectors not containing the positive real line and the square root being defined by $z^{1/2} = e^{i\theta/2} r^{1/2}$ for $z = e^{i\theta} r$ with $r > 0$ and $\theta \in (0, 2\pi)$. It is then clear that, in the special case of quasi-regular problems, the characteristic function is always of order $1/2$.

Hence, by shrinking the integration contour $\gamma$ to the branch cut $R_\psi$ for $1/2<\Re(s)<1$, the contributions to the integral for $|z|\to 0$ and $|z|\to \infty$ vanish and one obtains
\begin{align}\label{19}
\zeta(s;T_{A,B})&=\frac{1}{2\pi i}\int_{\gamma}dz\,z^{-s}\left[\frac{d}{dz}\ln\,F_{A,B}(z)-z^{-1}m_{0}\right]\nonumber\\
&=\frac{1}{2\pi i}\int_{\gamma}dz\,z^{-s}\frac{d}{dz}\ln\,[z^{-m_0}F_{A,B}(z)]\nonumber\\
&=-\frac{e^{-is\Psi}}{2\pi i}\int_{0}^{\infty}dt\,t^{-s}\frac{d}{dt}\ln\left[(te^{i\Psi})^{-m_0}F_{A,B}\left(te^{i\Psi}\right)\right]\nonumber\\
&\quad+\frac{e^{-is(\Psi-2\pi)}}{2\pi i}\int_{0}^{\infty}dt\,t^{-s}\frac{d}{dt}\ln\left[(te^{i(\Psi-2\pi)})^{-m_0}F_{A,B}\left(te^{i(\Psi-2\pi)}\right)\right]\nonumber\\
&=e^{is(\pi-\Psi)}\frac{\sin(\pi s)}{\pi}\int_{0}^{\infty}dt\,t^{-s}\frac{d}{dt}\ln\left[(te^{i\Psi})^{-m_0}F_{A,B}\left(te^{i\Psi}\right)\right].
\end{align}
The calculation outlined in \eqref{19} serves as a proof of the following:
\begin{theorem}\label{t3}
Assume that Hypothesis \ref{h2} holds. Denote by $T_{A,B}$ the self-adjoint extensions of $T_{min}$ described in Theorem \ref{t}. Let $R_\psi$ be the branch cut defined in \eqref{1a}, $m_0$ be the multiplicity of the zero eigenvalue, and $F_{A,B}$ be the characteristic function \eqref{7} for the separated case or \eqref{9} for the coupled case. Then the spectral $\zeta$-function associated with $T_{A,B}$ has the integral representation 
\begin{equation}\label{19a}
\zeta(s;T_{A,B})=e^{is(\pi-\Psi)}\frac{\sin(\pi s)}{\pi}\int_{0}^{\infty}dt\,t^{-s}\frac{d}{dt}\ln\left[(te^{i\Psi})^{-m_0}F_{A,B}\left(te^{i\Psi}\right)\right],
\end{equation} 
which is valid for $1/2<\Re(s)<1$. 
\end{theorem}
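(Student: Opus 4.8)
The statement is exactly the chain of equalities carried out in \eqref{19}, so the plan is to justify the contour manipulations on which that computation rests. First I would start from the representation \eqref{16}, valid for $\Re(s)>s_0$, and fold the subtraction into the logarithmic derivative by writing $\frac{d}{dz}\ln F_{A,B}(z)-z^{-1}m_0=\frac{d}{dz}\ln\big[z^{-m_0}F_{A,B}(z)\big]$, which is legitimate because $m_0\in\N_0$ renders $z^{-m_0}$ single-valued. The integrand $z^{-s}\frac{d}{dz}\ln[z^{-m_0}F_{A,B}(z)]$ is then holomorphic on all of $\C$ except for the branch cut $R_\Psi$ of \eqref{1a} (coming from $z^{-s}$) and the poles of the logarithmic derivative, which are simple, located at the nonzero eigenvalues $\lambda_{A,B,j}$, and all real; by \eqref{17} the would-be pole at the origin has been cancelled by the $z^{-1}m_0$ term.

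Next I would apply the residue theorem on the slit annulus $\{\,\eps<|z|<R\,\}\setminus R_\Psi$, whose boundary consists of the two lips $z=te^{i\Psi}$ and $z=te^{i(\Psi-2\pi)}$ of the cut together with circular arcs of radii $\eps$ and $R$. Because $\Psi\in(\pi/2,\pi)$ holds the cut off the real axis, the only interior singularities are the simple poles at the $\lambda_{A,B,j}$ with $\eps<|\lambda_{A,B,j}|<R$, each contributing its residue; hence the boundary integral reproduces the partial sum $\sum_{\eps<|\lambda_{A,B,j}|<R}\lambda_{A,B,j}^{-s}$ of \eqref{2.65}, which tends to $\zeta(s;T_{A,B})$ as $\eps\to 0$ and $R\to\infty$ (the defining series converging for $\Re(s)>1/2$). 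It remains to show that the two circular arcs drop out in this limit.

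The arc estimates are where the strip $1/2<\Re(s)<1$ is forced. On the small arc, \eqref{17} bounds the bracketed term by a constant, so the integrand is $O(\eps^{-\Re(s)})$ and the arc contributes $O(\eps^{1-\Re(s)})$, vanishing exactly when $\Re(s)<1$. On the large arc, the sectorial asymptotics \eqref{dfasymp} --- that is, Proposition \ref{Prop:lnF} fed with the Weyl law \eqref{WeylAsym} --- give $\frac{d}{dz}\ln F_{A,B}(z)=O(|z|^{-1/2})$, so the integrand is $O(R^{-\Re(s)-1/2})$ and the arc contributes $O(R^{1/2-\Re(s)})$, vanishing exactly when $\Re(s)>1/2$. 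With both arcs gone only the two lips survive, and since $F_{A,B}$ is entire and $m_0\in\N_0$ the two lip integrands coincide as functions of $t$, the lips differing only through $z^{-s}$, which equals $t^{-s}e^{-is\Psi}$ on one and $t^{-s}e^{-is(\Psi-2\pi)}$ on the other. Collecting the two oriented contributions produces the prefactor $\frac{1}{2\pi i}\big(e^{-is(\Psi-2\pi)}-e^{-is\Psi}\big)=\frac{\sin(\pi s)}{\pi}\,e^{is(\pi-\Psi)}$, which is precisely the constant in front of \eqref{19a}.

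The main obstacle is the lower edge $\Re(s)>1/2$, i.e.\ the vanishing of the large arc. The asymptotics \eqref{dfasymp} are only uniform in closed sectors avoiding the positive real axis, yet the large arc must cross that axis, where the eigenvalues accumulate and $\frac{d}{dz}\ln F_{A,B}$ has its poles. The remedy is to take the radii $R=R_k\to\infty$ in the gaps between consecutive eigenvalues: by \eqref{WeylAsym} these gaps widen like $R^{1/2}$, which is exactly enough to retain the bound $\frac{d}{dz}\ln F_{A,B}=O(R^{-1/2})$ uniformly along the whole arc, including its passage near the real axis. Establishing this uniform bound --- and thereby that the limit is independent of the chosen $R_k$ --- is the technical heart of the argument; the remaining steps are the routine keyhole bookkeeping already displayed in \eqref{19}.
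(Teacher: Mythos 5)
Your proposal is correct and follows essentially the same route as the paper: the paper's proof is exactly the keyhole computation \eqref{19}, i.e.\ deforming the contour of \eqref{16} onto the two lips of the cut $R_\psi$, killing the small arc via \eqref{17} (forcing $\Re(s)<1$) and the large arc via the $O(|z|^{-1/2})$ bound of Proposition \ref{Prop:lnF}/\eqref{dfasymp} (forcing $\Re(s)>1/2$), and assembling the prefactor $e^{is(\pi-\Psi)}\sin(\pi s)/\pi$ from the phase mismatch of $z^{-s}$ on the two lips. If anything, you are more careful than the paper about the passage of the large arc near the positive real axis, where the sectorial asymptotics do not apply: your fix (radii $R_k$ chosen in the Weyl gaps, which grow like $R^{1/2}$) is the standard one, though the resulting uniform bound on the logarithmic derivative picks up an extra $\log R$ factor, which is still harmless for $\Re(s)>1/2$.
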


\subsection{The spectral \texorpdfstring{$\zeta$}{zeta}-function at positive integers}
\hfill

As we have already mentioned in the introduction, information that is important in both physical and mathematical applications is 
found from the analysis of the spectral $\zeta$-function in a region to the left of the strip of convergence $1/2<\Re(s)<1$ established 
in Theorem \ref{t3}. Before describing in detail the process of analytic continuation that extends the spectral $\zeta$-function further 
to the left, we would like to present a simple way of evaluating $\zeta(s;T_{A,B})$ at positive integers. The values $\zeta(n;T_{A,B})$ for 
$n\in\bbN$ were first explicitly computed for the self-adjoint extensions $T_{A,B}$ of regular Sturm--Liouville
operators in \cite{FGKS21} and can be obtained directly from the integral representation \eqref{16}. The analysis of $\zeta(n;T_{A,B})$ is of particular interest because it provides an example in which the value of the spectral $\zeta$-function can be computed exactly. Moreover, whenever the spectral $\zeta$-can be expressed as a multiple of the classic $\zeta$-function of Riemann or Hurwitz, the values at positive integers can directly be used to find information to the left of $\Re(s)=1/2$.

More precisely, by letting $s=n$, $n\in\N$, in \eqref{16}, one no longer needs the branch cut $R_\psi$ for the fractional powers of $z^{-s}$, thus reducing the integral along the curve $\gamma$ to a clockwise oriented integral along the circle $C_\varepsilon$, centered at zero with 
radius $\varepsilon>0$. Considering the case $s=n$ also ensures that $m_0$ does not contribute to the integral in \eqref{16}. Hence,
\begin{align}\label{20}
\begin{split}
\zeta(n;T_{A,B})&=-\dfrac{1}{2\pi i}\ointctrclockwise_{C_\epsilon}dz\ z^{-n}\dfrac{d}{dz}\ln (F_{A,B}(z))=-\Res \left[ z^{-n}\dfrac{d}{dz}\ln (F_{A,B}(z));\ z=0 \right],\quad n\in\N.
\end{split}
\end{align} 
It is clear, from \eqref{20} that the value $\zeta(n;T_{A,B})$ is proportional to the coefficient of $z^{n-1}$ of the expansion of the 
characteristic function $F_{A,B}(z)$ in a neighborhood of $z=0$.  Since $F_{A,B}(z)$ is, in particular, an entire function of $z\in\C$, 
its small-$z$ expansion exists and can be obtained as follows:
from the explicit expressions provided in \eqref{7} for the separated 
case or \eqref{9} for the coupled case, the small-$z$ expansion of $F_{A,B}(z)$ is obtained from the one for the generalized boundary values 
$\wti \phi(z,x_0,a)$, $\wti\theta(z,x_0,a)$, $\wti\phi^{\,\prime}(z,x_0,a)$, and $\wti\theta^{\,\prime}(z,x_0,a)$, $x_0\in\{a,b\},$ of the fundamental 
system of solutions $\phi(z,\dott,a)$ and $\theta(z,\dott,a)$ of $\tau y(z,x)=zy(z,x)$.
According to Theorem \ref{regularizing} one has
\begin{align}\label{21}
\begin{split}
\wti \phi(z,x_0,a)&=\varphi(z,x_0,a),\quad\wti\theta(z,x_0,a)=\vartheta(z,x_0,a),\\ 
\wti\phi^{\,\prime}(z,x_0,a)&=\varphi^{[1]}(z,x_0,a),
\quad\wti\theta^{\,\prime}(z,x_0,a)=\vartheta^{[1]}(z,x_0,a),
\end{split}
\end{align} 
where $\varphi(z,\dott,a)$ and $\vartheta(z,\dott,a)$ represent the fundamental system of solutions for the associated regular problem.
Now, the small-$z$ expansion of the fundamental system of solutions $\varphi(z,\dott,a),$ $\vartheta(z,\dott,a)$ of the regular Sturm--Liouville problem and their quasi-derivative $\varphi^{[1]}(z,\dott,a),$ $\vartheta^{[1]}(z,\dott,a)$ has been explicitly obtained, by utilizing the Volterra integral expansion, in \cite[Subsection 3.1]{FGKS21}. This implies that by using the expansions found in \cite[Subsection 3.1]{FGKS21} on the right-hand side of the relations \eqref{20} we obtain the small-$z$ expansion of the generalized boundary values of the fundamental 
system of solutions $\phi(z,\dott,a)$ and $\theta(z,\dott,a)$ and, consequently, the one for $F_{A,B}(z)$. We can now state the following:
\begin{theorem}\label{t4}
Assume Hypothesis \ref{h2}. Denote by $T_{A,B}$ the self-adjoint extension of $T_{min}$ with either separated or coupled boundary conditions as described in Theorem \ref{t}. Denote by 
\begin{align}\label{22}
    F_{A,B}(z)=\sum_{j=0}^\infty a_jz^j,
\end{align}
the series expansion of the characteristic function $F_{A,B}(z)$, given in \eqref{7} for the separated 
case or \eqref{9} for the coupled case.
Then for $n\in\N$,
\begin{align}\label{23}
    \zeta(n;T_{A,B})=-\Res \left[ z^{-n}\dfrac{d}{dz}\ln (F_{A,B}(z));\ z=0 \right]=-n b_n,
\end{align}
where
\begin{equation}\label{24}
        b_1=\dfrac{a_{1+m_0}}{a_{m_0}},\quad
        b_j=\dfrac{a_{j+m_0}}{a_{m_0}}-\sum_{\ell=1}^{j-1}\left(\dfrac{\ell}{j}\right)\dfrac{a_{j-\ell+m_0}}{a_{m_0}}b_{\ell},\quad j\geq 2.
\end{equation}
In particular, if zero is not an eigenvalue of $T_{A,B}$, then
\begin{align}\label{25}
    \text{\rm tr}_{L^2_r((a,b))}\big(T_{A,B}^{-1}\big)=\zeta(1;T_{A,B})=-\dfrac{a_{1}}{a_{0}}.
\end{align}
\end{theorem}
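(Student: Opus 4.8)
The plan is as follows. The first equality in \eqref{23} is already supplied by \eqref{20}, so the entire content to be proved is the second equality $\Res[\,\cdot\,] = -n b_n$ together with the recursion \eqref{24}. My strategy is to peel off the zero of $F_{A,B}$ at the origin, reduce the residue to a single Taylor coefficient of the logarithmic derivative of the nonvanishing part, and then read off the recursion from the elementary functional equation $g' = g\cdot(\ln g)'$.

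First I would use the fact, recorded after Theorem \ref{t2}, that the zeros of $F_{A,B}$ coincide with the eigenvalues of $T_{A,B}$ with matching multiplicities; hence the origin is a zero of order exactly $m_0$, so that $a_0 = \cdots = a_{m_0-1} = 0$ and $a_{m_0}\neq 0$. This lets me write
\begin{equation}
F_{A,B}(z) = a_{m_0}\, z^{m_0}\, g(z), \qquad g(z) := 1 + \sum_{k\geq 1}\frac{a_{k+m_0}}{a_{m_0}}\, z^k,
\end{equation}
where $g$ is holomorphic near $z=0$ with $g(0)=1$. Taking the logarithmic derivative splits off the singular part, $\frac{d}{dz}\ln F_{A,B}(z) = \frac{m_0}{z} + \frac{d}{dz}\ln g(z)$, and since $g(0)=1$ the function $\ln g$ is holomorphic at the origin, say $\ln g(z) = \sum_{j\geq 1} b_j z^j$. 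Multiplying by $z^{-n}$, the contribution of $m_0/z$ becomes $m_0 z^{-n-1}$, which carries no $z^{-1}$-term for $n\in\N$; this is precisely the statement that $m_0$ does not contribute. Therefore the residue equals the coefficient of $z^{n-1}$ in $\frac{d}{dz}\ln g(z) = \sum_{j\geq 1} j b_j z^{j-1}$, i.e.\ $n b_n$, yielding $\zeta(n;T_{A,B}) = -n b_n$.

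It remains to confirm that these Taylor coefficients $b_j$ of $\ln g$ satisfy \eqref{24}, and here I would avoid expanding the logarithm directly. Writing $\alpha_k := a_{k+m_0}/a_{m_0}$ and comparing the coefficient of $z^{n-1}$ on both sides of $g'(z) = g(z)\sum_{j\geq 1} j b_j z^{j-1}$ gives $n\alpha_n = n b_n + \sum_{\ell=1}^{n-1}\ell\,\alpha_{n-\ell}\,b_\ell$; dividing by $n$ and solving for $b_n$ reproduces \eqref{24}, with $n=1$ giving $b_1 = \alpha_1 = a_{1+m_0}/a_{m_0}$. For \eqref{25} I would specialize to $m_0 = 0$, $n=1$, so $b_1 = a_1/a_0$, and note that $\zeta(1;T_{A,B}) = \sum_{j}\lambda_{A,B,j}^{-1} = \Tr(T_{A,B}^{-1})$ by the defining series \eqref{2.65}, valid here because $0\notin\sigma(T_{A,B})$. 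I do not expect a genuine obstacle: the only real care is the index bookkeeping in the coefficient comparison, and checking that the branch of the logarithm is immaterial, which it is since $\ln g$ is given by a convergent power series with $g(0)=1$.
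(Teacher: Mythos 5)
Your proposal is correct and follows essentially the same route as the paper, whose own proof simply defers to that of \cite[Thm. 4.1]{FGKS21}: one factors $F_{A,B}(z)=a_{m_0}z^{m_0}g(z)$ using the fact that zeros of $F_{A,B}$ match eigenvalues with multiplicity, observes that the $m_0/z$ term contributes no residue for $n\in\N$, and identifies the $b_j$ as the Taylor coefficients of $\ln g$, whose recursion \eqref{24} is exactly the logarithm-of-a-power-series recursion the paper itself records in \eqref{4.7}--\eqref{4.9} (your derivation via $g'=g\cdot(\ln g)'$ is the standard way to obtain it). Nothing is missing; your writeup is a correct, self-contained version of the argument the paper cites.
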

 \begin{proof}
The proof is analogous to that of \cite[Thm. 4.1]{FGKS21}.
\end{proof}
We would like to point out that while the small-$z$ expansion of the characteristic function $F_{A,B}(z)$ of a given separated or
coupled self-adjoint extension can be obtained as described above through the small-$z$ expansion of the generalized boundary values of 
the fundamental system of solutions, this procedure appears to be a somewhat cumbersome exercise. In particular, because it involves computing the Volterra integral expansion of the corresponding regularized fundamental system of solutions.
Fortunately, for the majority of problems relevant in applications, it is not necessary to follow 
that procedure. In fact, very often, the characteristic function of a given self-adjoint extension is expressed 
in terms of well-known special functions and its small-$z$ expansion can be computed from those of its constituent 
special functions (which can be found in specialized monographs such as \cite{AS72} and \cite{DLMF}).

\section{Analytic continuation of the spectral \texorpdfstring{$\zeta$}{zeta}-function}\label{cont}

The process of analytic continuation of the spectral $\zeta$-function, which will allow us to extend the expression obtained 
in Theorem \ref{t3} from the strip $1/2<\Re(s)<1$ to the region of the complex plane to the left of $\Re(s)=1/2$, has been described
numerous times in the literature, most recently, for regular Sturm--Liouville operators under additional assumptions, in \cite[Subsection 3.3]{FGKS21}. 
The basic idea relies on the observation that the restriction $\Re(s)>1/2$ found for the integral representation \eqref{19a} arises
from the $z\to\infty$ behavior of the characteristic function given in\eqref{dfasymp}. This implies that in 
order to extend the expression for the spectral $\zeta$-function to a region of the complex plane to the left of $\Re(s)=1/2$, one can simply 
subtract, and then add, from the integrand in \eqref{19a}, terms of the $t\to\infty$ asymptotic expansion of 
$\ln[(te^{i\Psi})^{-m_0}F_{A,B}(te^{i\Psi})]$. This is exactly the analytic continuation process employed in 
\cite{FGKS21, FGK}. To describe this in more details, we introduce the following definition:

\begin{definition}\label{defNsmooth}
We call a $($quasi-$)$regular $($bounded from below$)$ Sturm--Liouville problem satisfying $(pr),(pr)'/r \in AC_{loc}((a,b))$ and $(pr)\big|_{(a,b)}>0$ such that the Liouville transformed potential $V$ given in \eqref{28} satisfies $V(\xi)\in C^N([\cA,\cB])$ for some $N\in\bbN_0$, \textit{$N$-smooth $($quasi-$)$regular}.
\end{definition}

Using Definition \ref{defNsmooth}, we now describe some of our previous results and illustrate when they can be easily extended to the quasi-regular setting. It was proven in \cite[Subsection 3.2]{FGKS21} that by using a Liouville transformation (see \eqref{26}--\eqref{28}) one can recast a regular Sturm--Liouville differential equation satisfying $(pr),\ (pr)'/r\in AC([a,b])$ and for some $\varepsilon >0$, $pr \geq \varepsilon$ 
on $[a,b]$
into a regular Schr\"odinger equation that is particularly suitable for the asymptotic analysis of its solutions. Furthermore, given an $N$-smooth regular problem,
the explicit large-$z$ asymptotic expansion (to order depending on $N$) of the fundamental set of solutions and, hence, of the characteristic function, can be found by employing the standard Liouville--Green (or WKBJ) method. To have the method give the correct asymptotic expansion to all orders, one must assume that the new potential is smooth on $[\cA,\cB]$.

In the quasi-regular case, one could attempt to follow the same process described above by first applying a regularization as in Theorem \ref{regularizing}. If one then assumes that the coefficients associated  with the regular problem satisfy $(PR),\ (PR)'/R\in AC([a,b])$ and for some $\varepsilon >0$, $PR \geq \varepsilon$ 
on $[a,b]$, one can proceed as in \cite[Subsection 3.2]{FGKS21}. In order to find additional terms of the asymptotic expansion, one must assume more regularity in the transformed coefficient functions. We now provide an example showing that there exist quasi-regular problems with an associated $N$-smooth regular problem (in fact, infinitely smooth).

\begin{example}
Consider the differential expression
\begin{align}\label{4.1}
\tau_{\mu}=-(x-a)^{(3\mu-1)/2}\dfrac{d}{dx}(x-a)^{(3-\mu)/2}\dfrac{d}{dx},\quad
\mu\in(0,1),\; x\in(a,b).
\end{align}
One readily verifies that $\tau_\mu$ is in the limit circle nonoscillatory case $($but not regular$)$ at $x=a$ and regular at $x=b$, and principal and nonprincipal solutions of $\tau_\mu u=0$ are given by
\begin{align}
u_{a,\mu} (0,x)=2(1-\mu)^{-1},\quad \hatt u_{a,\mu} (0,x)=(x-a)^{(\mu-1)/2},\quad
 \mu\in(0,1).
\end{align}
The generalized boundary values for $g \in \dom(T_{max,\mu})$ are then of the form
\begin{equation}
\wti g(a) =  \lim_{x \downarrow a} (x-a)^{(1-\mu)/2}g(x), \quad
\wti g^{\, \prime} (a) = 2^{-1}(1-\mu)\lim_{x \downarrow a} \big[g(x) - \wti g(a) (x-a)^{(\mu-1)/2}\big].
\end{equation}
Direct calculation yields the coefficients of the associated regular problem are
\begin{equation}
P(x)=(x-a)^{(1+\mu)/2}, \quad R(x)=(x-a)^{-(1+\mu)/2},\quad Q(x)=0,  \quad \mu\in(0,1),\; x\in(a,b).
\end{equation}
One easily verifies that $1/P, Q, R\in L^1((a,b);dx)$, $PR=1>0$ and $(PR)'/R=0$. Thus applying a Liouville transform $($see below with $k=a$$)$ results in the free Schr\"odinger equation
\begin{equation}\label{4.8}
- \frac{d^2}{d\xi^2} u (z,\xi)= zu(z,\xi), \quad \xi \in (0,\cB),\ \text{where }\ \cB=2\frac{(b-a)^{(1-\mu)/2}}{1-\mu},\; \mu\in(0,1).
\end{equation}
\end{example}

Notice that applying a Liouville transformation (choosing $k=a$ below) directly to \eqref{4.1} yields \eqref{4.8} as well. In fact, a direct calculation shows that one can obtain the potential in \eqref{28} in two equivalent ways. Namely, either by first regularizing and then performing a Liouville transformation (regardless of the regularizing function chosen), or by simply executing a Liouville transformation directly. This shows that some quasi-regular problems actually become regular simply by performing a Liouville transformation, making the initial regularization in Theorem \ref{regularizing} obsolete. This motivates the inclusion of the quasi-regular case in Definition \ref{defNsmooth}, and allows us to immediately extend the results proven in \cite[Subsection 3.2]{FGKS21} to $N$-smooth quasi-regular problems.

Let us now explicitly define a Liouville transformation (see \cite[Sect.~4.3]{Ev82}, \cite{Ev05}, and \cite[Sect. 3.5]{GNZ23}), by first assuming, in addition to Hypothesis \ref{h1}, that the 
coefficients of $\tau$ in \eqref{2.1} satisfy $(pr),(pr)'/r \in AC_{loc}((a,b))$ and $(pr)\big|_{(a,b)}>0$. One then defines
\begin{align}\label{26}
&\xi(x)=\int_k^x dt\ [r(t)/p(t)]^{1/2},\quad k\in [a,b],   \\
& \cA:=-\int_a^k dt\ [r(t)/p(t)]^{1/2}>-\infty,\quad\text{and}\quad \cB:=\int_k^b dt\ [r(t)/p(t)]^{1/2}<\infty,\\
&u(z,\xi)=[p(x(\xi))r(x(\xi))]^{1/4}y(z,x(\xi)), 
\end{align}
to recast the quasi-regular equation $\tau y(z,x)=z y(z,x)$ with $x \in (a,b)$ into the form 
\begin{align}\lb{27}
- \frac{d^2}{d\xi^2} u (z,\xi)+V(\xi)u(z,\xi)= zu(z,\xi), \quad \xi \in (\cA,\cB)\subset\R .
\end{align}
Notice that the choices $k=a$ or $k=b$ are allowed by the quasi-regular assumption, which also ensures the new interval $(\cA,\cB)$ is finite. The transformed potential $V(\xi)$ can be found to be 
\begin{align}\label{28}
\begin{split}
V(\xi)&=-\dfrac{1}{16}\dfrac{1}{p(x)r(x)}\left[\dfrac{(p(x)r(x))^{\,\prime}}{r(x)}\right]^2+\dfrac{1}{4}\dfrac{1}{r(x)} \left[\dfrac{(p(x)r(x))^{\,\prime}}{r(x)}\right]^{\,\prime} + \dfrac{q(x)}{r(x)}. 
\end{split}
\end{align}
Because of the additional conditions $(pr),(pr)'/r \in AC_{loc}((a,b))$ and $(pr)\big|_{(a,b)}>0$, the potential satisfies $V(\xi)\in L^1_{loc}((\cA,\cB);d\xi)$. This local integrability condition, however, does not prevent the potential $V(\xi)$ from diverging at the necessarily finite endpoints 
of the interval $(\cA,\cB)$. Thus without making further assumptions (as described above), the standard Liouville--Green asymptotic analysis cannot be carried out and one has to 
rely on more advanced asymptotic analysis techniques such as singular perturbation theory, matched asymptotic expansions, and multiple scale methods (see e.g. \cite{BO78,Eck79,Hol95,Lag88}). A suitable method for finding the asymptotic expansion of the fundamental system of solutions
can be selected only once a specific quasi-regular Sturm--Liouville operator is given. In other words, unlike in the $N$-smooth (quasi-)regular problems where 
an asymptotic expansion for the characteristic function can be found to higher order depending on $N$, the general regular and quasi-regular cases require, in principle, an {\it ad-hoc} asymptotic analysis for each particular form that the operator takes.   

Because of the remark above we cannot outline in detail the analytic continuation of the spectral $\zeta$-function for a general quasi-regular problem. Nevertheless, it is possible to provide important remarks about the procedure. 
According to \eqref{dfasymp} the leading term of the $z\to\infty$ asymptotic expansion of the derivative of
$\ln\,F_{A,B}(z)$ is the same for all self-adjoint extensions and depends only on the coefficients $r(x)$ and $p(x)$ of the differential expression \eqref{2.1}. The subleading terms of the expansion, however, cannot be determined by the results of Proposition \ref{Prop:lnF}. 

To elaborate on this point, according to the proof of Proposition \ref{Prop:lnF}, Weyl's asymptotic result in \eqref{WeylAsym} causes the large-$z$ asymptotic expansion of the derivative of the logarithm of the characteristic function $F_{A,B}(z)$ to be fully determined only up to a $o(z^{-1/2})$ function. This implies that the next term in the asymptotic expansion \eqref{lnF} could be a function that, as $z\to\infty$, approaches zero {\it slower} than an $O(z^{-1/2-\varepsilon})$ function with $\varepsilon>0$ arbitrarily small. For example, Proposition \ref{Prop:lnF}, with $\kappa=2$ to account for Weyl's result in \eqref{WeylAsym}, does not exclude the possibility that the next term in the expansion \eqref{lnF} is of the form $z^{-1/2}(\ln z)^{-\gamma}$, with $\gamma>0$.
While this asymptotic behavior of the logarithmic derivative of the characteristic function might be possible, we are not aware of any Sturm-Liouville problem in which it actually occurs. 
For this reason we will continue our analysis under the following  assumption.
\begin{assumption}\label{asymp0}
The derivative of $\ln\,F_{A,B}(z)$ has the large-$z$ asymptotic expansion to some order $N \in \N$ 
\begin{equation}\label{FAsym}
-\frac{d}{dz}\ln\,F_{A,B}(z)=\frac{i}{2}\cAB 
 z^{-1/2}+\sum_{n=1}^{N}c_{n}(A,B)\omega_{n}(z)+o(\omega_{N}(z)),
\end{equation}
where $\cAB = \int_a^b \sqrt{\frac{r(x)}{p(x)}} dx$ and $\omega_{n}(z)$ represents a suitable asymptotic sequence for $z\to\infty$ in any sector not containing the real line such that $\omega_{1}(z)=o(z^{-1/2-\varepsilon})$, with $\varepsilon>0$. In addition, for $C>0$, the elements $\omega_{n}(z)$ satisfy the integrability condition 
\begin{equation}
t^{-s}\omega_{n}(t e^{i\Psi})\in L^{1}((C,\infty);dt),
\end{equation}
for $\Re(s)>s_n$, with $s_{n+1}< s_{n}$, for $n\in\N_0$, and $s_0=1/2$. 
\end{assumption}
For instance, in the case of $N$-smooth (quasi-)regular Sturm--Liouville operators the appropriate asymptotic sequence, according to the Liouville--Green approximation \cite[Eq. (3.82)]{FGKS21}, has only one scale (up to $N$) and is explicitly given by 
$\omega_n(z)=z^{-(n+1)/2},\ n\in\{1,2,\dots,N\}$. If the transformed potential is smooth on the given finite interval, then this holds for all $N\in\bbN$.

When Assumption \ref{asymp0} holds, the analytic continuation of the spectral $\zeta$-function to the left of the abscissa of convergence $\Re(s)=1/2$, is performed by subtracting, and then adding, to the integrand in \eqref{19a} the terms of the large-$z$ asymptotic expansion of $\ln\left[(te^{i\Psi})^{-m_0}F_{A,B}(te^{i\Psi})\right]$. In other words, the integrability assumption and the fact that $s_{n+1}< s_{n}$, for $n\in\N_{0}$, ensures that the more terms of the asymptotic expansion are subtracted, and then added, the more the convergence region of the integral \eqref{19a} moves to the left. This process leads to the expression 
\begin{equation}\label{32}
\zeta(s;T_{A,B})=\cZ(s;A,B)+\sum_{j=-1}^{N}\cH_{j}(s;A,B),
\end{equation}  
where 
\begin{align}\label{33}
\cZ(s;A,B)&=e^{is(\pi-\Psi)}\frac{\sin(\pi s)}{\pi}\int_{0}^{\infty}dt\,t^{-s}\Bigg\{\frac{d}{dt}\ln\Big[(te^{i\Psi})^{-m_0}F_{A,B}\left(te^{i\Psi}\right)\Big]\nonumber\\
&\quad-H(t-C)\Big[-\frac{i}{2}\cAB t^{-1/2}e^{i\Psi/2}-\frac{m_0}{t}+\sum_{n=1}^{N}c_{n}(A,B)\omega_{n}\left(te^{i\Psi}\right)\Big]\Bigg\},
\end{align}
is an entire function of $s$ for $s_N<\Re(s)<1$ with $H(t)$ denoting the Heaviside function and $C>0$ an arbitrary point.
The remaining functions $\cH_{j}(s;A,B)$ can be found to be
\begin{align}\label{34}
\cH_{-1}(s;A,B)&=-ie^{is(\pi-\Psi)}\frac{\sin(\pi s)}{\pi}\frac{\cAB e^{i\Psi/2}C^{-s+(1/2)}}{2s-1},\\ \label{35}
\cH_{0}(s;A,B)&=-m_0 e^{is(\pi-\Psi)}\frac{\sin(\pi s)}{\pi}\frac{C^{-s}}{s},\\ \label{36}
\cH_{n}(s;A,B)&=e^{is(\pi-\Psi)}c_{n}(A,B)\frac{\sin(\pi s)}{\pi}\int_{C}^{\infty}dt\,t^{-s}\omega_{n}(t e^{i\Psi}),\;\; n\in\{1,2,\dots,N\}.
\end{align}

We can provide, at this point, a few general remarks concerning the analytic continuation of the spectral $\zeta$-function, $\zeta(s;T_{A,B})$.
The term $\cH_{-1}(s;A,B)$ of the analytic continuation \eqref{32}, given in \eqref{34}, is {\it universal}, namely, it is the same for all self-adjoint extensions $T_{A,B}$ of $T_{min}$. In addition, by noting its simple meromorphic structure, one can prove the following:   
\begin{lemma}\label{pole}
Let Assumption \ref{asymp0} hold. Then the spectral $\zeta$-function, $\zeta(s;T_{A,B})$, associated with a self-adjoint extension $T_{A,B}$ of a quasi-regular Sturm--Liouville 
operator has a simple pole at the point $s=1/2$ with residue $\frac{1}{2\pi}\int_a^b \sqrt{\frac{r(x)}{p(x)}} dx$.  
\end{lemma}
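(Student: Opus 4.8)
The plan is to read off the singularity at $s=1/2$ directly from the decomposition \eqref{32}, in which $\zeta(s;T_{A,B})=\cZ(s;A,B)+\sum_{j=-1}^{N}\cH_{j}(s;A,B)$. The term $\cZ(s;A,B)$ is, by construction, holomorphic on the strip $s_N<\Re(s)<1$, and this strip contains the point $s=1/2$ because the ordering in Assumption \ref{asymp0} forces $s_N<\cdots<s_1<s_0=1/2$. Hence $\cZ$ is regular at $s=1/2$, and any pole there must originate from one of the finitely many correction terms $\cH_j$ in \eqref{34}--\eqref{36}.

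First I would rule out every term except $\cH_{-1}$. The term $\cH_0$ in \eqref{35} carries the factor $C^{-s}/s$, whose only pole sits at $s=0$, while $\sin(\pi s)$ does not vanish at $s=1/2$; thus $\cH_0$ is holomorphic at $s=1/2$ (trivially so if $m_0=0$). For $n\in\{1,\dots,N\}$, the integral defining $\cH_n$ in \eqref{36} converges for $\Re(s)>s_n$, and since $s_n<s_0=1/2$ by Assumption \ref{asymp0}, this integral is finite at $s=1/2$; as the prefactor $e^{is(\pi-\Psi)}\sin(\pi s)/\pi$ is entire, each $\cH_n$ is regular there. This isolates $\cH_{-1}$ in \eqref{34} as the sole source of the pole: its factor $(2s-1)^{-1}$ produces a pole at $s=1/2$, and since all remaining factors are holomorphic and nonzero there, the pole is genuinely simple.

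It then remains to compute the residue, which equals the residue of $\cH_{-1}$ at $s=1/2$. Writing $2s-1=2(s-1/2)$ and multiplying by $(s-1/2)$, I would evaluate the regular factors at $s=1/2$: one has $\sin(\pi/2)=1$, $C^{-s+1/2}\big|_{s=1/2}=1$, and $e^{is(\pi-\Psi)}\big|_{s=1/2}=e^{i\pi/2}e^{-i\Psi/2}=ie^{-i\Psi/2}$. Collecting the overall prefactor $-i$, the explicit phase $e^{i\Psi/2}$, the constant $\cAB=\int_a^b\sqrt{r(x)/p(x)}\,dx$, and the factor $1/2$ coming from $(s-1/2)/(2s-1)$, the phases cancel, $e^{-i\Psi/2}e^{i\Psi/2}=1$, and $(-i)(i)=1$, leaving the residue $\cAB/(2\pi)=\frac{1}{2\pi}\int_a^b\sqrt{r(x)/p(x)}\,dx$, exactly as claimed.

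I do not expect a genuine obstacle here: the statement is essentially a corollary of the decomposition \eqref{32}--\eqref{36}, reducing to bookkeeping followed by a one-line residue evaluation. The step most worth double-checking is the convergence of the $\cH_n$ integrals at $s=1/2$, which hinges on the \emph{strict} inequality $s_n<s_0=1/2$ guaranteed by Assumption \ref{asymp0}; without it an additional term could in principle contribute at $s=1/2$. A reassuring consistency check is the cancellation of the $\Psi$-dependent phases: since $\zeta(s;T_{A,B})$ cannot depend on the auxiliary branch-cut angle $\Psi\in(\pi/2,\pi)$, its residue must be $\Psi$-independent, and the computation confirms this, validating the phase bookkeeping in \eqref{34}.
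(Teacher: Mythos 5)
Your proof is correct and follows exactly the route the paper intends: the paper states Lemma \ref{pole} as an immediate consequence of the decomposition \eqref{32}--\eqref{36}, noting only the ``simple meromorphic structure'' of the universal term $\cH_{-1}$, and your write-up simply makes explicit the bookkeeping (regularity of $\cZ$, $\cH_0$, and $\cH_n$ at $s=1/2$ via the strict inequalities $s_n<s_0=1/2$) together with the residue evaluation of $\cH_{-1}$, including the correct cancellation of the $\Psi$-dependent phases yielding $\cAB/(2\pi)$. No gaps; the only implicit ingredient worth noting is that the residue is nonzero because $\cAB=\int_a^b\sqrt{r/p}\,dx>0$, which guarantees the pole is genuinely present.
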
 
Of course, this is also true for the spectral $\zeta$-function arising from regular Sturm--Liouville problems \cite{FGKS21}. The functions $\cH_{n}(s;A,B)$ in \eqref{36} are, according to Assumption \ref{asymp0}, entire for $\Re(s)>s_n$. However, once a suitable asymptotic sequence $\omega_n(z)$ is selected to describe the large-$z$ asymptotic expansion of the logarithmic derivative of $F_{A,B}(z)$ the integrals in \eqref{36} can generally be extended as  meromorphic functions, with possible branch cuts, to $\Re(s)\leq s_n$. It follows, then, that the functions $\cH_{n}(s;A,B)$ contain the information about all the possible poles and branch points of $\zeta(s;T_{A,B})$, provided Assumption \ref{asymp0} holds to all orders $N$ with $s_n \to -\infty$ as $n \to\infty$.

In the case of $N$-smooth regular Sturm--Liouville problems, the spectral $\zeta$-function develops only simple poles in a region of the complex plane depending on $N$ and it is always regular at the origin (see \cite{FGKS21}). The same meromorphic structure arises in the case of $N$-smooth quasi-regular problems described at the beginning of this section. 

In more general situations however this is no longer true and the analytically continued expression of the spectral $\z$-function can become much more involved. In fact, the spectral $\zeta$-function associated with the Laplace operator on conic manifolds exhibits an unusual structure (see e.g. \cite{Ki08} for an outline). Other examples of non-standard spectral $\zeta$-functions, possessing poles of order higher than one and branch points, can be found,
for instance, in \cite{BKD96,CZ97,Cog06} for the generalized cone and in \cite{FLP,Ki06} for some singular Sturm--Liouville operators. We further remark that if the spectral $\zeta$-function develops 
a pole at $s=0$ then the definition of the $\zeta$-regularized functional determinant needs to be modified (see \eqref{5.21} and \eqref{5.22}). Some proposals that extend the definition to this situation can also be found, for instance, in \cite{CZ04,Les98}. Furthermore, we refer to \cite{Har22} for a detailed explanation of the relation between the $\zeta$-regularized determinant and the Fredholm determinant under certain regularity assumptions.
 
As it is clear from \eqref{36}, the structure of $\zeta(s;T_{A,B})$ depends crucially on the asymptotic sequence $\omega_n(z)$ used to construct the asymptotic expansion of $F_{A,B}(z)$. If the asymptotic sequence consists entirely of inverse powers of $z$, then the corresponding $\zeta$-function has only simple poles. If the asymptotic sequence contains, instead, logarithmic terms, then the spectral $\zeta$-function develops poles 
of higher order and possibly branch points.
In the case of general Sturm--Liouville operators the asymptotic expansion of the solutions, and hence, of the characteristic function, must be
studied with non-standard asymptotic methods. The use of these methods may lead to logarithmic terms of the large parameter, also 
known as logarithmic switchback terms which become necessary when matching expansions in different regions (see e.g. \cite{chen96,Hol14,Lag84}).
An expansion with logarithmic terms of the large parameter has been considered, for instance, in physical applications, during the analytic 
continuation of the  spectral $\zeta$-function associated with a Laplace-type operator describing the Hartle-Hawking wave function of the universe. The spectral $\zeta$-function so obtained was used for the calculation of the one-loop graviton wave function of the DeSitter universe \cite{Bar92}.   

For a wide variety of quasi-regular Sturm--Liouville operators that are of interest 
in physics and applied mathematics we can express the solutions in terms of well-known special functions for which suitable asymptotic expansions 
can be found in the literature. From these, one can obtain an asymptotic expansion for the characteristic function according to the following procedure. 

At first, two, non-vanishing, linearly independent solutions, $u_{+}(z,x)$ and $u_{-}(z,x)$ of the differential equation associated with the quasi-regular Sturm--Liouville operator $T_{A,B}$ are found in terms of special functions. 
The fundamental system of solutions $\theta(z,x,a)$ and $\phi(z,x,a)$ can, in turn, be expressed in terms of the linearly independent solutions $u_{+}(z,x)$ and $u_{-}(z,x)$ by imposing the generalized boundary conditions \eqref{4}, that is
\begin{align}\label{29}
\phi(z,x,a)&=\frac{1}{\wti W(u_{+},u_{-})}\left[\wti u_{+}(z,a)u_{-}(z,x)-\wti u_{-}(z,a)u_{+}(z,x)\right],\nonumber\\
\theta(z,x,a)&=\frac{1}{\wti W(u_{+},u_{-})}\left[\wti u^{\, \prime}_{-}(z,a)u_{+}(z,x)-\wti u^{\, \prime}_{+}(z,a)u_{-}(z,x)\right].
\end{align}
The characteristic functions $F_{\a,\b}(z)$ and $F_{\varphi,R}(z)$ in \eqref{7} and \eqref{9} can then be explicitly written in terms of $u_{+}(z,x)$ and $u_{-}(z,x)$ as
\begin{align}\label{30}
F_{\alpha,\beta}(z)&=\frac{1}{\wti W(u_{+},u_{-})}\left[U_{b}(u_{-})(z)U_{a}(u_+)(z)-U_{b}(u_+)(z)U_{a}(u_-)(z)\right],\nonumber\\
F_{\varphi,R}(z)&=\frac{1}{\wti W(u_{+},u_{-})}\big[\wti u^{\, \prime}_{-}(z,a)V_{1}(u_+)(z)-\wti u^{\, \prime}_{+}(z,a)V_{1}(u_-)(z)
\nonumber\\
&\quad+\wti u_{+}(z,a)V_{2}(u_{-})(z)-\wti u_{-}(z,a)V_{2}(u_{+})(z)\big]+e^{2i\varphi}-1.
\end{align}
These expressions allow us to evaluate the large-$z$ asymptotic expansion of $F_{\a,\b}(z)$ and $F_{\varphi,R}(z)$ from the ones for $u_{+}(z,x)$ and $u_{-}(z,x)$.   

\section{Examples of quasi-regular Sturm--Liouville operators} \label{examples}
In this section, we illustrate our approach to the analytic continuation of the spectral $\zeta$-function developed in the previous section by utilizing some specific quasi-regular Sturm--Liouville operators. For all these examples, we will also explicitly compute the spectral $\zeta$-function values at positive integers by utilizing the results of Theorem \ref{t4}.

\subsection{The generalized Bessel Equation on \texorpdfstring{$(0,b)$}{(0,b)}} \lb{sub3.1}
\hfill

We start with the generalized Bessel equation following the analysis outlined in \cite{GNS21} (see also \cite{FGKLNS21,GLNPS21}).
Let $a=0$, $b\in(0,\infty)$, and consider
\begin{equation}\label{5.1}
\tau_{\d,\nu,\g} = x^{-\d}\left[-\frac{d}{dx}x^\nu\frac{d}{dx} +\frac{(2+\d-\nu)^2\g^2-(1-\nu)^2}{4}x^{\nu-2}\right],\quad
\d>-1,\; \nu<1,\; \g\geq0,\; x\in(0,b),
\end{equation}
which is singular at the endpoint $x=0$ unless $\gamma = (1-\nu)/(2+\d-\nu)$ (since the potential, $q$ is not integrable near $x=0$) and regular at $x=b$ when $b\in(0,\infty)$. Furthermore, $\tau_{\d,\nu,\g}$ is in the limit circle case at $x=0$ if $0\leq\g<1$ and in the limit point case at $x=0$ when $\g\geq1$.

\begin{remark}\lb{r3.1}
For a detailed discussion of the limit circle/limit point dichotomy and oscillatory behavior of $\tau_{\d,\nu,\g}$ at $x=0$ with more general parameter choices, we refer to \cite[Sect. 8.4]{GNZ23}. (The case of an infinite interval is also examined there.) In particular, it is shown that $\tau_{\d,\nu,\g}$ is nonoscillatory at $x=0$ $($hence, the associated minimal operator, $T_{min,\d,\nu,\g}$, is bounded from below$)$ if $2+\d-\nu>0$ and $\g\in[0,\infty)$. A convenient choice of parameters is then $\d>-1$ and $\nu<1$  since it allows one to recover the standard Bessel differential expression by setting $\d=0=\nu$ and to have continuity in the boundary conditions when considering the regular case
$\gamma = (1-\nu)/(2+\d-\nu)$ as seen below.
\hfill$\diamond$
\end{remark}

Solutions to $\tau_{\d,\nu,\g} u=zu$ are given by (cf.\ \cite[No.~2.162, p.~440]{Ka61})
\begin{align}
y_{1,\d,\nu,\g}(z,x)&=x^{(1-\nu)/2} J_{\gamma}\big(2z^{1/2} x^{(2+\d-\nu)/2}/(2+\d-\nu)\big),\quad \g\geq0,\\
y_{2,\d,\nu,\g}(z,x)&=\begin{cases}
x^{(1-\nu)/2} J_{-\gamma}\big(2z^{1/2} x^{(2+\d-\nu)/2}/(2+\d-\nu)\big), & \g\notin\bbN_0,\\
x^{(1-\nu)/2} Y_{\g}\big(2z^{1/2} x^{(2+\d-\nu)/2}/(2+\d-\nu)\big), & \g\in\bbN_0,
\end{cases}\ \g\geq0,
\end{align}
where $J_{\mu}(\dott), Y_{\mu}(\dott)$ are the standard Bessel functions of order $\mu \in \bbR$ 
(cf.\ \cite[Ch.~9]{AS72}).

We now introduce principal and nonprincipal solutions $u_{0,\d,\nu,\g}(0, \dott)$ and $\hatt u_{0,\d,\nu,\g}(0, \dott)$ of $\tau_{\d,\nu,\g} u = 0$ at $x=0$ by
\begin{align}
\begin{split} 
u_{0,\d,\nu,\g}(0, x) &= (1-\nu)^{-1}x^{[1-\nu+(2+\d-\nu)\g]/2}, \quad \gamma \in [0,\infty),   \\
\hatt u_{0,\d,\nu,\g}(0, x) &= \begin{cases} (1-\nu)[(2+\d-\nu) \gamma]^{-1} x^{[1-\nu-(2+\d-\nu)\g]/2}, & \gamma \in (0,\infty),     \lb{3.5} \\
(1-\nu)x^{(1-\nu)/2} \ln(1/x), & \gamma =0,  \end{cases}\\
&\hspace*{4.5cm} \d>-1,\; \nu<1,\; x \in (0,b).
\end{split} 
\end{align}
The generalized boundary values for $g \in \dom(T_{max,\d,\nu,\g})$ are then of the form
\begin{align}\lb{GBC}
\wti g(0)
&= \begin{cases} \lim_{x \downarrow 0} g(x)\big/\big[(1-\nu)[(2+\d-\nu) \gamma]^{-1} x^{[1-\nu-(2+\d-\nu)\g]/2}\big], & 
\gamma \in (0,1), \\[1mm]
\lim_{x \downarrow 0} g(x)\big/\big[(1-\nu)x^{(1-\nu)/2} \ln(1/x)\big], & \gamma =0, 
\end{cases} \\
\wti g^{\, \prime} (0)
&= \begin{cases} \lim_{x \downarrow 0} \big[g(x) - \wti g(0) (1-\nu)[(2+\d-\nu) \gamma]^{-1} x^{[1-\nu-(2+\d-\nu)\g]/2}\big]\\
\qquad\qquad\big/\big[(1-\nu)^{-1}x^{[1-\nu+(2+\d-\nu)\g]/2}\big], 
& \hspace{-.2cm}\gamma \in (0,1), \lb{GBC1}\\[1mm]
\lim_{x \downarrow 0} \big[g(x) - \wti g(0) (1-\nu)x^{(1-\nu)/2} \ln(1/x)\big]\\
\qquad\qquad\big/\big[(1-\nu)^{-1}x^{(1-\nu)/2}\big], & \hspace{-.2cm} \gamma =0.
\end{cases}
\end{align}

\begin{remark}
The coefficient $(1-\nu)^{-1}$ in \eqref{3.5} has been chosen so that the generalized boundary conditions \eqref{GBC} and \eqref{GBC1} reduce in the regular case, that is, in the case $\d > -1$, $\nu < 1$, and 
$\gamma = (1-\nu)/(2+\d-\nu)$ treated in \cite{EZ78}, to the ordinary boundary conditions consisting of the value of the function and its derivative at $x=0$. 
\hfill $\diamond$ 
\end{remark}

Since we are interested in the quasi-regular problem, we now focus on the case in which $\g\in[0,1)$. The normalized (at $x=0$) fundamental system of solutions is now explicitly given by
\begin{align}
& \phi_{\d,\nu,\g}(z,x,0) = (1-\nu)^{-1}(2+\d-\nu)^\gamma\Gamma(1+\g) z^{- \gamma/2} 
y_{1,\d,\nu,\g}(z,x),\quad \gamma \in [0,1),      \\
& \theta_{\d,\nu,\g}(z,x,0) = \begin{cases} (1-\nu)(2+\d-\nu)^{-\gamma - 1} \gamma^{-1} \Gamma(1 - \gamma) 
z^{\gamma/2} y_{2,\d,\nu,\g}(z,x), & \hspace{-.1cm} \gamma \in (0,1), \\[1mm]
(1-\nu)(2+\d-\nu)^{-1} [- \pi y_{2,\d,\nu,0}(z,x)\\
\quad+(\ln z-2\ln(2+\d-\nu)+2\g_E) y_{1,\d,\nu,0}(z,x)], & \hspace{-.1cm} \gamma =0, 
\end{cases}       \\ 
& \hspace*{6.2cm} \d>-1,\; \nu<1,\; z \in \bbC, \; x \in (0,b), \no
\end{align}
where $\Gamma(\dott)$ denotes the gamma function, and $\gamma_{E}$ represents Euler's constant. As the point $x=b$ is a regular endpoint, recalling \eqref{A10}, the generalized boundary conditions at $x=b$ reduce to regular ones, so the characteristic functions given in \eqref{7} and \eqref{9} are simply evaluated at $x=b$.

We will now outline the analytic continuation of the spectral $\zeta$-function associated with the generalized Bessel operator.
It is important to point out that all the self-adjoint extensions with $\alpha=0$ give rise to a characteristic function that has 
an asymptotic expansion similar to the one obtained in the $N$-smooth regular case. This implies that the corresponding spectral $\zeta$-function can be analytically continued to $\C$ to a meromorphic function with only simple poles. This was observed for the standard Bessel equation (i.e., $\d=0=\nu$) in \cite{BS87,Ki06}.
All self-adjoint extensions with $\alpha\neq 0$ and $\gamma=0$ generate, instead, large-$z$ asymptotic expansions containing logarithmic terms which lead to spectral $\zeta$-functions that develop a branch point.
For the sake of simplicity, we will consider the separated boundary conditions with $\alpha=\pi/2$, $\beta=0$, and $\gamma=0$. 
This choice is simple enough to be analyzed with straightforward calculations but also quite interesting since the ensuing characteristic function exhibits the non-standard large-$z$ expansion just mentioned.  
In this case the characteristic function reduces to 
\begin{align}\label{ex1}
 F_{\frac{\pi}{2},0,\d,\nu,0}(z)&=\frac{2(1-\nu)b^{(1-\nu)/2}}{2+\d-\nu}\Big[\frac{\pi}{2}Y_{0}\big(2z^{1/2} b^{(2+\d-\nu)/2}/(2+\d-\nu)\big)\nonumber\\
&\quad-\big(\ln z^{1/2}-\ln(2+\d-\nu)+\g_E\big)J_{0}\big(2z^{1/2} b^{(2+\d-\nu)/2}/(2+\d-\nu)\big)\Big].
\end{align}
For the purpose of finding the large-$z$ asymptotic expansion of $F_{\frac{\pi}{2},0,\d,\nu,0}(z)$, it is convenient to rewrite \eqref{ex1} in 
terms of Hankel functions
\begin{align}\label{ex2}
F_{\frac{\pi}{2},0,\d,\nu,0}(z)&=\frac{(\nu-1)b^{(1-\nu)/2}}{2+\d-\nu}\bigg[\left(\ln\,z^{1/2}-\ln(2+\d-\nu)+\g_E+\frac{i\pi}{2}\right)H^{(1)}_{0}\big(2z^{1/2} b^{(2+\d-\nu)/2}/(2+\d-\nu)\big)\nonumber\\
&\quad+\left(\ln\,z^{1/2}-\ln(2+\d-\nu)+\g_E-\frac{i\pi}{2}\right)H^{(2)}_{0}\big(2z^{1/2} b^{(2+\d-\nu)/2}/(2+\d-\nu)\big)\bigg].
\end{align}
According to \cite[Eqs. 10.2.5 and 10.2.6]{DLMF} the Hankel function $H_{\mu}^{(2)}(z)$ is dominant compared to $H_{\mu}^{(1)}(z)$ for 
$z\to\infty$ when $\Im(z)>0$. This implies that for large values of $z$ with $\Im(z^{1/2})>0$ we have
\begin{align}\label{ex3}
\ln\, F_{\frac{\pi}{2},0,\d,\nu,0}(z)&=C_{\d,\nu}+
\ln\left(\ln\,z^{1/2}-\ln(2+\d-\nu)+\g_E-\frac{i\pi}{2}\right)-i\frac{2b^{(2+\d-\nu)/2}}{2+\d-\nu}z^{1/2}\nonumber\\
&\quad -\frac{1}{2}\ln\,z^{1/2}+\ln\left[1+\sum_{k=1}^{N}(-i)^{k}\left(\frac{2b^{(2+\d-\nu)/2}}{2+\d-\nu}\right)^{-k}\frac{a_k}{z^{k/2}}\right]+O(z^{-(N+1)/2}),
\end{align}
where $C_{\d,\nu}$ is a constant and we have used the large-$z$ asymptotic expansion for $H_{\mu}^{(2)}(z)$ provided in \cite[Eq. 10.17.6]{DLMF} with 
\begin{equation}
a_{k}=\frac{\Gamma((1/2)+k)^{2}}{(-2)^{k}\pi k!}.
\end{equation} 

From \eqref{ex3} is not difficult to obtain the large-$z$ asymptotic expansion needed for the analytic continuation of the spectral $\zeta$-function, that is
\begin{align}\label{ex4}
\frac{d}{dt}\ln\, F_{\frac{\pi}{2},0,\d,\nu,0}\left(te^{i\Psi}\right)&=
\left[2t\left(\ln\, t^{1/2}-\ln(2+\d-\nu)+\g_E-\frac{i}{2}(\pi-\Psi)\right)\right]^{-1}\nonumber\\
&\quad-i\frac{b^{(2+\d-\nu)/2}}{2+\d-\nu}e^{i\Psi/2}t^{-1/2} -\frac{1}{4}t^{-1}-\sum_{j=1}^{N}\frac{j\bar{a}_{j}}{2}t^{-(2+j)/2}+O(t^{-(N+3)/2}),
\end{align}
where the coefficients $\bar{a}_{j}$ are determined through the formal asymptotic expansion by equating like powers of $t$
\begin{equation}
\ln\left[1+\sum_{k=1}^{\infty}(-i)^{k}\left(\frac{2b^{(2+\d-\nu)/2}}{2+\d-\nu}\right)^{-k}a_ke^{-ik\Psi/2}t^{-k/2}\right]
=\sum_{j=1}^{\infty}\bar{a}_{j}t^{-j/2}.
\end{equation}
In particular, the coefficients can be found by employing the fact
\begin{align}\lb{4.7}
    \ln\left(1+\sum_{m=1}^\infty c_m y^m\right) = \sum_{m=1}^\infty d_m y^m, \quad 0\leq |y|\text{ sufficiently small},
\end{align}
where
\begin{equation}\lb{4.9}
d_1=c_1, \quad d_j=c_j-\sum_{\ell=1}^{j-1} (\ell /j) c_{j-\ell}d_{\ell},\quad j \in \bbN, \; j\geq 2.
\end{equation}

One can explicitly see in \eqref{ex4} that the large-$z$ asymptotic expansion contains a logarithmic term which is responsible for the appearance of a branch point of the spectral $\zeta$-function at $s=0$.  By subtracting, and then adding, $N$ terms of the asymptotic expansion \eqref{ex4}
to the integrand in \eqref{19a} we obtain (cf. \cite[Sect. 3.3]{FGKS21})
\begin{equation}\label{ex5}
\zeta(s;T_{\frac{\pi}{2},0,\d,\nu,0})=\cZ(s)-e^{is(\pi-\Psi)}\frac{\sin(\pi s)}{\pi}\Bigg[i\left(\frac{2b^{(2+\d-\nu)/2}}{2+\d-\nu}\right)\frac{e^{i\Psi/2}}{2s-1}+\frac{1}{4s}+\sum_{j=1}^{N}\frac{j\bar{a}_{j}}{2s+j}-e^{2s\mu}\textrm{E}_{1}(2s\mu)\Bigg],
\end{equation} 
where 
\begin{align}
\cZ(s)&=e^{is(\pi-\Psi)}\frac{\sin(\pi s)}{\pi}\int_{0}^{\infty}dt\,t^{-s}\Bigg\{\frac{d}{dt}\ln\,F_{\frac{\pi}{2},0,\d,\nu,0}\left(te^{i\Psi}\right)-H(t-1)\Bigg[\left[2t\left(\ln\, t^{1/2}+\mu\right)\right]^{-1}\nonumber\\
&\quad-i\frac{b^{(2+\d-\nu)/2}}{2+\d-\nu}e^{i\Psi/2}t^{-1/2} -\frac{1}{4}t^{-1}-\sum_{j=1}^{N}\frac{j\bar{a}_{j}}{2}t^{-(2+j)/2}\Bigg]\Bigg\},
\end{align}
is entire for $-(N+1)/2<\Re(s)<1$, $\textrm{E}_{1}(z)$ is the exponential integral function (see \cite[Sect. 6.2]{DLMF}) and we have introduced, for brevity, the notation $\mu=-\ln(2+\d-\nu)+\g_E-[i(\pi-\Psi)/2]$.
The result \eqref{ex5} renders the structure of the spectral $\zeta$-function manifest. As we expect, according to Lemma \ref{pole}, 
$\zeta(s;T_{\frac{\pi}{2},0,\d,\nu,0})$ has a simple pole at $s=1/2$. In addition, simple poles appear at the points $s=-(2j-1)/2$, $j\in\N$, which is also possible in the typical smooth regular case. In a neighborhood of $s=0$ we have \cite[Eq. 6.6.2]{DLMF}
\begin{equation}
\textrm{E}_{1}(2s\mu)=-\ln(2s\mu)-\g_E-\sum_{n=1}^{\infty}\frac{(-1)(2s\mu)^{n}}{n\,n!},
\end{equation}
and, hence, we can conclude that
\begin{equation}
\zeta(s;T_{\frac{\pi}{2},0,\d,\nu,0})=-s\,\ln\,s+O(1),\quad s\to0.
\end{equation}
The last result implies that in this case the spectral $\zeta$-function of the generalized Bessel operator with $\gamma=0$ develops a branch point at the origin as has already been observed in \cite{Ki06} for the special case $\delta=0=\nu$ in \eqref{5.1}. In particular, this means the $\zeta$-regularized functional determinant $\det(T_{\frac{\pi}{2},0,\d,\nu,0})=\exp(-\frac{d}{ds}\zeta(s;T_{\frac{\pi}{2},0,\d,\nu,0})|_{s=0})$ is not defined in this case. Following \cite{Ki06}, one can instead introduce a notion of regularized determinant by utilizing a modified $\z$-function
\begin{equation}\label{5.21}
\zeta_{\textrm{reg}}(s;T_{\frac{\pi}{2},0,\d,\nu,0}):=\zeta(s;T_{\frac{\pi}{2},0,\d,\nu,0})+s\,\ln s,
\end{equation}
which has a well-defined derivative at $s=0$. In terms of \eqref{5.21}, one can define the regularized determinant as
\begin{equation}\label{5.22}
{\det}_{\textrm{reg}}(T_{\frac{\pi}{2},0,\d,\nu,0}):=\exp(-\tfrac{d}{ds}\zeta_{\textrm{reg}}(s;T_{\frac{\pi}{2},0,\d,\nu,0})|_{s=0}).
\end{equation}
In particular, a straightforward application of the analysis in \cite[Sect. 9]{Ki06} yields the generalization of \cite[Thm. 1.6]{Ki06} to the generalized Bessel equation considered here.

We now present a computation of the value of the spectral $\zeta$-function associated with the generalized Bessel operator at positive integers by exploiting Theorem \ref{t4}.
The small-$z$ asymptotic expansion of $F_{\a,\b}(z)$ and $F_{\varphi,R}(z)$, needed in order to apply Theorem \ref{t4} to this particular example, can be obtained by using the small-$z$ asymptotic expansions of the Bessel functions $J_{\g}(\dott)$ and $Y_{0}(\dott)$ (see \cite[Eq. 9.1.13]{AS72} for $Y_{0}(\dott)$). By denoting with $H_{k}$ the $k$-th harmonic number \cite[Eq. 25.11.33]{DLMF},
one finds
\begin{align}
\phi_{\d,\nu,\g}(z,b,0)&=\sum_{j=0}^\infty \dfrac{(-1)^jb^{(2+\d-\nu)j+[1-\nu+\g(2+\d-\nu)]/2}\Gamma(1+\g)}{(1-\nu)(2+\d-\nu)^{2j}j!\Gamma(\gamma+j+1)}z^j,  \no  \\
\phi^{[1]}_{\d,\nu,\g}(z,b,0)&=\sum_{j=0}^\infty \dfrac{(-1)^j[2(2+\d-\nu)j+1-\nu+\g(2+\d-\nu)]\Gamma(1+\g)}{2(1-\nu)(2+\d-\nu)^{2j}j!\Gamma(\gamma+j+1)}  \no  \\
&\hspace*{1.2cm} \times b^{(2+\d-\nu)j+[\nu-1+\g(2+\d-\nu)]/2} z^j,    \no \\
\theta_{\d,\nu,\g}(z,b,0)&=\begin{cases}
\displaystyle \sum_{j=0}^\infty \dfrac{(-1)^j (1-\nu)b^{(2+\d-\nu)j+[1-\nu-\g(2+\d-\nu)]/2}\Gamma(1-\g)}{\g (2+\d-\nu)^{2j+1}j!\Gamma(j+1-\g)}z^j,\\[2mm]
\hspace*{6.85cm}\g\in(0,1),\\[2mm]
\displaystyle \sum_{j=0}^\infty \dfrac{(-1)^j (1-\nu)[2H_j-(2+\d-\nu)\ln(b)]b^{(2+\d-\nu)j+(1-\nu)/2}}{ (2+\d-\nu)^{2j+1}(j!)^2}z^j,\\[2mm]
\hspace*{8.4cm}\g=0,
\end{cases}  \no    \\
\theta^{[1]}_{\d,\nu,\g}(z,b,0)&=
\begin{cases}
\displaystyle \sum_{j=0}^\infty \dfrac{(-1)^j[2(2+\d-\nu)j+1-\nu-\g(2+\d-\nu)]\Gamma(1-\g)}{2\g(1-\nu)^{-1}(2+\d-\nu)^{2j+1}j!\Gamma(j+1-\g)}  \\[2mm]
\qquad \times b^{(2+\d-\nu)j+[\nu-1-\g(2+\d-\nu)]/2} z^j,\quad \g\in(0,1),\\[2mm]
\displaystyle \sum_{j=0}^\infty \bigg(\dfrac{[2(2+\d-\nu)j+1-\nu][2H_j-(2+\d-\nu)\ln(b)]}{ 2(2+\d-\nu)}-1\bigg)\\[2mm] \qquad \times \dfrac{(-1)^j(1-\nu)b^{(2+\d-\nu)j+(\nu-1)/2}}{(2+\d-\nu)^{2j}(j!)^2} z^j,\quad \g=0,\\[2mm]
\end{cases}  \no\\
&\hspace{4cm} \d>-1,\; \nu<1,\; \gamma \in [0,1),\; z\in\bbC,   \lb{3.15}
\end{align}
The expansions for the quasi-derivatives can be obtained by simply differentiating term-by-term the original expansions. 

Although term-by-term differentiation of the asymptotic expansion of a function does not provide, in general, an asymptotic expansion for its derivative, this procedure is justified in this case since the asymptotic series is given through a convergent power series $($see, for example, \cite[Ch.~1, Section 8]{Ol97}$)$.
Due to length of the equations involved, we only give $\zeta(1;T_{A,B,0,0,0})$ here, that is, we briefly focus on the case when $\d=\nu=\g=0$. For separated boundary conditions 
one then obtains
\begin{align}
\begin{split}
a_{0}&=\frac{1}{2b^{1/2}}\big[2b\cos(\b)\left(\cos(\a)+\sin(\a)\ln(b)\right)-\sin(\b)\left(\cos(\a)+(2+\ln(b))\sin(\a)\right)\big],\\
a_{1}&=\frac{b^{3/2}}{8}\big[\sin (\b ) (5\cos(\a )+\sin(\a )(5\ln(b)-3))-2b\cos(\b )(\cos(\a )+\sin(\a )(\ln(b)-1))\big],\\
a_{2}&=\frac{b^{7/2}}{256} \big[2b\cos(\b)(2\cos(\a )+\sin(\a)(2\ln(b)-3))+\sin(\b)(\sin(\a)(23-18\ln(b))-18\cos(\a ))\big].
\end{split}
\end{align}
If $T_{\a,\b,0,0,0}$ does not have a zero eigenvalue, then $a_{0}\neq 0$ and, hence, one finds according to \eqref{25},
\begin{equation}\lb{3.17}
\zeta(1;T_{\a,\b,0,0,0})=\frac{b^2 [2 b \cos (\b ) (\cos (\a )+\sin (\a ) (\ln (b)-1))+\sin (\b ) (\sin (\a ) (3-5 \ln (b))-5 \cos (\a ))]}{8 b \cos (\b ) (\cos (\a )+\sin (\a ) \ln (b))-4 \sin (\b ) (\cos (\a )+\sin (\a ) (\ln (b)+2))}.
\end{equation}
If, instead, $T_{\a,\b,0,0,0}$ has a zero eigenvalue then $a_{0}=0$ and one finds
\begin{equation}\lb{3.18}
\zeta(1;T_{\a,\b,0,0,0})=
\frac{b^2 [2 b \cos (\b ) (2 \cos (\a )+\sin (\a ) (2 \ln (b)-3))+\sin (\b ) (\sin (\a ) (23-18 \ln (b))-18 \cos (\a ))]}{64 b \cos (\b ) (\cos (\a )+\sin (\a ) (\ln (b)-1))-32 \sin (\b ) (5 \cos (\a )+\sin (\a ) (5 \ln (b)-3))}.
\end{equation} 

For the case of coupled boundary conditions one finds, instead,
\begin{align}
\begin{split}
a_{0}&=-\frac{e^{i\varphi}}{2 b^{1/2}}\left[\ln(b) (R_{12}-2b R_{22})-2b R_{21}-4 b^{1/2} \cos (\varphi )+R_{11}+2R_{12}\right],\\
a_{1}&=\frac{e^{i\varphi}b^{3/2}}{8} \left[\ln (b) (5 R_{12}-2 b R_{22})-2 b R_{21}+2 b R_{22}+5R_{11}-3R_{12}\right],\\
a_{2}&=\frac{e^{i\varphi}b^{7/2}}{256} \left[\ln(b)(4 b R_{22}-18R_{12})+4 b R_{21}-6 b R_{22}-18 R_{11}+23 R_{12}\right].
\end{split}
\end{align}
If zero is not an eigenvalue of $T_{\varphi,R,0,0,0}$, $a_{0}\neq0$ and one finds
\begin{equation}
\zeta(1;T_{\varphi,R,0,0,0})=\frac{b^2 \big[\ln (b) (5 R_{12}-2 b R_{22})-2 b( R_{21}- R_{22})+5 R_{11}-3 R_{12}\big]}{4 \left[\ln (b) (R_{12}-2 b R_{22})-2 b R_{21}-4 b^{1/2} \cos (\varphi )+R_{11}+2 R_{12}\right]}.
\end{equation}
If, on the other hand, zero is an eigenvalue of $T_{\varphi,R,0,0,0}$ with multiplicity one, then $a_{0}=0$ and 
\begin{align}
\begin{split}
\zeta(1;T_{\varphi,R,0,0,0})=\frac{b^2 \big[2 \ln (b) (9 R_{12}-2 b R_{22})-2 b(2 R_{21}-3 R_{22})+18 R_{11}-23 R_{12}\big]}{32 \big[\ln(b) (5 R_{12}-2 b R_{22})-2 b R_{21}+2 b R_{22}+5 R_{11}-3 R_{12}\big]}.
\end{split}
\end{align}

The case in which zero is an eigenvalue of $T_{\varphi,R,0,0,0}$ with multiplicity two can be analyzed with the help of the Krein--von Neumann extension (see \cite{AGMT10}).
This self-adjoint extension, denoted by $T_{K,\d,\nu,\g}$ (see \cite[Example 4.1]{FGKLNS21}), is realized by imposing
the following coupled boundary conditions: $\varphi=0$ and
\begin{align}
R_{K,\d,\nu,\g}&=
  \left( {\begin{array}{cc}
   \theta_{\d,\nu,\g}(0,b,0) & \phi_{\d,\nu,\g}(0,b,0) \\
   \theta^{[1]}_{\d,\nu,\g}(0,b,0) & \phi^{[1]}_{\d,\nu,\g}(0,b,0) \\
  \end{array} } \right)\\
&=\begin{cases}
b^{[\nu-1-(2+\d-\nu)\g]/2}\\
\quad \times\begin{pmatrix}  \dfrac{1-\nu}{(2+\d-\nu)\g}b^{1-\nu} & \dfrac{1}{1-\nu}b^{1-\nu+(2+\d-\nu)\g} \\
\dfrac{(1-\nu)^2}{2(2+\d-\nu)\g}-\dfrac{1-\nu}{2} & \left[\dfrac{1}{2}+\dfrac{(2+\d-\nu)\g}{2(1-\nu)}\right] b^{(2+\d-\nu)\g}
\end{pmatrix},     \\
\hfill \g\in(0,1), \\[1mm]
\begin{pmatrix}  (1-\nu)\ln(1/b)b^{(1-\nu)/2} & \dfrac{1}{1-\nu}b^{(1-\nu)/2} \\
\dfrac{(1-\nu)^2\ln(1/b)-2(1-\nu)}{2}b^{(\nu-1)/2} & \dfrac{1}{2}b^{(\nu-1)/2}
\end{pmatrix},
\hfill \g=0.
\end{cases}
\end{align}
Using these boundary conditions leads to the characteristic function
\begin{align}\label{3.47}
F_{K,{\d,\nu,\g}}(z)=-2\left(D_{\d,\nu,\g}(z,b)-1\right),
\end{align}
where, for brevity, we introduced $D_{\d,\nu,\g}(z,b)$ as
\begin{align}
\begin{split}\lb{3.48}
D_{\d,\nu,\g}(z,b)&=\big[\phi^{[1]}_{\d,\nu,\g}(0,b,0)\theta_{\d,\nu,\g}(z,b,0)+\theta_{\d,\nu,\g}(0,b,0)\phi^{[1]}_{\d,\nu,\g}(z,b,0)\\
&\quad\,-\phi_{\d,\nu,\g}(0,b,0)\theta^{[1]}_{\d,\nu,\g}(z,b,0)-\theta^{[1]}_{\d,\nu,\g}(0,b,0)\phi_{\d,\nu,\g}(z,b,0)\big]/2.
\end{split}
\end{align}

Utilizing the expansions \eqref{3.15} in equation \eqref{3.48} yields
\begin{align}
D_{\d,\nu,\g}(z,b)&=\begin{cases}
\displaystyle 1+\dfrac{1}{2}\sum_{j=1}^{\infty}\dfrac{(-1)^j b^{(2+\d-\nu)j}}{(2+\d-\nu)^{2j}j!}\bigg[\dfrac{\Gamma(\g)}{\Gamma(j+\g)}+\dfrac{\Gamma(-\g)}{\Gamma(j-\g)}\bigg]z^j,\quad \g\in(0,1),\\[4mm]
\displaystyle 1+\sum_{j=1}^{\infty}\dfrac{(-1)^j b^{(2+\d-\nu)j}}{(2+\d-\nu)^{2j}(j!)^2}[1-jH_j]z^j,\quad \g=0,
\end{cases}
\no \\
&\hspace*{4.5cm}\d>-1,\; \nu<1,\; z\in\bbC,
\end{align}
from which, according to \eqref{3.47}, one finally obtains
\begin{align}\label{3.52}
F_{K,\d,\nu,\g}(z)&=\begin{cases}
\displaystyle \sum_{j=2}^{\infty}\dfrac{(-1)^{j+1} b^{(2+\d-\nu)j}}{(2+\d-\nu)^{2j}j!}\bigg[\dfrac{\Gamma(\g)}{\Gamma(j+\g)}+\dfrac{\Gamma(-\g)}{\Gamma(j-\g)}\bigg]z^j,\quad \g\in(0,1),\\[4mm]
\displaystyle 2\sum_{j=2}^{\infty}\dfrac{(-1)^j b^{(2+\d-\nu)j}}{(2+\d-\nu)^{2j}(j!)^2}[jH_j-1]z^j,\quad \g=0,
\end{cases}
\no \\
&\hspace*{4.1cm}\d>-1,\; \nu<1,\; z\in\bbC,
\end{align}
since one has for $j=1$, $H_{1}-1=0$ and
\begin{align}
\frac{\Gamma(-\gamma)}{\Gamma(-\gamma+1)}+\frac{\Gamma(\gamma)}{\Gamma(1+\g)}=0.
\end{align}
Notice that $F_{K,\d,\nu,\g}(z)\sim z^{2}$ which is a behavior to be expected as the Krein--von Neumann extension does have, as we have already mentioned above, a zero eigenvalue with multiplicity two.

By applying Theorem \ref{t4} to the expansion \eqref{3.52} we find
\begin{equation}
\zeta(n;T_{K,\d,\nu,\g})=-nb_n,
\end{equation}
where 
\begin{equation}
b_{1}=
\dfrac{b^{2+\d-\nu}}{(2+\d-\nu)^2 \left(\gamma ^2-4\right)},\quad \g\in[0,1),
\end{equation}
and the coefficients $b_{j}$ are given by \eqref{24} with coefficients $a_{j}$, $j\geq 2$,
given by \eqref{3.52}.
Explicitly,
\begin{align}
\zeta(1;T_{K,\d,\nu,\g})&=\frac{1}{\big(4-\g^2\big)(2+\d-\nu)^2}b^{2+\d-\nu},  \no\\
\zeta(2;T_{K,\d,\nu,\g})&=\frac{\g^4+\g^2+10}{6\big(4-\g^2\big)^2\big(9-\g^2\big)(2+\d-\nu)^4}b^{2(2+\d-\nu)},  \no\\
\zeta(3;T_{K,\d,\nu,\g})&=\frac{\g^6+7\g^4+8\g^2+32}{4\big(4-\g^2\big)^3\big(9-\g^2\big)\big(16-\g^2\big)(2+\d-\nu)^6}b^{3(2+\d-\nu)},   \no\\
\zeta(4;T_{K,\d,\nu,\g})&=\frac{\g^{12}-271\g^{10}+995\g^8+11355\g^6+67240\g^4+66856\g^2+216704}{360\big(4-\g^2\big)^4\big(9-\g^2\big)^2\big(16-\g^2\big)\big(25-\g^2\big)(2+\d-\nu)^8}   \no \\
&\quad\; \times b^{4(2+\d-\nu)},\quad \d>-1,\; \nu<1,\; \gamma \in [0,1).
\end{align}

Following the procedure illustrated in these particular examples one can find both the analytic continuation and the values at integer points for the spectral $\zeta$-function associated with the separated and coupled self-adjoint extensions of the generalized Bessel operator for all other parameters.

\subsection{The Legendre equation on (--1,1)}
\hfill

As a second example, we consider the Legendre operator on $L^2((-1,1);dx)$ associated with the expression $\tau_{Leg} = - (d/dx) (1 - x^2) (d/dx)$, $x \in (-1,1)$, which is in the singular limit circle nonoscillatory case at both endpoints. For more details, we refer to the recent paper on the Jacobi differential equation \cite{GLPS23} (of which this is a special case) and \cite[Sect. 6.2]{GLN20} (and the extensive references therein).
Principal and nonprincipal solutions $u_{\pm 1, Leg}(0, \dott)$ and $\hatt u_{\pm 1, Leg}(0, \dott)$ of $\tau_{Leg} u = 0$ at $x=\pm 1$ 
are then given by
\begin{align}
u_{\pm 1, Leg} (0,x) = 1, \quad \hatt u_{\pm 1,Leg} (0,x) = 2^{-1} \ln((1-x)/(1+x)), \quad x \in (-1,1). 
\end{align}
The generalized boundary values for $g \in \dom(T_{max, Leg})$ are then of the form 
\begin{align}
\wti g(\pm 1)&= - (p g')(\pm 1) 
= \lim_{x \to \pm 1} g(x)\big/\big[2^{-1} \ln((1-x)/(1+x))\big] ,   \label{3.60}\\ 
\wti g^{\, \prime} (\pm 1) &= \lim_{x \to \pm 1} \big[g(x) - \wti g(\pm 1) 2^{-1} \ln((1-x)/(1+x))\big]. \label{3.61}
\end{align}

We begin by determining the solutions $\phi(z,x,-1)$ and $\theta(z,x,-1)$, using two linearly independent solutions to $\tau_{Leg} u = zu,\ z\in \bbC,$
subject to the conditions
\begin{align}\label{3.63}
\wti\theta(z,-1,-1)=\wti\phi^{\, \prime}(z,-1,-1)=1,\quad \wti\theta^{\, \prime}(z,-1,-1)=\wti\phi(z,-1,-1)=0.
\end{align}
For fixed $z\in \bbC$, the equation in $\tau_{Leg} u = zu$ is a Legendre equation of the form
\begin{align}
\big(1-x^2\big)w''(x)-2xw'(x)+\nu(\nu+1)w(x)=0,\quad x\in(-1,1),
\end{align}
see, \cite[Sect. 14.2(i)]{DLMF}, with
\begin{equation}\label{3.65}
\nu=\nu(z):= 2^{-1}\big[-1 + (1+4z)^{1/2} \big],
\end{equation}
and we agree to choose the usual square root branch.

Therefore, linearly independent solutions to $\tau_{Leg} u = zu$ are $P_{\nu(z)}(\dott)$ and $Q_{\nu(z)}(\dott)$, the Legendre functions of the first and second kind of degree $\nu(z)$, respectively $($cf., e.g., \cite[Sect. 14.2(i)]{DLMF}$)$.  In particular, we can write
\begin{equation}
\begin{split}
\phi(z,x,-1) &= c_{\phi,P}(z)P_{\nu(z)}(x) + c_{\phi,Q}(z)Q_{\nu(z)}(x),  \\
\theta(z,x,-1) &= c_{\theta,P}(z)P_{\nu(z)}(x) + c_{\theta,Q}(z)Q_{\nu(z)}(x),\quad x\in (-1,1),\, z\in \bbC,   \label{3.67}
\end{split}
\end{equation}
for an appropriate set of scalars $c_{\phi,P}(z),c_{\phi,Q}(z),c_{\theta,P}(z),c_{\theta,Q}(z)\in \bbC$.  The representation for $\phi(z,x,-1)$ in \eqref{3.67} and the initial conditions in \eqref{3.63} yield the following system of equations for the coefficients $c_{\phi,P}(z)$ and $c_{\phi,Q}(z)$ (with a similar system satisfied by the coefficients $c_{\theta,P}(z)$ and $c_{\theta,Q}(z)$):
\begin{equation}
\begin{cases}
0&= c_{\phi,P}(z)\wti P_{\nu(z)}(-1) + c_{\phi,Q}(z)\wti Q_{\nu(z)}(-1),\\
1&= c_{\phi,P}(z)\wti P_{\nu(z)}^{\, \prime}(-1) + c_{\phi,Q}(z)\wti Q_{\nu(z)}^{\, \prime}(-1).
\end{cases}
\end{equation}


In order to now find the appropriate scalar functions, we compute the boundary values of the Legendre functions at both endpoints.
The limiting behavior of $P_{\nu(z)}(x)$ as $x\downarrow -1$ can be obtained from the formula (see, \cite[p.198, eq. (8.16)]{Te96}), valid for 
$x\in(-1,1)$ and $\nu\in\C$
\begin{align}\label{3.73}
\begin{split}
P_\nu(x)&=\frac{1}{\Gamma(-\nu)\Gamma(1+\nu)}\sum_{n=0}^\infty\frac{(-\nu)_n(1+\nu)_n}{(n!)^2}2^{-n}(1+x)^n\\
&\quad\,\times[2\psi(1+n)-\psi(n-\nu)-\psi(n+1+\nu)-\ln((1+x)/2))],
\end{split}
\end{align}
where $(a)_n=\Gamma(a+n)/\Gamma(a),\ n\in\N_0$ is the Pochhammer's symbol (cf. \cite[Ch. 6]{AS72}) and $\psi(\dott) = \Gamma'(\dott)/\Gamma(\dott)$ denotes the digamma function. This expression implies for $\nu\in\bbC$,
\begin{equation}
P_{\nu}(x)\underset{x\downarrow -1}{=} \dfrac{\sin(\nu\pi)}{\pi}\bigg(2\gamma_E + 2\psi(1+\nu) + \ln\bigg( \dfrac{1+x}{2}\bigg)+\Oh((1+x)\ln(1+x))\bigg)+\cos(\nu\pi)(1+\Oh(1+x)).
\end{equation}
Thus
\begin{equation}
\wti P_{\nu(z)}(-1) = \lim_{x\downarrow -1} \frac{2P_{\nu(z)}(x)}{\ln((1-x)/(1+x))}= -\frac{2}{\pi}\sin(\nu(z)\pi),\quad z\in\C.\label{3.80}
\end{equation}
As a consequence of \eqref{3.80}, one applies \eqref{3.61} and the limiting behavior of $P_{\nu(z)}(x)$ as $x\downarrow -1$ to compute
\begin{align}
\wti P_{\nu(z)}^{\, \prime}&(-1) = \lim_{x\downarrow -1} \big[P_{\nu(z)}(x) - \wti P_{\nu(z)}(-1)2^{-1}\ln((1-x)/(1+x))\big]\notag\\
&= \lim_{x\downarrow -1} \bigg[\cos(\nu(z)\pi) + \dfrac{\sin(\nu(z)\pi)}{\pi}\bigg(2\gamma_E + 2\psi(1+\nu(z)) + \ln\bigg( \dfrac{1+x}{2}\bigg)\bigg)+\frac{\sin(\nu(z)\pi)}{\pi}\ln\bigg(\frac{1-x}{1+x}\bigg)\bigg]\notag\\
&= \cos(\nu(z)\pi) + \frac{\sin(\nu(z)\pi)}{\pi}[2\gamma_E + 2\psi(1+\nu(z))],\quad z\in\C.\label{3.81}
\end{align}
Similarly, using the limiting behavior of $P_{\nu(z)}(x)$ as $x\uparrow 1$ given in \cite[Eq. 14.8.1]{DLMF},
\begin{align}
\begin{split}
&P_{\nu(z)}(x)\underset{x\uparrow 1}{=} 1+\Oh(1-x),
\end{split}
\end{align}
implies that for $z\in \C$
\begin{align}
\wti P_{\nu(z)}(1) &= \lim_{x\uparrow 1} \frac{2P_{\nu(z)}(x)}{\ln((1-x)/(1+x))}=0.\label{3.90}
\end{align}
As a consequence of \eqref{3.90}, one applies \eqref{3.61} and the limiting behavior of $P_{\nu(z)}(x)$ as $x\uparrow 1$ to compute
\begin{equation}
\wti P_{\nu(z)}^{\, \prime}(1) = \lim_{x\uparrow 1} \big[P_{\nu(z)}(x) - \wti P_{\nu(z)}(1)2^{-1}\ln((1-x)/(1+x))\big]= \lim_{x\uparrow 1} P_{\nu(z)}(x)=1,\quad z\in \C.
\end{equation}

The behavior of the second solution is slightly more complicated, so we start with the endpoint $x=1$. The limiting behavior of $Q_{\nu(z)}(x)$ as $x\uparrow 1$ with $\nu\in\C\backslash\{-\N\}$ (cf. \cite[Eq. 14.8.3]{DLMF})
\begin{align}
\begin{split}\label{5.55}
&Q_{\nu(z)}(x)\underset{x\uparrow 1}{=} -\frac{1}{2}\bigg(2\gamma_E + 2\psi(1+\nu(z))+ \ln\bigg( \dfrac{1-x}{2}\bigg)\bigg)(1+\Oh(1-x)),
\end{split}
\end{align}
implies that 
\begin{align}
\wti Q_{\nu(z)}(1) &= \lim_{x\uparrow 1} \frac{2Q_{\nu(z)}(x)}{\ln((1-x)/(1+x))}=-1.\label{3.93}
\end{align}
As a consequence of \eqref{3.93}, one applies \eqref{3.61} and the limiting behavior of $Q_{\nu(z)}(x)$ as $x\uparrow 1$ to compute
\begin{align}
\wti Q_{\nu(z)}^{\, \prime}(1) &= \lim_{x\uparrow 1} \big[Q_{\nu(z)}(x) - \wti Q_{\nu(z)}(1)2^{-1}\ln((1-x)/(1+x))\big]\notag\\
&= \lim_{x\uparrow 1} \bigg[ -\gamma_E - \psi(1+\nu(z)) +\dfrac{1}{2}\ln(2)- \dfrac{1}{2}\ln(1-x)+\dfrac{1}{2}\ln\bigg(\frac{1-x}{1+x}\bigg)\bigg]\notag\\
&= -\gamma_E - \psi(1+\nu(z)),\quad z\in\C.
\end{align}
The limiting behavior of $Q_{\nu(z)}(x)$ as $x\downarrow -1$ can now be found by using (cf. \cite[Eq. 14.9.8]{DLMF} with $\mu=0$)
\begin{align}
Q_\nu(x)=-\cos(\nu \pi)Q_\nu(-x)-\frac{\pi\sin(\nu \pi)}{2}P_\nu(-x),
\end{align}
and the limiting behavior near $1$ of $P_{\nu(z)}(x)$ and $Q_{\nu(z)}(x)$. One obtains
\begin{align}
\begin{split}
&Q_{\nu(z)}(x)\underset{x\downarrow -1}{=} \bigg(\dfrac{\cos(\nu(z)\pi)}{2}\bigg(2\gamma_E + 2\psi(1+\nu(z)) + \ln\bigg( \dfrac{1+x}{2}\bigg)\bigg)-\dfrac{\pi}{2}\sin(\nu(z)\pi)\bigg)\\
&\hspace{4.6cm}\times(1+\Oh(1+x)),\quad z\in\C.
\end{split}
\end{align}
This behavior implies
\begin{equation}
\wti Q_{\nu(z)}(-1) = \lim_{x\downarrow -1} \frac{2Q_{\nu(z)}(x)}{\ln((1-x)/(1+x))}= -\cos(\nu(z)\pi),\quad z\in\C.\label{3.84}
\end{equation}
As a consequence, one applies \eqref{3.61} and the limiting behavior of $Q_{\nu(z)}(x)$ as $x\downarrow -1$ to compute
\begin{align}
&\wti Q_{\nu(z)}^{\, \prime}(-1) = \lim_{x\downarrow -1} \big[Q_{\nu(z)}(x) - \wti Q_{\nu(z)}(-1)2^{-1}\ln((1-x)/(1+x))\big]\notag\\
&\qquad = \lim_{x\downarrow -1} \bigg[ \dfrac{\cos(\nu(z)\pi)}{2}\bigg(2\gamma_E + 2\psi(1+\nu(z)) + \ln\bigg( \dfrac{1+x}{2}\bigg)\bigg)-\dfrac{\pi}{2}\sin(\nu(z)\pi)+\dfrac{\cos(\nu(z)\pi)}{2}\ln\bigg(\frac{1-x}{1+x}\bigg)\bigg]\notag\\
&\qquad= \cos(\nu(z)\pi)[\gamma_E + \psi(1+\nu(z))]-\dfrac{\pi}{2}\sin(\nu(z)\pi),\quad z\in\C.
\end{align}

These calculations allow one to conclude that
\begin{align}
\wti P_{\nu(z)}(-1)\wti Q_{\nu(z)}^{\, \prime}(-1) - \wti P_{\nu(z)}^{\, \prime}(-1)\wti Q_{\nu(z)}(-1)=1,
\end{align}
and hence,
\begin{align}
\begin{split}
c_{\phi,P}(z) &= \cos(\nu(z)\pi),\\
c_{\phi,Q}(z) &= -2\pi^{-1}\sin(\nu(z)\pi),\\
c_{\theta,P}(z) &= \cos(\nu(z)\pi)[\gamma_E + \psi(1+\nu(z))]-\pi2^{-1}\sin(\nu(z)\pi),\\ c_{\theta,Q}(z) &= -\cos(\nu(z)\pi) - \pi^{-1}\sin(\nu(z)\pi)[2\gamma_E + 2\psi(1+\nu(z))],\quad z\in\C.
\end{split}
\end{align}
Substituting into \eqref{3.67}, one finds for $x\in (-1,1)$ and $z\in\C$
\begin{align}\label{3.88}
\notag \phi(z,x,-1) &= \cos(\nu(z)\pi)P_{\nu(z)}(x) -2\pi^{-1}\sin(\nu(z)\pi) Q_{\nu(z)}(x),\\
\theta(z,x,-1) &= \big(\cos(\nu(z)\pi)[\gamma_E + \psi(1+\nu(z))]-\pi2^{-1}\sin(\nu(z)\pi)\big)P_{\nu(z)}(x)\\
\notag &\quad\,-\big(\cos(\nu(z)\pi) + 2\pi^{-1}\sin(\nu(z)\pi)[\gamma_E + \psi(1+\nu(z))]\big)Q_{\nu(z)}(x).
\end{align}
Finally, applying the boundary values at $x=1$ provides the values needed for the construction of the characteristic function, that is for $z\in\C$,
\begin{align}\label{3.95}
\notag\wti\phi(z,1,-1) &= 2\pi^{-1}\sin(\nu(z)\pi),\\
\notag\wti\theta(z,1,-1) &= \cos(\nu(z)\pi) + 2\pi^{-1}\sin(\nu(z)\pi)[\gamma_E + \psi(1+\nu(z))],\\
\wti\phi^{\, \prime}(z,1,-1) &= \cos(\nu(z)\pi)+2\pi^{-1}\sin(\nu(z)\pi)[\gamma_E + \psi(1+\nu(z))],\\
\notag\wti\theta^{\, \prime}(z,1,-1) &=-2^{-1}\pi\sin(\nu(z)\pi)+ 2\cos(\nu(z)\pi)[\gamma_E + \psi(1+\nu(z))] \\
\notag&\quad\,+ 2\pi^{-1}\sin(\nu(z)\pi)[\gamma_E + \psi(1+\nu(z))]^2.
\end{align}
The characteristic functions given in \eqref{7} and \eqref{9} can now be written using these boundary values.
We point out that these boundary values can also be computed using hypergeometric function representations for the solutions using the results given in \cite[Appendices A--C]{GLPS23}, in particular, choosing $\alpha=0=\beta$ in \cite[Eqs. (C.13), (C.14), and (C.16)]{GLPS23} yields \eqref{3.95} after applying appropriate identities.

We now analyze the spectral $\zeta$-function according to the process outlined in Section \ref{cont}. Similarly to the case of 
the generalized Bessel operator, the separated extension characterized by $\alpha=\beta=0$, namely the Friedrichs extension, produces a
characteristic function that has a standard asymptotic expansion and, hence, leads to a meromorphic extension of the corresponding 
spectral $\zeta$-function that is similar to the one obtained in the $N$-smooth regular case. Since we are interested in showing more exotic 
behaviors of the $\zeta$-function, we will focus our attention to the set of separated self-adjoint extensions determined by the parameters
$\alpha=0$ and $0<\beta<\pi$. In this particular case we obtain
\begin{equation}\label{ex6}
F_{0,\b}(z)=-\sin(\b)\wti\phi^{\, \prime}(z,1,-1)+\cos(\b)\wti\phi(z,1,-1).
\end{equation}  
Because of the relation \eqref{3.65}, $\nu(z)\to\infty$ as $z\to\infty$. This implies that we can construct the large-$z$ asymptotic
expansion of \eqref{ex6} by first finding its large-$\nu(z)$ expansion and then by expanding the ensuing expression for large $z$ by using the relation \eqref{3.65}. 
For $\nu(z)$ large and $\Im(z^{1/2})>0$ we write
\begin{equation}\label{ex7}
\ln\,F_{0,\b}(z)=-i\nu(z)\pi-\ln(i\pi)+\ln\left[-\frac{i\pi}{2}\sin(\b)-\cos(\b)+\sin(\b)[\g_E+\psi(1+\nu(z))]\right]+O\left(e^{2i\nu(z)\pi}\right).
\end{equation}

By exploiting the asymptotic expansion for $\nu(z)\to\infty$ (cf. \cite[Eqs. 5.5.2 and 5.11.2]{DLMF})
\begin{equation}\label{ex8}
\psi\left(1+\nu(z)\right)=\ln\,\nu(z)+\frac{1}{2\nu(z)}-\sum_{k=1}^{N}\frac{B_{2k}}{2k(\nu(z))^{2k}}+O\left(\nu(z)^{-2N-2}\right),
\end{equation}
and the large-$z$ expansion of $\nu(z)$
\begin{equation}\label{ex9}
\nu(z)=z^{1/2}-\frac{1}{2}+\sum_{n=1}^{N}\frac{\sqrt{\pi}z^{-n+(1/2)}}{2^{2n-1}(n-1)!\Gamma\left((5/2)-n\right)}+O\left(z^{-n-(1/2)}\right),
\end{equation}
which can be obtained from the relation \eqref{3.65}, we find
\begin{equation}\label{ex10}
\psi\left(1+\nu(z)\right)=\ln\,z^{1/2}+\sum_{n=1}^{N}C_{n}z^{-n}+O(z^{-N-1}),
\end{equation}
where 
\begin{equation}
C_{1}=\frac{1}{6},\quad C_{2}=-\frac{1}{30},\quad C_{3}=\frac{4}{315},\quad C_{4}=-\frac{1}{105},\quad C_{5}=\frac{16}{1155},
\quad C_{6}=-\frac{1528}{45045},\quad \ldots
\end{equation}
The coefficients of higher order terms can be found with the help of a simple algebraic computer program.
By substituting the expansions \eqref{ex9} and \eqref{ex10} in \eqref{ex7} we obtain, where $D$ is a constant,
\begin{align}\label{ex11}
\ln\,F_{0,\b}(z)&=D-i\pi z^{1/2}+\ln\left[(-i(\pi/2)+\g_E)\sin(\b)-\cos(\b)+\sin(\b)\ln\,z^{1/2}\right]\nonumber\\
&\quad+\ln\left[1+\sin(\b)\left[(-i(\pi/2)+\g_E)\sin(\b)-\cos(\b)+\sin(\b)\ln\,z^{1/2}\right]^{-1}\sum_{n=1}^{N}C_{n}z^{-n}\right]\nonumber\\
&\quad-\sum_{n=1}^{N}\frac{i\pi^{3/2}}{2^{2n-1}(n-1)!\Gamma\left((5/2)-n\right)}z^{-n+(1/2)}+O\left(z^{-n-(1/2)}\right).
\end{align}

By further expanding the third logarithmic term on the right-hand side of \eqref{ex11} for large values of $z$, we arrive at the expression
\begin{align}\label{ex12}
\ln\,F_{0,\b}(z)&=D-i\pi z^{1/2}+\ln\left[(-i(\pi/2)+\g_E)\sin(\b)-\cos(\b)+\sin(\b)\ln\,z^{1/2}\right]\nonumber\\
&\quad-\sum_{n=1}^{N}\frac{i\pi^{3/2}}{2^{2n-1}(n-1)!\Gamma\left((5/2)-n\right)}z^{-n+(1/2)}+\sum_{n=1}^{N}A_{n}(z)z^{-n}+O\left(z^{-n-(1/2)}\right),
\end{align}
where, for  $n\in\N$,
\begin{equation}\label{ex13}
A_{n}(z)=(-1)^{n+1}\sum_{k=1}^{n}\omega_{nk}\sin^{k}(\beta)\left[(-i(\pi/2)+\g_E)\sin(\b)-\cos(\b)+\sin(\b)\ln\,z^{1/2}\right]^{-k},
\end{equation}
with $\omega_{nk}\in\R$ obtained according to the formal asymptotic expansion by equating like powers of $z$
\begin{equation}\label{ex14}
\ln\left[1+\sin(\b)\left[(-i(\pi/2)+\g_E)\sin(\b)-\cos(\b)+\sin(\b)\ln\,z^{1/2}\right]^{-1}\sum_{n=1}^{\infty}C_{n}z^{-n}\right]=\sum_{n=1}^{\infty}A_{n}(z)z^{-n}.
\end{equation}
The first few functions $A_{n}(z)$ are 
\begin{align}\label{ex15}
A_{1}(z)&=\frac{\sin (\beta)}{6\left[(-i(\pi/2)+\g_E)\sin(\b)-\cos(\b)+\sin(\b)\ln\,z^{1/2}\right]},\\
A_{2}(z)&=-\frac{\sin ^2(\beta )}{72 \left[(-i(\pi/2)+\g_E)\sin(\b)-\cos(\b)+\sin(\b)\ln\,z^{1/2}\right]^2}\nonumber\\
&\quad-\frac{\sin (\beta )}{30\left[(-i(\pi/2)+\g_E)\sin(\b)-\cos(\b)+\sin(\b)\ln\,z^{1/2}\right]},\\
A_{3}(z)&=\frac{\sin ^3(\beta )}{648\left[(-i(\pi/2)+\g_E)\sin(\b)-\cos(\b)+\sin(\b)\ln\,z^{1/2}\right]^3}\nonumber\\
&\quad+\frac{\sin ^2(\beta )}{180\left[(-i(\pi/2)+\g_E)\sin(\b)-\cos(\b)+\sin(\b)\ln\,z^{1/2}\right]^2}\nonumber\\
&\quad+\frac{4 \sin (\beta )}{315\left[(-i(\pi/2)+\g_E)\sin(\b)-\cos(\b)+\sin(\b)\ln\,z^{1/2}\right]}.
\end{align} 

From the expansion \eqref{ex12} one can then obtain  
\begin{align}\label{ex16}
\frac{d}{dt}\ln\,F_{0,\b}\left(te^{i\Psi}\right)&=-\frac{i\pi}{2}e^{i\Psi/2}t^{-1/2}+\sum_{n=1}^{N}\frac{i\pi^{3/2}(n-(1/2))}{2^{2n-1}(n-1)!\Gamma\left((5/2)-n\right)}t^{-n-(1/2)}e^{-i(n-(1/2))\Psi}\nonumber\\
&\quad+\sum_{j=0}^{N}B_{j}(t)t^{-j-1}e^{-ij\Psi}+O\left(t^{-N-3/2}\right),
\end{align}
where we have introduced the notation $\lambda=[\g_E-(i/2)(\pi-\Psi)]\sin(\b)-\cos(\b)$, and the functions $B_{j}(t)$ are defined as
\begin{align}\label{ex17}
B_{0}(t)&=\frac{\sin(\b)}{2\left(\lambda+\sin(\b)\ln\,t^{1/2}\right)},\\
B_{j}(t)&=(-1)^{j}\sum_{k=1}^{j}\omega_{jk}\sin^{k}(\b)\left[\frac{k\sin(\b)}{2}\left(\lambda+\sin(\b)\ln\,t^{1/2}\right)^{-k-1}+j\left(\lambda+\sin(\b)\ln\,t^{1/2}\right)^{-k}\right],\; j\in\N.\nonumber
\end{align}

By subtracting, and then adding, $N$ terms of the asymptotic expansion \eqref{ex16}
to the integrand in \eqref{19a} we arrive at the expression (cf. \cite[Sect. 3.3]{FGKS21})
\begin{align}\label{ex18}
\zeta(s;T_{0,\beta})&=\cF(s)+e^{is(\pi-\Psi)}\frac{\sin(\pi s)}{\pi}\Bigg[-i\frac{\pi e^{i\Psi/2}}{2s-1}+\sum_{n=1}^{N}\frac{i\pi^{3/2}(n-(1/2))}{2^{2n-1}(n-1)!\Gamma\left((5/2)-n\right)}\frac{e^{-i(n-(1/2))\Psi}}{s+n-(1/2)}\nonumber\\
&\quad+\sum_{n=0}^{N}e^{-i n \Psi}\int_{1}^{\infty}dt\,t^{-s-n-1}B_{n}(t)\Bigg],
\end{align}
where 
\begin{align}
    \cF(s)&=e^{is(\pi-\Psi)}\frac{\sin(\pi s)}{\pi}\int_{0}^{\infty}dt\,t^{-s}\Bigg\{\frac{d}{dt}\ln\,F_{0,\b}\left(te^{i\Psi}\right)-H(t-1)\Bigg[-\frac{i\pi}{2}e^{i\Psi/2}t^{-1/2}\nonumber\\
    &\quad+\sum_{n=1}^{N}\frac{i\pi^{3/2}(n-(1/2))}{2^{2n-1}(n-1)!\Gamma\left((5/2)-n\right)}t^{-n-(1/2)}e^{-i(n+(1/2))\Psi}+\sum_{j=0}^{N}B_{j}(t)t^{-j-1}e^{-ij\Psi}\Bigg]\Bigg\},
\end{align}
is entire for $-(2N+1)/2<\Re(s)<1$. The integral involving the functions $B_{j}(t)$ on the right-hand side of \eqref{ex18} can be computed according to the formula
\begin{equation}\label{ex19}
\int_{1}^{\infty}dt\,t^{-s-n-1}\left(\lambda+\sin(\b)\ln\,t^{1/2}\right)^{-k}=\frac{2\lambda^{1-k}}{\sin(\b)}e^{\frac{2(s+n)\lambda}{\sin(\b)}}E_{k}\left(\frac{2(s+n)\lambda}{\sin(\b)}\right),\quad n,k\in\N,
\end{equation}
which can be obtained by performing a change of variables $x=\lambda+\sin(\b)\ln\,t^{1/2}$ and by then using the definition of the 
generalized exponential integral function $E_{k}(z)$ in \cite[Eq. 8.19.3]{DLMF}. 
By utilizing \eqref{ex19} we find the expression for the spectral $\zeta$-function valid for $-(N+1)<\Re(s)<1$
\begin{align}\label{ex20}
\lefteqn{\zeta(s;T_{0,\beta})=\cF(s)+e^{is(\pi-\Psi)}\frac{\sin(\pi s)}{\pi}\Bigg[-i\frac{\pi e^{i\Psi/2}}{2s-1}+\sum_{n=1}^{N}\frac{i\pi^{3/2}(n-(1/2))}{2^{2n-1}(n-1)!\Gamma\left((5/2)-n\right)}\frac{e^{-i(n-(1/2))\Psi}}{s+n-(1/2)}}\nonumber\\
&\quad+e^{\frac{2s\lambda}{\sin(\b)}}E_{1}\left(\frac{2s\lambda}{\sin(\b)}\right)\\
&\quad+\sum_{n=1}^{N}e^{-i n \Psi}\sum_{k=1}^{n}\frac{(-1)^{k}2\omega_{nk}\lambda^{-k}}{k\sin(\b)}e^{\frac{2(s+n)\lambda}{\sin(\b)}}
\left[\frac{k\sin(\b)}{2}E_{k+1}\left(\frac{2(s+n)\lambda}{\sin(\b)}\right)+n\lambda E_{k}\left(\frac{2(s+n)\lambda}{\sin(\b)}\right)\right]\Bigg].\nonumber
\end{align}
This last result allows us to analyze the structure of the spectral $\zeta$-function for the self-adjoint extension $T_{0,\beta}$.
In line with the result of Lemma \ref{pole}, $\zeta(s;T_{0,\beta})$ has a simple pole at $s=1/2$. Moreover, simple poles appear also at the 
points $s=-(2j-1)/2$ with $j\in\N$ as in the Bessel example. The more exotic behavior of the 
spectral $\zeta$-function comes from the generalized exponential integral functions in \eqref{ex20}.

In fact, as $s\to-j$, one has \cite[Eq. 8.19.8]{DLMF}
\begin{equation}
E_{k}\left(\frac{2(s+j)\lambda}{\sin(\b)}\right)=\frac{(-1)^{k}}{(k-1)!}\left(\frac{2(s+j)\lambda}{\sin(\b)}\right)^{k-1}\ln\left(\frac{2(s+j)\lambda}{\sin(\b)}\right)+O(1),
\end{equation}   
which implies that the spectral $\zeta$-function develops branch points at $s=-j$ for every $j\in\N_0$! 
Notice that while the previous Bessel example included a branch point halfway between the first two simple poles (i.e., at  $s=0$) of the spectral $\zeta$-function, the current example effectively adds a branch point halfway between every successive pair of simple poles. As far as we are aware this is the first time that this remarkable behavior of the spectral $\zeta$-function has been observed. We further point out that one can regularize this example in the sense of Theorem \ref{regularizing} to see that the associated regular problem still has this remarkable behavior (see \cite[Ex. 8.3.1]{Ze05}). Similarly to the generalized Bessel example, one can introduce a regularized $\zeta$-function and determinant for this example via \eqref{5.21} and \eqref{5.22}.

We focus our attention, now, to the computation of the value of the spectral $\zeta$-function at positive integers. The small-$z$ asymptotic expansion of $F_{\a,\b}(z)$ and $F_{\varphi,R}(z)$, needed in order to apply Theorem \ref{t4} to this particular example, can be obtained by exploiting the small-$z$ asymptotic expansion of the digamma function $\psi(1+z)$ which can be found for instance in \cite[Eq. 6.3.14]{AS72} to be
\begin{align}
\psi(1+z)=-\gamma_E+\sum_{k=2}^\infty (-1)^k\zeta(k)z^{k-1}.
\end{align}

For the sake of brevity, we provide an explicit expression for $\zeta(1;T_{A,B})$ since it only involves the first few coefficients of the small-$z$ expansion. In the case of separated boundary conditions 
one obtains
\begin{align}
\begin{split}
a_{0}&=-\sin (\a)\cos (\b)-\cos (\a) \sin (\b),\quad a_{1}=2\cos(\a)\cos(\b)-\dfrac{\pi^2}{6}\sin(\a)\sin(\b),\\
a_{2}&=\cos (\a) \left(\frac{\pi^2}{6} \sin (\b)-2 \cos (\b)\right)+\frac{1}{6} \sin (\a) \left(\pi ^2 \sin (\b)+\pi ^2 \cos (\b)-12 \zeta(3) \sin (\b)\right).
\end{split}
\end{align}
If $T_{\a,\b}$ does not have a zero eigenvalue, then $a_{0}\neq 0$ and, hence, one finds from \eqref{25},
\begin{equation}
\zeta(1;T_{\a,\b})=\dfrac{12\cos(\a)\cos(\b)-\pi^2\sin(\a)\sin(\b)}{6(\sin (\a)\cos (\b)+\cos (\a) \sin (\b))}.
\end{equation}
If, instead, $T_{\a,\b}$ has a zero eigenvalue then $a_{0}=0$ and one finds
\begin{equation}
\zeta(1;T_{\a,\b})=
\dfrac{\cos (\a) \left(\pi ^2 \sin (\b)-12 \cos (\b)\right)+\sin (\a) \left(\pi ^2 \sin (\b)+\pi ^2 \cos (\b)-12 \zeta(3) \sin (\b)\right)}{\pi^2\sin(\a)\sin(\b)-12\cos(\a)\cos(\b)}.
\end{equation} 

In the case of coupled boundary conditions one finds
\begin{align}
\begin{split}
a_{0}&=-e^{i\varphi}(R_{11}+R_{22})+e^{2i\varphi}+1,\quad
a_{1}=e^{i\varphi}\left(-\dfrac{\pi^2}{6}R_{12}+2R_{21}\right),\\
a_{2}&=\dfrac{e^{i\varphi}}{6}\big[\pi^2R_{11}+\big(\pi^2-12\zeta(3)\big)R_{12}-12R_{21}+\pi^2R_{22}\big].
\end{split}
\end{align}
Once again, if zero is not an eigenvalue of $T_{\varphi,R}$, $a_{0}\neq0$ and one finds
\begin{equation}
\zeta(1;T_{\varphi,R})=\dfrac{e^{i\varphi}\left(\pi^2R_{12}-12R_{21}\right)}{-6e^{i\varphi}(R_{11}+R_{22})+6e^{2i\varphi}+6}.
\end{equation}
If, on the other hand, zero is an eigenvalue of $T_{\varphi,R}$ with multiplicity one, then $a_{0}=0$ and 
\begin{align}
\begin{split}
\zeta(1;T_{\varphi,R})=\dfrac{\pi^2R_{11}+\big(\pi^2-12\zeta(3)\big)R_{12}-12R_{21}+\pi^2R_{22}}{\pi^2R_{12}-12R_{21}}.
\end{split}
\end{align}

As already mentioned in the generalized Bessel example, the case in which zero is an eigenvalue with multiplicity two can be analyzed by considering the extension realized by imposing on the Legendre equation the following coupled boundary conditions: $\varphi=0$ and, by noting $\nu(0)=0$ and $\psi(1)=-\gamma_E$ in \eqref{3.95},
\begin{align}
R'=
  \left( {\begin{array}{cc}
   \wti\theta(0,1,-1) & \wti\phi(0,1,-1) \\
   \wti\theta^{\, \prime}(0,1,-1) & \wti\phi^{\, \prime}(0,1,-1) \\
  \end{array} } \right)=
  \left( {\begin{array}{cc}
  1 & 0 \\
  0 & 1 \\
  \end{array} } \right).
\end{align}
Thus one concludes the curious fact that $R=I_2$ so that this extension coincides with the periodic extension.

One easily obtains for $z\in\C$
\begin{align}\label{KVN}
\begin{split}
F_{0,R'}(z)&=-\wti\theta(z,1,-1)-\wti\phi^{\, \prime}(z,1,-1)+2=-2\wti\theta(z,1,-1)+2\\
&=-2\cos(\nu(z)\pi) -4\pi^{-1}\sin(\nu(z)\pi)[\gamma_E + \psi(1+\nu(z))]+2.
\end{split}
\end{align}
The small-$z$ expansion of \eqref{KVN} is easily found to be 
\begin{align}
F_{0,R'}(z)\underset{z\downarrow0}{=}\frac{\pi^2}{3}z^2-\frac{2}{3}\big(\pi^2-6\zeta(3)\big)z^3+\frac{1}{60}\left(100\pi^2-\pi^4-720\zeta(3)\right)z^4+\Oh\big(z^5\big),
\end{align}
from which one clearly sees that zero is an eigenvalue of multiplicity 2 as expected. Hence
\begin{equation}
\zeta(1;T_{0,R'})=\frac{2\pi^2-12\zeta(3)}{\pi^2}=2-\frac{12}{\pi^2}\zeta(3),\quad
\zeta(2;T_{0,R'})=\frac{100\pi^2-\pi^4-720\zeta(3)}{20\big(\pi^2-6\zeta(3)\big)}.
\end{equation}
Values of $\zeta(n;T_{0,R'})$ for $n\geq 3$ can be easily found with the help of a simple computer program, with more terms of the Riemann $\zeta$-function at positive odd integers appearing as $n$ becomes larger.

Another extension of particular interest is the Friedrichs which we denote by $T_{F,Leg}=T_{0,0}$. We note that the spectrum of $T_{0,0}$ may be computed explicitly to be $\sigma(T_{0,0}) = \{n^2+n\}_{n\in \bbN}$ based on \cite[Sect.~9\,$(i)$]{Ev05}.
The characteristic function for this particular case reads
\begin{align}
F_{0,0}(z)=2\pi^{-1}\sin(\nu(z)\pi),\quad z\in\C,
\end{align}
hence $z$ is an eigenvalue when $\nu(z)=n,\ n\in\N_0$, that is, when $z=n(n+1),\ n\in\N_0$, as expected. Furthermore, one finds the series expansion
\begin{align}
F_{0,0}(z)=2\pi^{-1}\sin(\nu(z)\pi)\underset{z\downarrow0}{=}2z-2z^2+\big(4-\pi^2/3\big)z^3+\big(\pi^2-10)z^4+\Oh\big(z^5\big),
\end{align}
from which applying Theorem \ref{t4} yields
\begin{equation}
\zeta(1;T_{0,0})=1,\quad
\zeta(2;T_{0,0})=\dfrac{\pi^2}{3}-3,\quad
\zeta(3;T_{0,0})=10-\pi^2.
\end{equation}

\section{Concluding remarks}
In this paper we have considered the spectral $\zeta$-function for self-adjoint extensions of quasi-regular Sturm--Liouville operators that are bounded from below. We expressed the $\zeta$-function in terms of a contour integral of a characteristic function which implicitly determines the eigenvalues of its associated self-adjoint extension. The integral representation so constructed is valid only in a certain region of the complex plane and in order to extend it to a larger region of $\C$ to the left of the natural boundary, we exploited a well-known method involving the asymptotic expansion of the characteristic function. In fact, it is worth observing that the large-$z$ asymptotic expansion of the characteristic function allows one to extend the representation of the spectral $\zeta$-function towards the left of the complex plane, while its small-$z$ expansion is employed to analyze the $\zeta$-function in a region to the right of $\Re(s)=1$. In this work we have illustrated our main techniques with the help of two examples of quasi-regular Sturm--Liouville operators. Remarkably, the example of the Legendre operator provides a spectral $\zeta$-function exhibiting a very unusual structure involving branch points at all nonpositive integer values of $s$.        

This work represents a first step towards a complete analysis of the spectral $\zeta$-function of singular Sturm--Liouville operators and while there are certainly more topics to be explored we would like to mention some that represent a natural continuation of the ideas developed here.  

In Section \ref{cont} we have described the process of analytic continuation of the spectral $\zeta$-function. Due to the fact that there is no standard form of the asymptotic expansion of the characteristic function in the singular setting, we were able to give only some general guidelines regarding the process of analytic continuation. This is to be contrasted with the $N$-smooth (quasi-)regular case, were the general form of the asymptotic expansion of the characteristic function can be found by using the Liouville--Green approximation and a complete analytic continuation can be performed in general. It would be quite interesting to study in much more detail if there are classes of singular operators in which the asymptotic expansion of the characteristic function can be given in general. This analysis would likely heavily rely on the theory of singular asymptotic expansions or Borel summation. Until such analysis can be performed, we are limited to finding the appropriate asymptotic expansion in a case-by-case basis. 

We would like to make an additional comment. One of the assumptions we have used in Section \ref{cont} to construct the analytic continuation of the spectral $\z$-function, constraints the first term of the asymptotic sequence $\omega_1(z)$ to be an $o(z^{-1/2-\varepsilon})$ function. It would be very interesting to find, if it exists, a Sturm-Liouville problem for which this assumption on $\omega_1(z)$ in the asymptotic expansion \eqref{FAsym} does not hold for any $\varepsilon > 0$. This would be the case for Sturm-Liouville problems whose largest $n\to\infty$ subleading behavior of the eigenvalues contains, for instance, terms of the form $n^{2}(\ln\,n)^{-\gamma}$, $\gamma>0$. An asymptotic expansion of the logarithmic derivative of the characteristic function for which $\omega_1(z)=o(z^{-1/2})$ but not $o(z^{-1/2-\varepsilon})$ for any $\varepsilon > 0$ would lead to a spectral $\z$-function that might present a pole at $s=1/2$ of order higher than one or even a branch point. This would qualify as an exotic behavior of the spectral $\z$-function since all cases in literature find the rightmost pole is simple.   

The investigations performed here were limited to the quasi-regular case, that is the case in which the endpoints are limit circle nonoscillatory. The next step would be to extend the technique of analytic continuation of the spectral $\zeta$-function, developed in the previous sections, to the case in which at least one endpoint is LP, with the LP endpoint being nonoscillatory for all $\lambda\in\bbR$. Since boundary conditions are 
not allowed at an endpoint that is LP, the characteristic function cannot be expected to have 
a form that is similar to \eqref{7} and \eqref{9}. In fact, 
the construction of the characteristic function in the case of an LP endpoint will involve integrability conditions on the solutions recently investigated in \cite{PS24}. Moreover, the characteristic function for operators with trace class resolvents will no longer strictly have growth $1/2$. We hope to report on this particular topic in a future work.

\appendix
\section{Basics of singular Weyl--Titchmarsh--Kodaira theory}\label{appendix}

In this appendix we briefly summarize the basic notions of singular Weyl--Titchmarsh--Kodaira theory that are necessary for a description of all self-adjoint extensions of 
the minimal quasi-regular Sturm--Liouville operator. For this overview, we mainly follow  \cite[Chs.~4, 6--8]{Ze05}, \cite{GLN20} and \cite[Ch.~13]{GNZ23}. Moreover, everything in this appendix is standard and can be found, for instance, in \cite[Chs.~8, 9]{CL85}, \cite[Sects.~13.6, 13.9, 13.0]{DS88}, \cite[Ch.~III]{JR76}, \cite[Ch.~V]{Na68}, \cite{NZ92}, \cite[Ch.~6]{Pe88}, \cite[Ch.~9]{Te14}, \cite[Sect.~8.3]{We80}, \cite[Ch.~13]{We03}.

Let us consider the differential expression \eqref{2.1} and the maximal and minimal operators given in Definition \eqref{def1}. One of the most important results in the 
theory of singular Sturm--Liouville operators is Weyl's theorem, also known as Weyl alternative:
\begin{theorem}[Weyl's Alternative] \label{At1} ${}$ \\
Assume Hypothesis \ref{h1}. Then the following alternative holds$:$ \\[1mm] 
$(i)$ For every $z\in\bbC$, all solutions $u$ of $(\tau-z)u=0$ are in $\Lr$ near $a$ 
$($resp., near $b$$)$. \\[1mm] 
$(ii)$  For every $z\in\bbC$, there exists at least one solution $u$ of $(\tau-z)u=0$ which is not in $\Lr$ near $a$ $($resp., near $b$$)$. In this case, for each $z\in\bbC\bs\bbR$, there exists precisely one solution $u_a$ $($resp., $u_b$$)$ of $(\tau-z)u=0$ $($up to constant multiples$)$ which lies in $\Lr$ near $a$ $($resp., near $b$$)$. 
\end{theorem}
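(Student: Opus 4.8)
The plan is to prove everything for the endpoint $a$, the argument at $b$ being verbatim after reflecting the interval, and to fix once and for all an interior point $c\in(a,b)$, which under Hypothesis \ref{h1} behaves as a regular point (local existence, uniqueness, and continuous dependence of solutions hold because $1/p,q,r\in\Ll$). The whole statement then reduces to controlling the dimension of the space of solutions of $(\tau-z)u=0$ lying in $L^2((a,c);r\,dx)$. I would extract two facts from which the alternative follows immediately: first, that the property ``\emph{every} solution of $(\tau-z)u=0$ is square-integrable near $a$'' is independent of $z\in\C$; and second, that for every nonreal $z$ there is \emph{at least} one nontrivial solution that is square-integrable near $a$.

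First I would prove the $z$-independence. Fixing $z_0\in\C$ and a fundamental system $u_1,u_2$ of $(\tau-z_0)u=0$ normalized by $W(u_1,u_2)\equiv 1$, I rewrite $(\tau-z)u=0$ as the inhomogeneous equation $(\tau-z_0)u=(z-z_0)u$ and apply variation of parameters to obtain, for any solution $u$ of $(\tau-z)u=0$, the Volterra integral equation
\[
u(x)=c_1u_1(x)+c_2u_2(x)+(z-z_0)\int_c^x\big[u_1(t)u_2(x)-u_1(x)u_2(t)\big]r(t)u(t)\,dt .
\]
Assuming $u_1,u_2\in L^2((a,c);r\,dx)$, a Cauchy--Schwarz bound on the kernel followed by a Gronwall-type estimate for $\big(\int_x^c|u|^2r\,dt\big)^{1/2}$ as $x\downarrow a$ forces $u\in L^2((a,c);r\,dx)$ as well. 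Hence if all solutions are square-integrable near $a$ for one $z_0$, the same holds for every $z$; this already separates case $(i)$ from the first clause of case $(ii)$ uniformly in $z$.

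Next I would run Weyl's nested-disk construction to pin down the refined count in case $(ii)$. Fixing $z$ with $\Im(z)>0$ (without loss of generality) and the fundamental system $\theta(z,\cdot),\phi(z,\cdot)$, I impose a real boundary condition at a variable point $x\in(a,c)$ on the pencil $\psi_m=\theta+m\phi$. The Lagrange identity
\[
(\bar z-z)\int_x^c|\psi_m|^2 r\,dt=W(\overline{\psi_m},\psi_m)(c)-W(\overline{\psi_m},\psi_m)(x)
\]
shows that the admissible $m$ fill a closed disk $D_x$ whose interior is cut out by $\int_x^c|\psi_m|^2 r\,dt\le \Im(m)/\Im(z)$. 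These disks are nested and shrink as $x\downarrow a$, so they converge either to a limit disk of positive radius or to a single limit point $m_a$. In the former (limit-circle) situation every $\psi_m$, hence every solution, lies in $L^2$ near $a$, giving case $(i)$; in the latter (limit-point) situation $\psi_{m_a}$ is, up to constant multiples, the unique square-integrable solution near $a$, giving the final clause of $(ii)$. Since the limiting disk or point always corresponds to genuine $L^2$ solutions, the required existence for nonreal $z$ is automatic.

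The step I expect to be the main obstacle is the Gronwall argument of the first part under the weak Hypothesis \ref{h1}: because $1/p$, $q$, and $r$ are only locally integrable near $a$, the kernel estimates and the limit $x\downarrow a$ must be carried out entirely in terms of the $L^2(r)$-norms of $u_1,u_2$, with no recourse to pointwise or $L^\infty$ bounds on the coefficients. By contrast, the Weyl-disk computation is classical and essentially algebraic once the Lagrange identity is recorded; the only care needed there is to verify the monotone nesting of the $D_x$ and that the limiting classification (disk versus point) does not depend on the chosen normalization point $c$.
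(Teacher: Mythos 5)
Your proposal is correct and follows exactly the classical route: the paper itself gives no proof of Theorem \ref{At1} --- Appendix \ref{appendix} quotes it as standard, citing Coddington--Levinson, Naimark, Teschl, Weidmann, and Zettl --- and the proof in those references is precisely your two-step argument, namely invariance in $z$ of the property ``all solutions lie in $L^2((a,c);r\,dx)$'' via the variation-of-parameters Volterra equation with estimates purely in $L^2(r\,dx)$-norms, followed by Weyl's nested-disk construction, which yields at least one square-integrable solution for nonreal $z$ and uniqueness up to constant multiples in the limit-point case (two independent $L^2$ solutions would force every solution to be $L^2$, contradicting the first clause of case $(ii)$). The one technical point you flag --- avoiding any pointwise control on $1/p,q,r$ near $a$ --- indeed goes through as you sketch: setting $\rho(x)=\int_x^c|u|^2r\,dt$, the Cauchy--Schwarz bound on the Volterra kernel gives $-\rho'\le g+h\rho$ a.e.\ with $g,h\in L^1((a,c);dx)$ constructed only from $|u_1|^2r,\,|u_2|^2r$ and their $L^2(r\,dx)$-norms, so Gronwall's inequality bounds $\rho$ uniformly as $x\downarrow a$ without even shrinking the base point $c$.
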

This theorem naturally leads to the limit point and limit circle classification of $\tau$ at an endpoint of the interval as follows
\begin{definition}  \label{LCLP}
In case $(i)$ of Theorem \ref{At1}, $\tau$ is said to be in the \textit{limit circle $($LC$)$ case} at $a$ $($resp., at $b$$)$ and is called quasi-regular at $a$ $($resp., at $b$$)$. 
\\ [1mm]
In case $(ii)$ of Theorem \ref{At1}, $\tau$ is said to be in the \textit{limit point $($LP$)$ case} at $a$ $($resp., at $b$$)$. \\[1mm]
If $\tau$ is in the limit circle case at $a$ and $b$ then $\tau$ is also called \textit{quasi-regular} on $(a,b)$. 
\end{definition}
The LP and LC classification of an endpoint depends entirely on the particular form of the functions $p(x)$, $q(x)$, and $r(x)$ comprising the differential expression 
$\tau$ (consult \cite[Ch. 7]{Ze05} for the relevant results).
Before proceeding with the characterization of the self-adjoint extensions in this singular case, one needs to be sure that such extensions indeed exist. To this end, one introduces the notion of deficiency indices.
\begin{definition}  
The positive and negative deficiency spaces of $T_{min}$ are given, respectively, by
\begin{align}
\cD_{+}=\{f\in\dom(T_{max})|(T_{max} - i I)f=0\},\quad \cD_{-}=\{f\in\dom(T_{max})|(T_{max} + i I)f=0\}.
\end{align}
The positive integers 
\begin{align}
n_\pm(T_{min}) &= \dim(\cD_{\pm}) 
\end{align}  
are called the positive and negative deficiency indices of $T_{min}$.
\end{definition}
It is clear, from this definition, that the positive deficiency index $n_{+}$ is simply equal to the number of solutions $f\in\Lr$ of the equation $(\tau-iI)f=0$  
(and similarly for $n_{-}$). According to the general theory of Sturm--Liouville operators, $n_{\pm}\leq 2$ and the existence of self-adjoint extensions of $T_{min}$ is 
determined by the following:
\begin{theorem}  
The minimal operator $T_{min}$ has self-adjoint extensions if and only if $n_{-}=n_{+}$. 
\end{theorem}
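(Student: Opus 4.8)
The plan is to reduce the statement to abstract von Neumann extension theory and to prove the criterion through the Cayley transform. First I would record that $T_{min}$ is a densely defined, closed, symmetric operator on $\cH = \Lr$ with $T_{min}^* = T_{max}$; this follows from $(T_{min,0})^* = T_{max}$ together with $T_{min} = \overline{T_{min,0}}$, since passing to the closure does not alter the adjoint. Consequently the deficiency spaces introduced above are precisely $\cD_{\pm} = \ker(T_{min}^* \mp iI)$, and the symmetry of $T_{min}$ yields the fundamental identity
\begin{equation}
\norm{(T_{min} \pm iI) f}^2 = \norm{T_{min} f}^2 + \norm{f}^2, \quad f \in \dom(T_{min}),
\end{equation}
from which $T_{min} \pm iI$ are injective with closed range, and $\ran(T_{min} \pm iI)^\perp = \ker(T_{min}^* \mp iI) = \cD_{\pm}$.

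Next I would introduce the Cayley transform $V = (T_{min} - iI)(T_{min} + iI)^{-1}$. By the identity above, $V$ is a well-defined isometry from the closed subspace $\ran(T_{min}+iI)$ onto the closed subspace $\ran(T_{min}-iI)$, and by the previous paragraph these have orthogonal complements $\cD_+$ and $\cD_-$, respectively. The key structural fact I would then invoke (or prove directly) is the bijective correspondence between closed symmetric extensions of $T_{min}$ and isometric extensions of $V$, under which the self-adjoint extensions correspond exactly to unitary operators $U$ on all of $\cH$ that extend $V$. The recovery of the operator from $U$ uses the inverse Cayley transform $T_U = i(I+U)(I-U)^{-1}$; here one checks that $I - V$ already has range $\dom(T_{min})$ (indeed $(I-V)(T_{min}+iI)f = 2if$), which is dense, so that the recovered operator is densely defined.

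With this correspondence in hand, the existence question becomes purely geometric: an isometry $V$ from $\ran(T_{min}+iI)$ onto $\ran(T_{min}-iI)$ extends to a unitary operator on $\cH$ if and only if it can be supplemented by an isometry of $\ran(T_{min}+iI)^\perp = \cD_+$ onto $\ran(T_{min}-iI)^\perp = \cD_-$, and such a map exists precisely when $\dim \cD_+ = \dim \cD_-$, i.e.\ when $n_+ = n_-$. Equivalently, phrased through von Neumann's second formula, the self-adjoint extensions are parametrised by the unitary maps $U \colon \cD_+ \to \cD_-$ via the domain decomposition $\dom(T_{min}) \dotplus \{f_+ + U f_+ : f_+ \in \cD_+\}$, and the collection of such $U$ is nonempty exactly when the deficiency indices coincide. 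This establishes the asserted equivalence.

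The main obstacle I anticipate is not the counting step but establishing the correspondence between self-adjoint extensions and unitary extensions of $V$ with full rigor: one must verify that every unitary $U$ extending $V$ has $I - U$ injective with dense range, so that the inverse Cayley transform produces a densely defined operator, and conversely that the resulting operator is genuinely self-adjoint rather than merely symmetric. This is where the graph-norm orthogonality underlying von Neumann's first formula does the real work, and it is the step I would treat most carefully; the remainder is bookkeeping within Weyl--Titchmarsh--Kodaira theory, where in addition $n_\pm \leq 2$ by Theorem \ref{At1}.
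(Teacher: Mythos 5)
Your proof is correct: the reduction to von Neumann's extension theory via the Cayley transform, the identification $\ran(T_{min}\pm iI)^{\perp}=\cD_{\pm}$, and the counting argument (an isometry of $\cD_+$ onto $\cD_-$ exists precisely when $n_+=n_-$, trivially so here since $n_\pm\le 2$) are all accurate. The paper itself offers no proof of this statement — it appears in Appendix \ref{appendix} as a standard fact deferred to the cited references — and your argument is exactly the classical von Neumann/Cayley-transform proof those references contain, so the approaches coincide.
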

In the case of quasi-regular Sturm--Liouville operators considered in this work, the endpoints of the interval are limit circle and the following ensures the existence of self-adjoint extensions of $T_{min}$.
\begin{theorem}  
If $\tau$ is limit circle at both $a$ and $b$, then $n_{-}=n_{+}=2$. 
\end{theorem}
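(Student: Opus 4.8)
The plan is to establish both equalities at once by showing that, for each of the two spectral values $z=\pm i$, the \emph{entire} two-dimensional solution space of $(\tau-z)u=0$ is contained in $\Lr$. Since $\cD_\pm$ consists precisely of those solutions of $(\tau\mp iI)u=0$ that lie in $\dom(T_{max})$, this will identify $\cD_\pm$ with the full solution space and force $n_\pm=2$. First I would record that under Hypothesis \ref{h1} one has $1/p,q,r\in\Ll$, so $\tau$ is a regular differential expression on every compact subinterval $[c,d]\subset(a,b)$; consequently the initial value problem for $(\tau-z)u=0$ posed at an interior point $x_0\in(a,b)$ has a unique solution with $u,u^{[1]}\in\ACl$, and the solution space $S_z:=\{u:(\tau-z)u=0\}$ on $(a,b)$ is exactly two-dimensional.

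Fix $z=i$. Because $\tau$ is limit circle at $a$, case $(i)$ of Theorem \ref{At1} applies there and guarantees that every element of $S_i$ lies in $\Lr$ near $a$; because $\tau$ is also limit circle at $b$, the same theorem applied at $b$ gives that every element of $S_i$ lies in $\Lr$ near $b$. The next step is to patch these two local statements into a global one. For $u\in S_i$ choose $c,d$ with $a<c<d<b$ and split $\int_a^b r|u|^2\,dx=\int_a^c+\int_c^d+\int_d^b$: the first and third integrals are finite by the endpoint statements just obtained, while on $[c,d]$ the function $u$ is continuous (being in $\ACl$) hence bounded, and $r\in\Ll$, so the middle integral is finite as well. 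Thus $u\in\Lr$, and since in addition $\tau u=iu\in\Lr$ with $u,u^{[1]}\in\ACl$, we conclude $u\in\dom(T_{max})$ and $(T_{max}-iI)u=0$. Hence $S_i\subseteq\cD_+$; the reverse inclusion is immediate from the definition of $\cD_+$, so $\cD_+=S_i$ and $n_+=\dim\cD_+=\dim S_i=2$. Repeating the argument verbatim with $z=-i$ yields $\cD_-=S_{-i}$ and $n_-=2$.

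I do not expect a genuine obstacle here, as the argument is essentially routine; the one point requiring care is precisely the transition from the local membership ``near $a$'' / ``near $b$'' furnished by Weyl's alternative to global membership in $\Lr$, together with the accompanying verification that a square-integrable solution actually lands in $\dom(T_{max})$. Both hinge on the local regularity encoded in Hypothesis \ref{h1} (namely $1/p,q,r\in\Ll$), which renders the interior behavior harmless and simultaneously ensures that $S_z$ is genuinely two-dimensional, so that the final dimension count is legitimate.
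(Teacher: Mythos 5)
Your proof is correct. The paper itself states this theorem without proof, presenting it in the appendix as a standard fact of Weyl--Titchmarsh theory and deferring to the cited references (e.g., Zettl's monograph); your argument --- identifying $\cD_\pm$ with the full two-dimensional solution spaces of $(\tau \mp iI)u=0$ by combining the limit-circle square-integrability near each endpoint with local integrability of $r$ on compact subintervals, and then checking membership in $\dom(T_{max})$ --- is exactly the standard proof found in those references, so there is nothing to object to.
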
 
Once the existence of self-adjoint extensions has been established, the next goal is to obtain their explicit representation. This can be accomplished with the introduction 
of {\it generalized boundary values}.
\begin{definition}\label{defigeneral}  
Let $v_j \in \dom(T_{max})$, $j=1,2$, satisfy 
\begin{equation}
W(\ol{v_1}, v_2)(a) = W(\ol{v_1}, v_2)(b) = 1, \quad W(\ol{v_j}, v_j)(a) = W(\ol{v_j}, v_j)(b) = 0, \; j= 1,2.  
\end{equation}
$($E.g., real-valued solutions $v_j$, $j=1,2$, of $(\tau - \lambda) u = 0$ with $\lambda \in \bbR$, such that 
$W(v_1,v_2) = 1$.$)$ For $g\in\dom(T_{max})$, its generalized boundary values are   
\begin{align}
\begin{split} 
\wti g_1(a) &= - W(v_2, g)(a), \quad \wti g_1(b) = - W(v_2, g)(b),    \\
\wti g_2(a) &= W(v_1, g)(a), \quad \;\,\,\, \wti g_2(b) = W(v_1, g)(b).   
\end{split} 
\end{align}
\end{definition}
The generalized boundary values of a function in the maximal domain allow for a representation of the self-adjoint extensions of $T_{min}$ that mimics exactly the one 
obtained in the case of regular Sturm--Liouville operators. In fact one can prove the following, which is familiar from the regular setting.   
\begin{theorem}\label{extensions}
$(i)$ All self-adjoint extensions $T_{\al,\be}$ of $T_{min}$ with separated boundary conditions are of the form
\begin{align}
& T_{\al,\be} f = \tau f, \quad \al,\be\in[0,\pi),   \notag\\
& f \in \dom(T_{\al,\be})=\big\{g\in\dom(T_{max}) \, \big| \, \wti g_1(a)\cos(\al)+ \wti g_2(a)\sin(\al)=0;   \label{A1} \\ 
& \hspace*{5.5cm} \, \wti g_1(b)\cos(\be)- \wti g_2(b)\sin(\be) = 0 \big\}.    \notag
\end{align}
$(ii)$ All self-adjoint extensions $T_{\varphi,R}$ of $T_{min}$ with coupled boundary conditions are of the form
\begin{align}
\begin{split}\label{A2} 
& T_{\varphi,R} f = \tau f,    \\
& f \in \dom(T_{\varphi,R})=\bigg\{g\in\dom(T_{max}) \, \bigg| \begin{pmatrix} \wti g_1(b)\\ \wti g_2(b)\end{pmatrix} 
= e^{i\varphi}R \begin{pmatrix}
\wti g_1(a)\\ \wti g_2(a)\end{pmatrix} \bigg\},
\end{split}
\end{align}
where $\varphi\in[0,\pi)$, and $R \in SL(2,\bbR)$ .  \\[1mm] 
$(iii)$ Every self-adjoint extension of $T_{min}$ is either of type $(i)$ or of type 
$(ii)$.
\end{theorem}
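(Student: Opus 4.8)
The plan is to deduce the full classification from the Glazman--Krein--Naimark (GKN) description of self-adjoint extensions, using the boundary form together with the generalized boundary values of Definition~\ref{defigeneral}. Since $\tau$ is limit circle at both endpoints, the preceding deficiency-index theorem gives $n_+ = n_- = 2$; hence self-adjoint extensions exist and the quotient $\dom(T_{max})/\dom(T_{min})$ is four-dimensional.

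First I would record Lagrange's (second Green's) identity: for all $f,g \in \dom(T_{max})$,
\[
\langle T_{max} f, g\rangle_{L^2_r} - \langle f, T_{max} g\rangle_{L^2_r} = W(\ol f, g)(b) - W(\ol f, g)(a),
\]
where the two Wronskian limits exist precisely because $\tau$ is in the limit circle case at each endpoint. The right-hand side is a nondegenerate skew-Hermitian boundary form, and the GKN theorem identifies the self-adjoint extensions of $T_{min}$ with the (here two-dimensional) subspaces of $\dom(T_{max})/\dom(T_{min})$ on which this form vanishes.

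Next I would pass to the coordinates furnished by the generalized boundary values. Using $\wti g_1(\dott) = -W(v_2,g)(\dott)$ and $\wti g_2(\dott) = W(v_1,g)(\dott)$ together with the normalizations $W(\ol{v_1},v_2) = 1$ and $W(\ol{v_j},v_j) = 0$ at each endpoint, expanding $f,g$ in the local basis $v_1,v_2$ gives
\[
W(\ol f, g)(x_0) = \ol{\wti f_1(x_0)}\,\wti g_2(x_0) - \ol{\wti f_2(x_0)}\,\wti g_1(x_0), \quad x_0 \in \{a,b\}.
\]
The assignment $g \mapsto (\wti g_1(a), \wti g_2(a), \wti g_1(b), \wti g_2(b)) \in \bbC^4$ is surjective with kernel exactly $\dom(T_{min})$, so it identifies the quotient with $\bbC^4$ and carries the boundary form to a fixed nondegenerate Hermitian form that splits as an endpoint-$b$ contribution minus an endpoint-$a$ contribution.

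Finally I would classify the two-dimensional isotropic (Lagrangian) subspaces of this form. Those that leave the two endpoints uncoupled amount to imposing one real one-dimensional condition at each endpoint, parametrized by angles $\alpha,\beta \in [0,\pi)$, which produces the separated extensions~\eqref{A1}. The remaining isotropic subspaces are graphs of a linear map sending the boundary pair at $a$ to the boundary pair at $b$; invariance of the Hermitian form forces this map to have the form $e^{i\varphi}R$ with $\varphi \in [0,\pi)$ and $R \in SL(2,\bbR)$, giving the coupled extensions~\eqref{A2}. Assertion~(iii) is then exactly the statement that every two-dimensional isotropic subspace falls into one of these two families. I expect this last classification to be the main obstacle: one must verify that each self-adjoint relation in the two-dimensional boundary space reduces to precisely one of the stated normal forms, extracting the determinant-one and phase constraints of $e^{i\varphi}SL(2,\bbR)$ from preservation of the boundary form, without over- or under-counting. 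By contrast the analytic ingredients---Lagrange's identity and the existence of the Wronskian limits---are routine consequences of the limit circle hypothesis.
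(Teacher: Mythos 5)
You should first be aware that the paper contains no proof of Theorem \ref{extensions}: it is stated in Appendix \ref{appendix} as part of a summary of standard singular Weyl--Titchmarsh--Kodaira theory, with the proof delegated to the cited literature (e.g., \cite{Ze05}, \cite{GLN20}, \cite{GNZ23}). Your framework is exactly the Glazman--Krein--Naimark route those references take: deficiency indices $(2,2)$ from the limit circle hypothesis, Lagrange's identity, the identification of $\dom(T_{max})/\dom(T_{min})$ with $\bbC^4$ via the generalized boundary values of Definition \ref{defigeneral}, and the correspondence between self-adjoint extensions and two-dimensional isotropic subspaces of the boundary form. Your Wronskian formula $W(\ol f,g)(x_0)=\ol{\wti f_1(x_0)}\,\wti g_2(x_0)-\ol{\wti f_2(x_0)}\,\wti g_1(x_0)$ is correct (it follows from expanding in the basis $v_1,v_2$ with $W(v_1,v_2)=1$), as is the claim that the boundary map is surjective with kernel $\dom(T_{min})$.

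The genuine gap is that the proposal stops exactly where the content of the theorem begins: the classification of the two-dimensional isotropic planes is announced (``the main obstacle'') but never carried out, and it is precisely this step that produces statements $(i)$--$(iii)$. Three verifications are missing. First, the dichotomy itself: you must show that an isotropic plane which is not of separated type necessarily projects bijectively onto the boundary plane at $a$ and onto the boundary plane at $b$, hence is the graph of an invertible matrix $M$; a priori there are ``partially coupled'' planes, such as $\{\wti g_1(a)=0,\ \wti g_1(b)+\wti g_2(a)=0\}$, which are neither separated nor graphs, and one must check that no such plane is isotropic (a short computation with the form $\ol{\wti f_1(b)}\wti g_2(b)-\ol{\wti f_2(b)}\wti g_1(b)-\ol{\wti f_1(a)}\wti g_2(a)+\ol{\wti f_2(a)}\wti g_1(a)$ rules this example out, but the general rank argument is the crux of $(iii)$). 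Second, for graph-type planes, isotropy reads $M^*JM=J$ with $J=\left(\begin{smallmatrix}0&1\\-1&0\end{smallmatrix}\right)$, and extracting the normal form requires combining this with the purely algebraic identity $M^TJM=(\det M)J$, valid for all $2\times2$ matrices, to get $\ol M=(\det M)^{-1}M$, whence $|\det M|=1$, $\det M=e^{2i\varphi}$, and $R:=e^{-i\varphi}M$ real with $\det R=1$; this is the computation that produces $e^{i\varphi}SL(2,\bbR)$ in \eqref{A2} and it does not appear in the proposal. Third, in the separated case one must show that isotropy at a single endpoint forces the one linear condition $c_1\wti g_1+c_2\wti g_2=0$ to have $c_1/c_2\in\bbR$ (when $c_2\neq0$), which is what justifies the real angle parametrization in \eqref{A1}. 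These are all finite-dimensional linear-algebra arguments, not obstacles of principle, but together they constitute the theorem; deferring them leaves the attempt as a correct setup rather than a proof.
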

We would like to point out that if the endpoints $a$ and $b$ are regular, then the generalized boundary values reduce to the ordinary boundary values (under appropriate choices of $v_j$ in \eqref{A8}) and the self-adjoint 
extensions described in Theorem \ref{extensions} become the self-adjoint extensions of a regular Sturm--Liouville operator in $\Lr$. 
As we have mentioned in Section \ref{s2}, this work is concerned mainly with quasi-regular operators that are bounded from below.
\begin{definition}\label{bound}
Let $\lambda_0 \in \bbR$. Then $T_{min}$ is called bounded from below by $\lambda_0$, 
and one writes $T_{min} \geq \lambda_0 I$, if 
\begin{equation} 
(u, [T_{min} - \lambda_0 I]u)_{L^2((a,b);rdx)}\geq 0, \quad u \in \dom(T_{min}).
\end{equation}
\end{definition}
We would like to mention that if $T_{min}$ is bounded from below, then all of its symmetric extensions are also bounded from below, and
for this class of operators one can 
introduce the notion of principal and nonprincipal solutions of $\tau u = \lambda u$ for appropriate $\lambda \in \bbR$. We will show that principal and nonprincipal solutions provide us with a representation of the generalized boundary values that is quite convenient when performing
explicit calculations.        
We start by reviewing some oscillation theory with particular emphasis on principal and nonprincipal solutions, a notion originally due to Leighton and Morse \cite{LM36} (see also Rellich \cite{Re43}, \cite{Re51} and Hartman and Wintner \cite[Appendix]{HW55}) (see also \cite{CGN16}, \cite[Sects.~13.6, 13.9, 13.0]{DS88}, 
\cite[Ch.~XI]{Ha02}, \cite{NZ92}, \cite[Chs.~4, 6--8]{Ze05}).
\begin{definition}
Assume Hypothesis \ref{h1}. Fix $c\in (a,b)$ and $\lambda\in\bbR$. Then $\tau - \lam$ is
called {\it nonoscillatory} at $a$ $($resp., $b$$)$, 
if every real-valued solution $u(\lambda,\dott)$ of 
$\tau u = \lambda u$ has finitely many
zeros in $(a,c)$ $($resp., $(c,b)$$)$. Otherwise, $\tau - \lam$ is called {\it oscillatory}
at $a$ $($resp., $b$$)$.
\end{definition} 
The following result ensures that when $T_{min}$ is bounded from below then $\tau - \lam$ is nonoscillatory at both endpoints (and vice-versa). 
More precisely   
\begin{theorem} \label{At2} 
The following items $(i)$--$(iii)$ are
equivalent$:$ \\[1mm] 
$(i)$ $T_{min}$ $($and hence any symmetric extension of $T_{min})$
is bounded from below. \\[1mm] 
$(ii)$ There exists a $\nu_0\in\bbR$ such that for all $\lambda < \nu_0$, $\tau - \lam$ is
nonoscillatory at $a$ and $b$. \\[1mm]
$(iii)$ For fixed $c, d \in (a,b)$, $c \leq d$, there exists a $\nu_0\in\bbR$ such that for all
$\lambda<\nu_0$, $\tau u = \lambda u$ has $($real-valued\,$)$ nonvanishing solutions
$u_a(\lambda,\dott) \neq 0$,
$\hatt u_a(\lambda,\dott) \neq 0$ in $(a,c]$, and $($real-valued\,$)$ nonvanishing solutions
$u_b(\lambda,\dott) \neq 0$, $\hatt u_b(\lambda,\dott) \neq 0$ in $[d,b)$, such that 
\begin{align}
&W(\hatt u_a (\lambda,\dott),u_a (\lambda,\dott)) = 1,
\quad u_a (\lambda,x)=\oh(\hatt u_a (\lambda,x))
\text{ as $x\downarrow a$,} \label{A3} \\
&W(\hatt u_b (\lambda,\dott),u_b (\lambda,\dott))\, = 1,
\quad u_b (\lambda,x)\,=\oh(\hatt u_b (\lambda,x))
\text{ as $x\uparrow b$,} \\
&\int_a^c dx \, p(x)^{-1}u_a(\lambda,x)^{-2}=\int_d^b dx \, 
p(x)^{-1}u_b(\lambda,x)^{-2}=\infty,\label{A4} \\
&\int_a^c dx \, p(x)^{-1}{\hatt u_a(\lambda,x)}^{-2}<\infty, \quad 
\int_d^b dx \, p(x)^{-1}{\hatt u_b(\lambda,x)}^{-2}<\infty. \label{A5}
\end{align}
\end{theorem}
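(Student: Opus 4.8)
The plan is to establish the three implications $(i)\Rightarrow(ii)$, $(iii)\Rightarrow(i)$, and $(ii)\Rightarrow(iii)$, which together close the cycle, relying throughout on Sturm's comparison theorem and the factorization (Jacobi/Riccati) identity. As a preliminary observation, Sturm comparison shows that for each endpoint the set of $\lambda$ for which $\tau-\lambda$ is nonoscillatory is downward closed, i.e.\ of the form $(-\infty,\Lambda_a)$ (resp.\ $(-\infty,\Lambda_b)$); hence the negation of $(ii)$ asserts that $\tau-\lambda$ is oscillatory at $a$ (or at $b$) for \emph{every} $\lambda\in\bbR$. To prove $(i)\Rightarrow(ii)$ I would argue by contraposition: assuming $(ii)$ fails, say $\tau-\mu$ is oscillatory at $a$ for all $\mu$, fix an arbitrary $\lambda_0\in\bbR$ and put $\mu=\lambda_0-1$. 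A real solution $u$ of $\tau u=\mu u$ has consecutive zeros $x_{n+1}<x_n$ accumulating at $a$; extending $u|_{[x_{n+1},x_n]}$ by zero yields $f\in\dom(T_{min,0})$, and integrating by parts (boundary terms vanish at the zeros) gives
\[
(f,(\tau-\lambda_0)f)_{\Lr}=(\mu-\lambda_0)\int_{x_{n+1}}^{x_n} r\,|u|^2\,dx=-\int_{x_{n+1}}^{x_n} r\,|u|^2\,dx<0,
\]
so $T_{min}\not\geq\lambda_0 I$ for any $\lambda_0$ and $T_{min}$ is unbounded below.

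For $(iii)\Rightarrow(i)$ I would use factorization. Each nonvanishing solution furnished by $(iii)$ certifies nonoscillation at the corresponding endpoint, and since the equation is regular in the interior its solutions have only isolated zeros, so oscillation can accumulate only at an endpoint. Hence for $\lambda$ sufficiently negative Sturm monotonicity pushes all zeros out and produces a solution $u_\lambda$ of $\tau u_\lambda=\lambda u_\lambda$ that is positive on all of $(a,b)$. Writing $f=u_\lambda g$ for $f\in\dom(T_{min,0})$ and integrating by parts yields
\[
(f,(\tau-\lambda)f)_{\Lr}=\int_a^b p\,|u_\lambda|^2\,|g'|^2\,dx\geq 0,
\]
whence $T_{min}\geq\lambda I$, giving $(i)$.

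The remaining and most delicate step is $(ii)\Rightarrow(iii)$, the construction of principal and nonprincipal solutions. Fixing $\lambda<\nu_0$, nonoscillation at $a$ provides $c$ near $a$ and a solution $v>0$ on $(a,c]$; every solution there has the form $v(x)\big[A+B\int_x^c p^{-1}v^{-2}\,dt\big]$, and the dichotomy $\int_a^c p^{-1}v^{-2}=\infty$ versus $<\infty$ singles out the principal solution $u_a$ (satisfying \eqref{A4}) from a nonprincipal $\hatt u_a$ (satisfying \eqref{A5}). If the integral diverges one takes $u_a=v$ and $\hatt u_a(x)=v(x)\int_x^c p^{-1}v^{-2}\,dt$; if it converges one takes $\hatt u_a=v$ and $u_a(x)=v(x)\int_a^x p^{-1}v^{-2}\,dt$. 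In either case $u_a/\hatt u_a\to 0$ as $x\downarrow a$, giving the little-$\oh$ relation in \eqref{A3}, while rescaling arranges $W(\hatt u_a,u_a)=1$; the symmetric construction at $b$ produces $u_b,\hatt u_b$. The main obstacle lies precisely here: rigorously establishing the integrability dichotomy and the asymptotic comparison at the \emph{singular} endpoint, where boundary data cannot be evaluated directly but must be controlled through the Riccati flow of $p\,v'/v$ as $x\downarrow a$. This is the Leighton--Morse--Hartman theory of principal solutions, and once it is in place the three implications close the equivalence.
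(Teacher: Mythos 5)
First, a point of comparison: the paper never proves Theorem \ref{At2}. It appears in Appendix \ref{appendix} as background material, and the appendix states explicitly that everything there is standard, deferring to \cite{Ze05}, \cite{GLN20}, \cite{GNZ23} and the other classical references listed; so your proposal has to be measured against those classical proofs rather than against an argument in the paper. Two of your three implications do follow the classical route. In $(i)\Rightarrow(ii)$ the truncation argument is the standard one, but note a technical slip: the sharp truncation $f$ of a solution $u$ between consecutive zeros is \emph{not} in $\dom(T_{min,0})$, because its quasi-derivative $pf'$ jumps at the two zeros (where $u'\neq 0$); the expression $(f,(\tau-\lambda_0)f)_{\Lr}$ is therefore ill-defined at the operator level, and the argument must be run through the quadratic form $\int(p|f'|^2+q|f|^2)\,dx$ together with an approximation of $f$ by elements of $\dom(T_{min,0})$ (or a smoothing of the corners). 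In $(ii)\Rightarrow(iii)$ your construction is exactly the Leighton--Morse/Hartman one recorded in the paper's formulas \eqref{nonprin} and \eqref{prin}; the dichotomy is simply whether $\int_a p^{-1}v^{-2}$ converges or diverges, and the normalization $W(\hatt u_a,u_a)=1$, the $\oh$-relation in \eqref{A3}, and the integrability claims \eqref{A4}--\eqref{A5} all follow from the substitution $s=\int_x^c p^{-1}v^{-2}\,dt$, so no essential difficulty is hidden there.

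The genuine gap is in $(iii)\Rightarrow(i)$. You assert that ``for $\lambda$ sufficiently negative Sturm monotonicity pushes all zeros out,'' producing a solution $u_\lambda>0$ on all of $(a,b)$, and then conclude via the Jacobi factorization. But nonoscillation at both endpoints---even for every $\lambda<\nu_0$, with the full principal/nonprincipal structure of $(iii)$---only gives finitely many zeros per solution, and Sturm comparison only gives that the number of interior zeros essentially does not increase as $\lambda$ decreases; an integer-valued nonincreasing count need not reach zero, and nothing in your argument excludes a zero persisting for every $\lambda$. In fact, by the very factorization you invoke, the existence of a zero-free solution on $(a,b)$ (disconjugacy) is equivalent to nonnegativity of the form $\int\big(p|f'|^2+(q-\lambda r)|f|^2\big)\,dx$ over compactly supported trial functions, i.e., essentially to the bound $T_{min}\geq\lambda I$ that you are trying to prove. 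Already in the bounded-from-below case, for $\lambda$ strictly between the two lowest eigenvalues of the Friedrichs extension the equation is nonoscillatory at both endpoints yet admits no zero-free solution; global positivity sets in only below the bottom of the spectrum, which is exactly what is at stake. As written, this step is therefore circular. Closing it requires a genuine additional argument: for instance, a localization/decomposition of the quadratic form in which the Jacobi factorization with the nonvanishing solutions of $(iii)$ controls trial functions supported near $a$ and near $b$, the classical boundedness from below of the \emph{regular} problem controls a compact middle interval, and the interface terms are handled (Glazman's decomposition principle); or alternatively an eigenvalue/zero-counting argument for principal solutions. That step is the actual content of the implication, and it is precisely what the classical proofs in the references cited by the paper supply.
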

The existence of solutions with certain growth properties, (\ref{A4}) and \eqref{A5} as guaranteed by item $(iii)$ in the above theorem, allows us to define principal
and nonprincipal solutions as follows:
\begin{definition}
Suppose that $T_{min}$ is bounded from below, and let 
$\lambda\in\bbR$. Then $u_a(\lambda,\dott)$ $($resp., $u_b(\lambda,\dott)$$)$ in Theorem \ref{At2}\,$(iii)$ is called a {\it principal} $($or {\it minimal}\,$)$
solution of $\tau u=\lambda u$ at $a$ $($resp., $b$$)$. A real-valued solution 
$v_a(\lambda,\dott)$ $($resp., $v_b(\lambda,\dott)$$)$ of $\tau
u=\lambda u$ linearly independent of $u_a(\lambda,\dott)$ $($resp.,
$u_b(\lambda,\dott)$$)$ is called {\it nonprincipal} at $a$ $($resp., $b$$)$.
\end{definition}
Once the existence of principal and nonprincipal solutions has been established, one needs an efficient procedure to construct them.  
First, we have the following:
\begin{lemma} \label{l2.11} The functions $u_a(\lambda,\dott)$ and $u_b(\lambda,\dott)$ in Theorem
\ref{At2}\,$(iii)$ are unique up to nonvanishing real constant multiples. Moreover,
$u_a(\lambda,\dott)$ and $u_b(\lambda,\dott)$ are minimal solutions of
$\tau u=\lambda u$ in the sense that 
\begin{align}
u(\lambda,x)^{-1} u_a(\lambda,x)&=\oh(1) \text{ as $x\downarrow a$,} \\ 
u(\lambda,x)^{-1} u_b(\lambda,x)&=\oh(1) \text{ as $x\uparrow b$,}
\end{align}
for any other solution $u(\lambda,\dott)$ of $\tau u=\lambda u$
$($nonvanishing near $a$, resp., $b$$)$ with
$W(u_a(\lambda,\dott),u(\lambda,\dott))\neq 0$, respectively, 
$W(u_b(\lambda,\dott),u(\lambda,\dott))\neq 0$.
\end{lemma}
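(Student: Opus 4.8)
The plan is to prove both assertions at the endpoint $a$; the argument at $b$ is identical after replacing $x\downarrow a$ by $x\uparrow b$ and the integrals over $(a,c]$ by integrals over $[d,b)$. First I would fix the real-valued fundamental system $\{u_a(\lam,\dott),\hatt u_a(\lam,\dott)\}$ furnished by Theorem \ref{At2}\,$(iii)$, normalized by $W(\hatt u_a(\lam,\dott),u_a(\lam,\dott))=1$, and record the three facts that drive everything: the principal solution is small relative to the nonprincipal one, $u_a(\lam,x)=\oh(\hatt u_a(\lam,x))$ as $x\downarrow a$ by \eqref{A3}; it has a divergent weighted integral $\int_a^c p(x)^{-1}u_a(\lam,x)^{-2}\,dx=\infty$ by \eqref{A4}; and $\hatt u_a(\lam,\dott)$ has a convergent one, $\int_a^c p(x)^{-1}\hatt u_a(\lam,x)^{-2}\,dx<\infty$ by \eqref{A5}. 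Since $\tau-\lam$ is nonoscillatory at $a$ (Theorem \ref{At2}\,$(ii)$), each of $u_a$, $\hatt u_a$ has finitely many zeros near $a$ and is therefore nonvanishing on a sufficiently small right neighborhood of $a$, so all ratios written below are well defined there.

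For the minimality statement, let $u(\lam,\dott)$ be any solution that is nonvanishing near $a$ with $W(u_a(\lam,\dott),u(\lam,\dott))\neq 0$. Expanding $u$ in the fundamental system gives $u=\alpha\,u_a+\beta\,\hatt u_a$, and the Wronskian condition forces $\beta\neq 0$, since otherwise $u$ would be proportional to $u_a$. Dividing by $u_a$ and using that \eqref{A3} is equivalent to $\hatt u_a(\lam,x)/u_a(\lam,x)\to\infty$ as $x\downarrow a$, I obtain $u/u_a=\alpha+\beta\,(\hatt u_a/u_a)\to\pm\infty$, whence $u_a(\lam,x)/u(\lam,x)\to 0$; this is exactly $u(\lam,x)^{-1}u_a(\lam,x)=\oh(1)$ as claimed.

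For uniqueness, suppose $\tilde u_a(\lam,\dott)$ is a second principal solution at $a$, and write $\tilde u_a=\alpha\,u_a+\beta\,\hatt u_a$. Assume toward a contradiction that $\beta\neq 0$. Using $u_a/\hatt u_a\to 0$ from \eqref{A3}, one has $\tilde u_a/\hatt u_a=\alpha\,(u_a/\hatt u_a)+\beta\to\beta\neq 0$, so near $a$ the ratio $\tilde u_a/\hatt u_a$ is bounded away from $0$ and $\infty$; the integrands $p^{-1}\tilde u_a^{-2}$ and $p^{-1}\hatt u_a^{-2}$ are then comparable, and $\int_a^c p(x)^{-1}\hatt u_a(\lam,x)^{-2}\,dx<\infty$ forces $\int_a^c p(x)^{-1}\tilde u_a(\lam,x)^{-2}\,dx<\infty$. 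This contradicts the defining property \eqref{A4} of a principal solution. Hence $\beta=0$, so $\tilde u_a=\alpha\,u_a$ with $\alpha\in\R\setminus\{0\}$ (real because both solutions are real-valued, nonzero because $\tilde u_a\not\equiv 0$), which is uniqueness up to a nonvanishing real constant.

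The one step needing care is the integral comparison in the uniqueness argument, namely translating the pointwise limit $\tilde u_a/\hatt u_a\to\beta$ into equiconvergence of the two weighted integrals. This rests on $p>0$ a.e.\ (Hypothesis \ref{h1}), on both solutions being nonvanishing near $a$, and on a standard limit-comparison test for improper integrals; I would make it rigorous by choosing $a<x_1<c$ small enough that $|\tilde u_a/\hatt u_a-\beta|<|\beta|/2$ on $(a,x_1]$, yielding two-sided bounds $c_1\,\hatt u_a^{-2}\le\tilde u_a^{-2}\le c_2\,\hatt u_a^{-2}$ there with $0<c_1<c_2$, after which convergence of one integral is equivalent to convergence of the other. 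Everything else is a direct consequence of the dichotomy already packaged in Theorem \ref{At2}\,$(iii)$.
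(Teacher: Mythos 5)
Your proof is correct. Note that the paper itself states Lemma \ref{l2.11} without proof (Appendix \ref{appendix} is a summary of standard Weyl--Titchmarsh--Kodaira theory, with proofs deferred to the cited literature), and your argument is precisely the standard one: expand any competing solution in the fundamental system $\{u_a(\lambda,\dott),\hatt u_a(\lambda,\dott)\}$, observe that $W(u_a(\lambda,\dott),u(\lambda,\dott))=-\beta$ so the Wronskian hypothesis is equivalent to $\beta\neq 0$, deduce minimality from $u_a(\lambda,x)/\hatt u_a(\lambda,x)\to 0$, and get uniqueness by the limit-comparison of $\int_a^c p^{-1}\tilde u_a^{-2}\,dx$ with the convergent integral \eqref{A5}, contradicting \eqref{A4}. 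The only step you flag as delicate is indeed the integral comparison, and your treatment is adequate: the comparison on $(a,x_1]$ suffices because on $[x_1,c]$ the integrand is automatically integrable ($\tilde u_a$ is continuous and nonvanishing there by Theorem \ref{At2}\,$(iii)$, and $1/p\in L^1_{loc}((a,b))$ by Hypothesis \ref{h1}), so divergence can only occur at the endpoint $a$.
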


The following lemma provides an explicit expression for principal and nonprincipal solutions.
\begin{lemma}
Let $u(\lambda,\dott) \neq 0$ be any nonvanishing solution of $\tau u=\lambda
u$ near $a$ $($resp., $b$$)$. Then for $c_1>a$ $($resp., $c_2<b$$)$
sufficiently close to $a$ $($resp., $b$$)$, 
\begin{align}\label{nonprin}
& \hatt u_a(\lambda,x)=u(\lambda,x)\int_x^{c_1}dx' \, 
p(x')^{-1}u(\lambda,x')^{-2} \\
& \quad \bigg(\text{resp., }\hatt u_b(\lambda,x)=u(\lambda,x)\int^x_{c_2}dx' \, 
p(x')^{-1}u(\lambda,x')^{-2}\bigg) 
\end{align} 
is a nonprincipal solution of $\tau u=\lambda u$ at $a$ $($resp.,
$b$$)$. If $\hatt u_a(\lambda,\dott)$ $($resp., $\hatt
u_b(\lambda,\dott)$$)$ is a nonprincipal solution of $\tau u=\lambda u$
at $a$ $($resp., $b$$)$ then 
\begin{align}\label{prin}
& u_a(\lambda,x)=\hatt u_a(\lambda,x)\int_a^{x}dx' \, 
p(x')^{-1}{\hatt u_a(\lambda,x')}^{-2} \\
& \quad \bigg(\text{resp., } u_b(\lambda,x)=\hatt u_b(\lambda,x)\int^b_{x}dx' \, 
p(x')^{-1}{\hatt u_b(\lambda,x')}^{-2}\bigg)
\end{align} 
is a principal solution of $\tau u=\lambda u$ at $a$ $($resp., $b$$)$. 
\end{lemma}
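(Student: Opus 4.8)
The plan is to derive both assertions from the classical reduction-of-order (d'Alembert) construction, combined with the integral criterion that distinguishes principal from nonprincipal solutions recorded in Theorem~\ref{At2}\,$(iii)$. I will treat the endpoint $a$ throughout; the statements at $b$ follow by the obvious symmetric argument, replacing $\int_x^{c_1}$ by $\int_{c_2}^x$ and reversing the roles of the endpoints.

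First I would check that the two displayed formulas genuinely produce solutions of $\tau u = \lam u$, i.e.\ of $-(u^{[1]})'+(q-\lam r)u = 0$. Writing $\hatt u_a(\lam,x)=u(\lam,x)\,w(x)$ with $w(x)=\int_x^{c_1} p(x')^{-1}u(\lam,x')^{-2}\,dx'$, one has $w'=-p^{-1}u^{-2}$, so a direct computation yields $\hatt u_a^{[1]}=u^{[1]}w-u^{-1}$ and then $(\hatt u_a^{[1]})'=(u^{[1]})'\,w$; since $u$ solves the equation, so does $\hatt u_a$. The same algebra applied to $u_a=\hatt u_a\,\tilde w$ with $\tilde w(x)=\int_a^x p^{-1}\hatt u_a^{-2}$ shows $u_a$ is a solution. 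Along the way the Wronskians collapse to constants, $W(\hatt u_a,u)=1$ (hence $\hatt u_a$ is linearly independent of $u$) and $W(\hatt u_a,u_a)=1$, matching the normalization \eqref{A3}. The choice of $c_1$ "sufficiently close to $a$" is used here only to guarantee that $u$ is nonvanishing on $(a,c_1]$, so that $w$ and the subsequent integrals are well defined.

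The core of the argument is the evaluation of the integrals $\int_a^c p^{-1}(\dott)^{-2}$ appearing in Theorem~\ref{At2}\,$(iii)$, which I would carry out via the telescoping identities $p^{-1}u^{-2}w^{-2}=(w^{-1})'$ and $p^{-1}u_a^{-2}=p^{-1}\hatt u_a^{-2}\tilde w^{-2}=-(\tilde w^{-1})'$. For the first claim, integrating over $(a,d)$ gives $\int_a^d p^{-1}\hatt u_a^{-2}=w(d)^{-1}-w(a^+)^{-1}$, which is finite since $w$ increases monotonically to $w(a^+)=\int_a^{c_1}p^{-1}u^{-2}\in(0,\infty]$ as $x\downarrow a$; hence $\hatt u_a$ is nonprincipal. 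For the second claim, integrating over $(d,c)$ gives $\int_d^c p^{-1}u_a^{-2}=\tilde w(d)^{-1}-\tilde w(c)^{-1}\to\infty$ as $d\downarrow a$, because $\tilde w(d)=\int_a^d p^{-1}\hatt u_a^{-2}\to 0$; together with $u_a/\hatt u_a=\tilde w\to 0$, so that $u_a=\oh(\hatt u_a)$ as $x\downarrow a$, this is precisely the principality criterion of Theorem~\ref{At2}\,$(iii)$.

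The one point that I expect to require genuine care, rather than being a serious obstacle, is the logical status of the first assertion: one must confirm that $\hatt u_a$ comes out nonprincipal \emph{regardless} of whether the generating solution $u$ was itself principal or nonprincipal at $a$. This is exactly what the computation above settles, since $\int_a^d p^{-1}\hatt u_a^{-2}$ converges at the endpoint $a$ in both cases — the only place where $w^{-1}$ could blow up is the interior point $c_1$, which is irrelevant to the classification at $a$. With this verification, together with the well-definedness checks on $c_1$, both constructions (and their mirror images at $b$) are established.
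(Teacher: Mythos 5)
The paper never proves this lemma: it sits in Appendix \ref{appendix}, which the authors explicitly introduce as a summary of standard material, with proofs deferred to \cite{Ze05}, \cite{CGN16}, and the other cited monographs. So there is no in-paper proof to compare against; judged on its own merits, your argument is correct, and it is in fact the standard reduction-of-order (d'Alembert) argument that those references give. Your computations check out: with $w(x)=\int_x^{c_1}dx'\,p(x')^{-1}u(\lambda,x')^{-2}$ one indeed has $\widehat{u}_a^{[1]}=u^{[1]}w-u^{-1}$ and $(\widehat{u}_a^{[1]})'=(u^{[1]})'w$, so $\widehat{u}_a$ solves $\tau y=\lambda y$ with $W(\widehat{u}_a,u)=1$; the telescoping identities $p^{-1}\widehat{u}_a^{-2}=(w^{-1})'$ and $p^{-1}u_a^{-2}=-(\widetilde{w}^{-1})'$ convert the classification integrals \eqref{A4}--\eqref{A5} into boundary terms, giving convergence at $a$ for $\widehat{u}_a$ (for principal and nonprincipal input $u$ alike, as you correctly note), and divergence together with $u_a/\widehat{u}_a=\widetilde{w}\to 0$ for $u_a$; Theorem \ref{At2}\,$(iii)$ and the uniqueness statement of Lemma \ref{l2.11} then identify the constructed solutions as nonprincipal and principal, respectively.

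One step deserves to be made explicit. In the second half you set $\widetilde{w}(x)=\int_a^x dx'\,p(x')^{-1}\widehat{u}_a(\lambda,x')^{-2}$ for an \emph{arbitrary} nonprincipal solution $\widehat{u}_a$ and use $\widetilde{w}(d)\to 0$ as $d\downarrow a$, i.e., convergence of this integral at $a$. But the paper defines ``nonprincipal'' as ``real-valued and linearly independent of the principal solution''; the convergence condition \eqref{A5} is asserted only for the particular pair furnished by Theorem \ref{At2}\,$(iii)$. The bridging argument is routine but should be stated: write $\widehat{u}_a=\alpha u_a^0+\beta\widehat{u}_a^0$ with $\beta\neq 0$ in terms of such a pair; since $u_a^0=\oh\big(\widehat{u}_a^0\big)$ as $x\downarrow a$, one gets $\widehat{u}_a/\widehat{u}_a^0\to\beta$, so $\widehat{u}_a$ is nonvanishing near $a$ and $\int_a^c dx\,p(x)^{-1}\widehat{u}_a(\lambda,x)^{-2}<\infty$ by comparison with the corresponding integral for $\widehat{u}_a^0$. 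With this inserted (and the mirror remark at $b$), your proof is complete.
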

In practice, one first finds a non-vanishing solution of $\tau u=\lambda u$, for $\lambda\in\bbR$ near one of the endpoints. This solution is then used 
in \eqref{nonprin} to construct a nonprincipal solution at that endpoint. The nonprincipal solution is, in turn, exploited, according to \eqref{prin}, to 
obtain the principal solution at that endpoint.   
 
We can now return to the topic of self-adjoint extensions. In terms of principal and nonprincipal solutions, the generalized boundary values of 
a function $g \in \dom(T_{max})$ can be rewritten as follows:
\begin{theorem} \label{At3}
Assume that $\tau$ is quasi-regular on $(a,b)$ and that $T_{min} \geq \lambda_0 I$ for some $\lambda_0 \in \bbR$. Denote by 
$u_a(\lambda_0, \dott)$ and $\hatt u_a(\lambda_0, \dott)$ $($resp., $u_b(\lambda_0, \dott)$ and 
$\hatt u_b(\lambda_0, \dott)$$)$ principal and nonprincipal solutions of $\tau u = \lambda_0 u$ at $a$ 
$($resp., $b$$)$, normalized so that
\begin{equation}
W(\hatt u_a(\lambda_0,\dott), u_a(\lambda_0,\dott)) = W(\hatt u_b(\lambda_0,\dott), u_b(\lambda_0,\dott)) = 1.
\end{equation}  
Then, for all $g \in \dom(T_{max})$, one obtains 
\begin{align}
\begin{split} 
\wti g(a) & =\wti g_1(a)=  - W(u_a(\lambda_0,\dott), g)(a)= \lim_{x \downarrow a} \f{g(x)}{\hatt u_a(\lambda_0,x)},    \\
\wti g(b) & =\wti g_1(b)=  - W(u_b(\lambda_0,\dott), g)(b)= \lim_{x \uparrow b} \f{g(x)}{\hatt u_b(\lambda_0,x)},    
\label{A6} 
\end{split} \\
\begin{split} 
{\wti g}^{\, \prime}(a) &=\wti g_2(a)= W(\hatt u_a(\lambda_0,\dott), g)(a) = \lim_{x \downarrow a} \f{g(x) - \wti g(a) \hatt u_a(\lambda_0,x)}{u_a(\lambda_0,x)},    \\ 
{\wti g}^{\, \prime}(b) &=\wti g_2(b)= W(\hatt u_b(\lambda_0,\dott), g)(b) = \lim_{x \uparrow b} \f{g(x) - \wti g(b) \hatt u_b(\lambda_0,x)}{u_b(\lambda_0,x)}.    \label{A7}
\end{split} 
\end{align}
In particular, the limits on the right-hand sides of \eqref{A6} and \eqref{A7} exist. 
\end{theorem}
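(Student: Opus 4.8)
The plan is to fix the two free solutions in Definition~\ref{defigeneral} to be the principal and nonprincipal solutions at each endpoint, which converts the abstract boundary values into the displayed Wronskians, and then to establish the two limit formulas by a variation-of-parameters analysis near each endpoint. By symmetry it suffices to treat the endpoint $a$; the argument at $b$ is identical after reversing orientation. Since $\tau$ is limit circle at both endpoints, every solution of $(\tau-\lambda_0)u=0$ lies in $\Lr$ near $a$ and near $b$, so I can patch the principal solutions ($u_a$ near $a$, $u_b$ near $b$) and the nonprincipal solutions ($\hatt u_a$ near $a$, $\hatt u_b$ near $b$) into global $v_2,v_1\in\dom(T_{max})$ with $v_2=u_a$, $v_1=\hatt u_a$ in a neighborhood of $a$. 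The normalizations $W(\hatt u_a,u_a)(a)=W(\hatt u_b,u_b)(b)=1$ together with real-valuedness supply the requirements $W(\ol{v_1},v_2)=1$ and $W(\ol{v_j},v_j)=0$ of Definition~\ref{defigeneral}. Because $W(v_2,g)(a)$ depends only on the germ of $v_2$ at $a$, the definition then reads off $\wti g_1(a)=-W(u_a,g)(a)$ and $\wti g_2(a)=W(\hatt u_a,g)(a)$, so the genuine content of the theorem is the identification of these Wronskians with the stated limits.

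For the analytic core I set $h:=(\tau-\lambda_0)g$, which lies in $\Lr$ since $g\in\dom(T_{max})$ (suppressing the fixed argument $\lambda_0$). Using the fundamental system $\{u_a,\hatt u_a\}$ with $W(\hatt u_a,u_a)=1$, variation of parameters gives, near $a$,
\begin{equation*}
g=\alpha\,u_a+\beta\,\hatt u_a+y_p,\qquad
y_p(x)=\hatt u_a(x)\int_a^x u_a\,r\,h\,dt-u_a(x)\int_a^x \hatt u_a\,r\,h\,dt ,
\end{equation*}
for some $\alpha,\beta\in\C$. The two integrals converge absolutely near $a$: the limit circle hypothesis yields $u_a,\hatt u_a\in\Lr$ there while $h\in\Lr$, so Cauchy--Schwarz makes $u_a\,r\,h$ and $\hatt u_a\,r\,h$ integrable near $a$. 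This integrability is precisely what the limit point case would destroy, and it is the structural reason the formulas are confined to the quasi-regular setting.

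Next I would extract $\beta$. Dividing by $\hatt u_a$ and invoking $u_a=\oh(\hatt u_a)$ as $x\downarrow a$ (Theorem~\ref{At2}\,$(iii)$), the terms $\alpha\,u_a/\hatt u_a$ and $(u_a/\hatt u_a)\int_a^x\hatt u_a\,r\,h$ vanish while $\int_a^x u_a\,r\,h\to0$, whence $\lim_{x\downarrow a}g(x)/\hatt u_a(x)=\beta$. A direct computation, using that the Wronskian of solutions is constant, collapses $W(u_a,y_p)(x)=-\int_a^x u_a\,r\,h\,dt\to0$; combined with $W(u_a,u_a)=0$ and $W(u_a,\hatt u_a)=-1$ this gives $W(u_a,g)(a)=-\beta$, i.e.\ $\beta=-W(u_a,g)(a)$, proving \eqref{A6}.

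The remaining formula \eqref{A7} is the delicate step and is where I expect the only real difficulty. Here $g-\beta\hatt u_a=\alpha\,u_a+y_p$, so I must show $y_p(x)/u_a(x)\to0$ as $x\downarrow a$, which is an indeterminate $\infty\cdot 0$ form: the term $(\hatt u_a/u_a)\int_a^x u_a\,r\,h$ pits the blow-up of $\hatt u_a/u_a$ against a vanishing integral. The resolution is the representation $u_a=\hatt u_a\,I$ with $I(x)=\int_a^x p^{-1}\hatt u_a^{-2}\,dt$ from \eqref{prin}, where $I$ is increasing with $I(x)\to0$ by \eqref{A5}. Writing $\int_a^x u_a\,r\,h=\int_a^x \hatt u_a\,I\,r\,h$ and using the monotonicity bound $I(t)\le I(x)$ on $(a,x)$ dominates $(\hatt u_a/u_a)\int_a^x u_a\,r\,h=I(x)^{-1}\int_a^x\hatt u_a\,I\,r\,h$ by $\int_a^x\hatt u_a\,r\,|h|\,dt\to0$; together with $\int_a^x\hatt u_a\,r\,h\to0$ this forces $y_p/u_a\to0$ and hence $\lim_{x\downarrow a}(g-\beta\hatt u_a)/u_a=\alpha$. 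Finally $W(\hatt u_a,y_p)(x)=-\int_a^x\hatt u_a\,r\,h\,dt\to0$ and $W(\hatt u_a,u_a)=1$ give $W(\hatt u_a,g)(a)=\alpha$, completing \eqref{A7}. The monotonicity estimate taming this indeterminate form is the one nontrivial point; everything else reduces to bookkeeping with constant Wronskians.
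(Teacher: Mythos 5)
Your proof is correct, but there is nothing in the paper to compare it against: Theorem \ref{At3} is stated in Appendix \ref{appendix} as standard background without proof, with the paper deferring to \cite{Ze05}, \cite{GLN20}, \cite{GNZ23}, and \cite{NZ92}. Your argument is a complete, self-contained derivation in the spirit of those references. The two structural moves are sound: (i) choosing $v_1,v_2$ in Definition \ref{defigeneral} as patched copies of $\hatt u_a,u_a$ near $a$ and $\hatt u_b,u_b$ near $b$ is legitimate, since the limit circle hypothesis places all solutions of $(\tau-\lambda_0)u=0$ in $\Lr$ near both endpoints and the interior interpolation is the standard patching construction for regular problems, so the Wronskian identities $\wti g_1(a)=-W(u_a,g)(a)$, $\wti g_2(a)=W(\hatt u_a,g)(a)$ follow by definition; (ii) the variation-of-parameters representation is the right analytic vehicle. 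The computations check out: with $W(\hatt u_a,u_a)=1$ the particular solution $y_p=\hatt u_a\int_a^x u_a\,r\,h\,dt-u_a\int_a^x\hatt u_a\,r\,h\,dt$ solves $(\tau-\lambda_0)y=h$ with the correct sign; limit circle plus Cauchy--Schwarz makes both integrals converge at $a$; the constancy-of-Wronskian bookkeeping gives $W(u_a,y_p)(x)=-\int_a^x u_a\,r\,h\,dt\to 0$ and $W(\hatt u_a,y_p)(x)=-\int_a^x \hatt u_a\,r\,h\,dt\to 0$; and your monotonicity estimate $\big|\int_a^x u_a\,r\,h\,dt\big|\leq I(x)\int_a^x|\hatt u_a|\,r\,|h|\,dt$, with $I(x)=\int_a^x p(t)^{-1}\hatt u_a(\lambda_0,t)^{-2}\,dt\to 0$ as $x\downarrow a$ by \eqref{A5}, correctly resolves the indeterminate form in \eqref{A7}, which is indeed the only delicate point.

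One small step to tighten: the identity $u_a=\hatt u_a\,I$ is not literally furnished by \eqref{prin}, which only asserts that $\hatt u_a\,I$ \emph{is a} principal solution at $a$. To identify it with the $u_a$ given in the theorem you need Lemma \ref{l2.11} (principal solutions are unique up to nonvanishing constant multiples) together with the computation $W(\hatt u_a,\hatt u_a I)=1$, so that the theorem's normalization $W(\hatt u_a,u_a)=1$ forces the constant to equal $1$. With that one sentence added (and the corresponding sign bookkeeping at $b$, where \eqref{prin} yields $W(\hatt u_b,\hatt u_b I_b)=-1$ for $I_b(x)=\int_x^b p(t)^{-1}\hatt u_b(\lambda_0,t)^{-2}\,dt$, so the normalized principal solution is $-\hatt u_b I_b$; your "reversing orientation" remark absorbs this, but it is worth making explicit), the proof is complete.
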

It is worth pointing out that if $\tau$ is regular on the finite interval $[a,b]$, then one can choose $v_j \in \dom(T_{max})$, $j=1,2$, 
such that 
\begin{align}\label{A8}
v_1(x) = \begin{cases} \vartheta(\lambda,x,a), & \text{for $x$ near $a$}, \\
\vartheta(\lambda,x,b), & \text{for $x$ near $b$},  \end{cases}   \quad 
v_2(x) = \begin{cases} \varphi(\lambda,x,a), & \text{for $x$ near $a$}, \\
\varphi(\lambda,x,b), & \text{for $x$ near $b$},  \end{cases}   
\end{align} 
where $\varphi(\lambda,\, \cdot \,,d)$, $\vartheta(\lambda,\, \cdot \,,d)$, $d \in \{a,b\}$, are real-valued solutions of $(\tau - \lambda) u = 0$, $\lambda \in \bbR$, satisfying the boundary conditions 
\begin{align}\label{A9}
\begin{split} 
& \varphi(\lambda,a,a) = \vartheta^{[1]}(\lambda,a,a) = 0, \quad \vartheta(\lambda,a,a) = \varphi^{[1]}(\lambda,a,a) = 1, \\ 
& \varphi(\lambda,b,b) = \vartheta^{[1]}(\lambda,b,b) = 0, \quad \; \vartheta(\lambda,b,b) = \varphi^{[1]}(\lambda,b,b) = 1. 
\end{split} 
\end{align} 
It is, then, not very difficult to verify that
\begin{align}\label{A10}
\wti g_1 (a) = g(a), \quad \wti g_1 (b) = g(b), \quad \wti g_2 (a) = g^{[1]}(a), \quad \wti g_2 (b) = g^{[1]}(b),   
\end{align}
that is, the generalized boundary conditions at a regular endpoint reduce to the ordinary boundary conditions for three-coefficient regular 
Sturm--Liouville operators in $\Lr$ .

Theorem \ref{At3} implies that all the self-adjoint extensions of the minimal operator that is bounded from below are given by either \eqref{A1} (for separated ones)
or \eqref{A2} (for coupled ones) with the generalized boundary values given by \eqref{A6} and \eqref{A7}. Let us reflect, for a moment, on the practical 
advantage provided by the quasi-regular case. As it can be surmised by the Definition \eqref{defigeneral}, the generalized boundary values of $g \in \dom(T_{max})$
can be evaluated once two linearly independent maximal domain functions are found. Unfortunately, an explicit expression for such functions might be hard to find in general.
When principal and nonprincipal solutions exist, then Theorem \ref{At3} provides a formula for the generalized boundary values in terms of principal and 
nonprincipal solutions {\it for one particular} $\lambda_{0}\in\bbR$. In practice, in order to evaluate generalized boundary values, one would utilize the principal and nonprincipal solutions for $\lambda_{0}=0$ since one can often obtain their explicit expressions.

\medskip

\noindent 
{\bf Acknowledgments.} M.P. was supported by the Methusalem grant METH/21/03 -- long term structural funding of the Flemish Government and acknowledges the hospitality of the Mittag--Leffler Institute where parts of this work were written.

\medskip

 
\end{document}